\documentclass[11pt]{article}
\usepackage{graphicx}
\usepackage{pgfplots}
\usepackage[dvipsnames]{xcolor}
\usepackage[colorlinks=true,citecolor=blue]{hyperref}
\usepackage{natbib}
\usepackage{graphicx}
\usepackage{amsfonts}
\usepackage{amsmath}
\usepackage{amssymb}
\usepackage{url}
\usepackage{fancyhdr}
\usepackage{indentfirst}
\usepackage{enumerate}
\usepackage{titlesec}
\usepackage{amsthm}
\usepackage{dsfont}
\usepackage[misc]{ifsym}
\usepackage{enumitem}

\theoremstyle{definition}

\newtheorem{example}{Example}

\theoremstyle{plain}
\newtheorem{theorem}{Theorem}
\newtheorem{lemma}{Lemma}
\newtheorem{proposition}{Proposition}

\theoremstyle{remark}

\usepackage{cases}

\def\laweq{\buildrel \mathrm{d} \over =}
\def\lawis{\buildrel \mathrm{d} \over \sim}

\theoremstyle{definition}

\def\d{\mathrm{d}}
\DeclareMathOperator*{\esssup}{ess\text{-}sup}
\DeclareMathOperator*{\essinf}{ess\text{-}inf}

\newcommand{\R}{\mathbb{R}}
\newcommand{\p}{\mathbb{P}}

\usepackage[onehalfspacing]{setspace}

\usepackage{bm}
\usepackage{tikz-qtree}
\usepackage{tikz}
\usepackage{float}

\floatstyle{plaintop}
\newfloat{algofloat}{tbp}{loa}
\floatname{algofloat}{Algorithm}

\usepackage{caption}
\usepackage[ruled,vlined]{algorithm2e}

\def\id{\mathds{1}}

\pgfplotsset{compat=1.18}

\title{
Probability of worthwhile effect of monotone-response treatments}
\author{Benjamin C\^ot\'e\thanks{Department of Statistics and Actuarial Science, University of Waterloo, Canada.   \texttt{b3cote@uwaterloo.ca}}\and  Ruodu Wang\thanks{Department of Statistics and Actuarial Science, University of Waterloo, Canada.   \texttt{wang@uwaterloo.ca}} }
\date{\today}

\begin{document}

	\maketitle

\begin{abstract}  
Experiments may, by design, prevent one from observing on a single subject both the response to a treatment and to its absence. Because of this, marginal distributions for both cases may be observable but not their joint distribution, thus obscuring the distribution of the treatment effect. We examine the case where we impose that the treatment effect is nonnegative, also called  monotone treatment response, a common assumption relevant to many practical applications. We solve the problems of best- and worst-case probabilities that the treatment effect exceeds a given value, using an explicit construction for the dependence scheme in each case. %We also discuss best- and worst-case risk measurements of the treatment effect. 
Such problems can equivalently be described, in different contexts,  as risk aggregation under dependence uncertainty and an order constraint, and as optimal transport  with a particular cost function. 

\medskip \noindent \textbf{Keywords:}  clinically relevant benefit, nonnegative treatment effect, counterfactual causality, dependence uncertainty, optimal transport.

\end{abstract}

\section{Introduction}
\label{sect:intro}

In a simple experiment, we study the effect of a treatment. The responses of subjects to the treatment, $Y$, and to the absence of treatment, $X$,  are monitored. Responses $X$ and $Y$ are random variables; their distributions, respectively $\mu$ and $\nu$, may either be interpreted as that of the (deterministic) treatment response within a population or of the aleatoric response of any single subject. The treatment effect is $Y-X$.

Most considerations on the treatment effect are answered by elementary statistics if we are able to observe both responses for every subject.
However, this is often not possible for experimental studies outside of controlled environments or conducted over a long time frame, or sometimes due to ethical considerations.  
Consider, for instance, a study on the effects of malnutrition on children's health: a researcher cannot require subjects to be malnourished for the length of the study. 
The researcher must rather rely on observational data, meaning that data on responses to treatment and non-treatment are collected separately. 
This leads to two major impediments for statistical purposes: 
\begin{enumerate}
    \item[(S1)] The observed marginal distributions, from each group's data, may not be the true marginal distributions of $X$ and $Y$, as some biases may arise, e.g.,~from selection.  
    \item[(S2)] The separate collection of data obscures the coupling between $X$ and $Y$, and hence also the distribution of the treatment effect $Y-X$.
\end{enumerate}  Let us provide a simple example to illustrate this second impediment. 

% Marginal distributions for $X$ and $Y$ may be estimated from each group's data.\footnote{In practice, some biases may arise, see \cite{WM99}, but we suppose this is not the case here.} However, the experiment's design leaves us blind to the joint distribution of $(X,Y)$, hence befogging the distribution of the treatment effect $Y-X$.  
% One may also think of it as an experience in which the names of the subjects have been redacted. Let us provide a simple example to illustrate how such designs obstruct knowledge of the joint distribution of $(X,Y)$.  
\begin{example}
	\label{ex:design}
	For an experiment, we have the following samples respectively without and with treatment: $\mathbf{x}=\{1,2,3,4\}$ and $\mathbf{y}=\{2,3,4,5\}$. One may think that the treatment has had an effect of 1 for every subject, thus supposing the coupling $ \{(1,2), (2,3), (3,4), (4,5)\}$; another may rather think that the treatment was only effective on one subject, with an effect of 4, thus supposing the coupling  $ \{(1,5), (2,2), (3,3), (4,4)\}$. Without additional information on the relation between $\mathbf{x}$ and $\mathbf{y}$, we cannot distinguish either coupling from the data. 
\end{example}

Experimental designs based on observational data are prominent in many fields, such as sociology, statistics and econometrics; and their study led to the \textit{counterfactual} conception of causal effects.  
 One may consult 
 \cite{BP94},
 \cite{WM99}, \cite{D00}, \cite{S09}, and \cite{D15} for an introduction
 %, an overview of the literature, 
 and some (philosophical, statistical) considerations on counterfactual  causality.

As underscored in all of these works, studies involving counterfactual causality will almost always examine the average treatment effect, $\mathbb{E}[Y-X]$. This has the marked advantage of dispelling impediment (S2) because %the value does not depend on the joint distribution of $(X,Y)$, only on their marginal distributions. 
$\mathbb{E}[Y-X]$ is unaffected by the dependence between $Y$ and $X$.
Most seminal statistical works on counterfactual causality thus exclusively focused on mitigating (S1); see notably the series of works \citep{M94,M95, M97, M07, MP00, MP09}. 

Average treatment effect,
however, may be ill-suited to some situations. 
In Example~\ref{ex:design}, both couplings yield the same average effect, but may lead to different decisions. If, for example, treatments are antidepressants and responses are their side effect on blood pressure, the first coupling, with a constant~1 effect, would probably lead to the drug being prescribed by a physician, as they may judge that the side effect is tolerable. The second coupling, however, would probably incite physicians to avoid prescribing the drug, as they deem too risky to potentially see a dramatic increase of blood pressure. 
See \cite{F99} and \cite{YG24} for further discussions on the limitations of average treatment effect in medical trials. Let us present other examples illustrating this rhetoric.

\begin{example}
\label{ex:tutor}
A tutoring service believes that parents will renew their subscription if their children perform at least 10 points better on their tests. The tutoring service wants to obtain bounds on the number of renewals, given the tutored students' grades compared to the non-tutored students.
\end{example}

\begin{example}
\label{ex:smoking}
Ex-smokers will continue to abstain from smoking only if they see an increase in their overall health. Public health services want to assess for the worst case of smoking relapse given health data.
\end{example}

We remark that the analytics under study in these examples do not depend on the actual responses $X$ or $Y$ but solely on the treatment effect $Y-X$. However, the average treatment effect is unable to answer the threshold-related considerations. 

For such considerations, one may rather consider the probability of a worthwhile effect; see, e.g., \cite{F99}. Let $k\geq 0$ be the level above which a treatment effect is considered clinically relevant. %; this level $k$ is called the \textit{minimum worthwhile effect}. 
 The probability of worthwhile effect 
for a given $k$ is the value $\mathbb{P}(Y-X > k)$.  
\cite{F99} discussed probabilities of worthwhile effect as an alternative to p-values in the context of medical treatments. An issue for p-values is that statistical significance of the treatment effect is unrelated to whether this effect is clinically worthwhile. \cite{F99} argued that a medical researcher should not heuristically determine what constitutes a worthwhile effect based on the collected data; probabilities of worthwhile effect effectively require the researcher to specify beforehand the threshold for worthwhile effect $k$.  
For instance, in our earlier example, the physician would establish which threshold $k$ of increase in blood pressure is dangerous and then inquire about the probability of worthwhile effect for $k$. In Examples~\ref{ex:tutor} and~\ref{ex:smoking}, thresholds for worthwhile effects are specified as $k=10$ and $k=0$ respectively.

When examining probabilities of worthwhile effect instead of the average treatment effect, impediment (S2) comes back: the value $\mathbb{P}(Y-X > k)$ is impacted by the dependence relation between $X$ and $Y$. Since there is already a long string of literature on how to mitigate (S1), as we specified above, we will focus on (S2) and assume that the treatment has fully identified marginals, to remove (S1). %This assumption is sometimes called exogenous treatment selection, or randomization of trials.

Because of (S2), probabilities of worthwhile effect cannot be computed unless we specify the dependence relation  between $X$ and $Y$. One common assumption is to suppose that $X$ and $Y$ are comonotonic.\footnote{Two random variables $Z,W$ are comonotonic if $(Z(\omega) - Z(\omega^{\prime}))(W(\omega)- W(\omega^{\prime}))\geq 0$ for all $\omega, \omega^{\prime}\in\Omega$.} If $\nu$ is a translation of $\mu$, then this is called the Treatment-Unit Additivity assumption \citep[Section 4.2]{D15}, which effectively means that $Y-X$ is degenerate. Another common assumption is to suppose that $X$ and $Y$ are independent \citep{D00}. Letting $k=0$, the probability of worthwhile effect in that case corresponds, up to a small adjustment for ties, to the Mann-Whitney parameter, used in the Wilcoxon-Mann-Whitney test for causal effects \citep{W45,MW47} and later introduced by \cite{D16} under the name \textit{D-value}. Not only are independence and comonotonicity very different dependence schemes, they also mean very strong assumptions on the treatment's effect. A researcher may well not want to commit to either. In particular, the independence assumption is heavily criticized by \cite{GFBSFGR20}, and Hand's paradox \citep{H92} is an example of how misleading it can be.  
As \cite{GFBSFGR20} note, to rely on such assumptions for causal inferences marks an obliviousness to impediment (S2), perhaps forgotten because the average treatment effect bypasses it. Consistent with such criticism, our approach will rather consist of deriving worst- and best-case probabilities of worthwhile effect under the full uncertainty stemming from (S2).

In this paper, we will study probabilities of worthwhile effect of treatments with the additional assumption of a monotone response. Monotone treatment response, first studied by \cite{M97}, stipulates that responses weakly increase with the strength of treatment. 
For a binary treatment, this means $Y\geq X$.\footnote{Such relations for random variables are to be interpreted as holding almost surely.} Put equivalently: the treatment effect 
$Y-X$ is nonnegative almost surely, meaning that the treatment cannot decrease the subject's response.%\footnote{Words such as  \textit{increasing} or \textit{convex} are understood in their non-strict sense throughout. }

This assumption may seem strong, but is far from unreasonable in many contexts, for instance in a study of
the effects of private tutoring on academic success \citep{H14}.
Other examples include 
the effects of schooling on elders' cognition \citep{ABFFFKKS25}, 
 of a smaller health-insurance deductible on the number of doctor visits \citep{GS06},
 of moral suasion on tax evasion \citep{BCST20}, 
 of food assistance programs on food security \citep{KPR16, GKP17, C22},
and of food security on children's health \citep{GK09, GKP12}.
%Under the assumption that there are no selection bias so that the marginal distribution of the treated group truly represents that of the untreated group if they had been treated. 
 In Example~\ref{ex:design}, both couplings satisfy the assumption of monotone treatment response $ y_i \geq x_i$ for every element $i$. Thus, it is clear that, while limiting the number of possible couplings,  the monotone-treatment-response assumption alone in general does not suffice to fully identify the joint distribution of responses.

%To our knowledge, no research involving monotone treatment response has examined probabilities of worthwhile effect to assess treatment effect (for example, none of the aforementioned papers). We attribute this lack in part to the  absence of proper methods to mitigate (S2), which this article aims to remedy. 
%Many researchers would assess the presence of a causal effect using the average treatment effect; 

%A first consideration is that of effectiveness of the treatment, the probability that a patient experiences sensible effects from the treatment.

We examine worst- and best-case probabilities of worthwhile effect respectively defined as
\begin{align}
	Q^{\inf}_k(\mu,\nu) &= \inf\{\mathbb{P}(Y-X > k): X\lawis \mu,\, Y\lawis  \nu, \, Y\geq X  \};
	\label{eq:monotoneeffectiveness-inf}\\
	Q^{\sup}_k(\mu, \nu) &= \sup\{\mathbb{P}(Y-X > k): X\lawis \mu,\, Y \lawis  \nu, \, Y\geq X  \},
	\label{eq:monotoneeffectiveness-sup}
\end{align}
where $X\lawis \mu$ means that the distribution of $X$ is $\mu$.

We do not make further assumptions than full identification of marginals and monotone treatment response; in particular, we do not impose any shape or specific features for the marginal distributions, as long as they allow for monotone treatment response to be possible. The set of possible joint distributions given this constraint is studied in \cite{AMZ20}. %A non-sharp bound for \eqref{eq:monotoneeffectiveness-inf}  is given in \citet[Corollary 1]{K14}.

In the remainder of this section, we examine problems \eqref{eq:monotoneeffectiveness-inf}--\eqref{eq:monotoneeffectiveness-sup} from different perspectives:  of risk aggregation and of optimal transport; we also introduce some notation. In Section~\ref{sect:EFFECTIVENESS}, we solve the problems for binary treatments. %In Section~\ref{sect:EXOGENOUS}, we discuss how to address partial identification of the joint distribution coming from the observation of exogenous variables. 
In Section~\ref{sect:NONBINARY}, we solve the problems for non-binary treatments, where intermediate degrees of treatment constitute a source of partial identification of the joint distribution for the full treatment, from the monotone-treatment-response assumption.
Section~\ref{sect:CONCLUSION} concludes.

\subsection{Formulation in optimal transport}

In optimal transport theory, the primal Kantorovich problem consists of minimizing an expected cost over all joint distributions with given marginals. See \cite{RR06}, \cite{V09} and \cite{F24} for an introduction to optimal transport. We rewrite problems \eqref{eq:monotoneeffectiveness-inf}--\eqref{eq:monotoneeffectiveness-sup} in terms of optimal transport:
\begin{align}
   Q^{\inf}_k(\mu,\nu) &= \inf\{\mathbb{E}[c(X,Y)]: X\lawis \mu, Y\lawis \nu \}, ~~~\text{where }c(x,y) = \id_{\{y-x > k\}} + \infty\id_{\{y<x\}};\label{eq:nutz-inf}
\\   Q^{\sup}_k(\mu,\nu) &=1- \inf\{\mathbb{E}[c^*(X,Y)]: X\lawis \mu, Y\lawis \nu \}, ~~~\text{where }c^*(x,y) = \id_{\{y-x \leq k\}} + \infty\id_{\{y<x\}}. \label{eq:nutz-sup}
\end{align}

Note how the second terms of the cost functions $c$ and $c^*$ enforce the ordering constraint from the monotone-treatment-response assumption.
For atomic distributions (see Section~\ref{sect:atomic}), one may replace $\infty$ by a finite number that is large enough. 
Optimal transport with this ordering constraint is studied by \cite{NW22} under the name \textit{directional optimal transport}. They solve problems  similar to \eqref{eq:nutz-inf}--\eqref{eq:nutz-sup}, but with the first term of $c$ or $c^*$ being submodular or supermodular.\footnote{A function $(x,y)\mapsto f(x,y)$ is submodular if $f(x_1,y_1) + f(x_2,y_2) \leq f(x_1\wedge x_2, y_1 \wedge y_2) + f(x_1\vee x_2, y_1 \vee y_2)$ for all $(x_1,y_1),(x_2,y_2)$ in its domain. It is supermodular if the reverse inequality holds for all $(x_1,y_1),(x_2,y_2)$ in its domain.} Here, the functions $(x,y)\mapsto \id_{\{y-x > k\}}$ and $(x,y)\mapsto \id_{\{y-x \leq k\}}$ are neither submodular nor supermodular. 
Although it will not be our approach to solving them, strong duality holds for problems \eqref{eq:nutz-inf}--\eqref{eq:nutz-sup}; see Appendix~\ref{sect:duality}. %Duality of \eqref{eq:nutz-inf}--\eqref{eq:nutz-sup} is discussed in \cite{K14}, and \cite{JLS23}'s method is applicable. Although, both approaches provide no information on the coupling solving these. This information on the couplings is a main motivation in solving the primal problems directly, and many of our results are aimed toward this objective in particular. 
%The choice of a strict inequality in \eqref{eq:monotoneeffectiveness-inf}, and of a weak inequality in \eqref{eq:monotoneeffectiveness-sup}, is to let $c$ and $c^*$ be lower semicontinuous.

\subsection{Formulation in risk aggregation under dependence uncertainty}

In the risk-management literature, the framework obtained by dispelling (S1) but retaining (S2) is called \textit{dependence uncertainty}: we suppose that the marginal distributions of $(X,Y)$ are known, but the dependence relation between its components is unknown. See \cite{EP10} for considerations and  motivations for such a framework in the context of risk management. 
\cite{CLW22} proposed and solved the following problems:
\begin{align}
	M^{\inf}_k(\mu,\nu) &= \inf\{\mathbb{P}(X+Y > k):X\lawis \mu , \,  Y\lawis \nu,\,  Y\geq  X  \};
	\label{eq:chenminus-inf}\\
	M^{\sup}_k(\mu,\nu) &= \sup\{\mathbb{P}(X+Y \geq k): X\lawis \mu , \, Y\lawis \nu,  \, Y\geq  X  \},
	\label{eq:chenminus-sup}
\end{align}
Note that the only difference between their problems
and  ours, beyond the weak inequality in \eqref{eq:chenminus-sup}, is that $X+Y$ in \eqref{eq:chenminus-inf}--\eqref{eq:chenminus-sup} takes the place of $Y-X$ in \eqref{eq:monotoneeffectiveness-inf}--\eqref{eq:monotoneeffectiveness-sup}. 
   This difference  has a significant impact on the optimal couplings, making the problems \eqref{eq:monotoneeffectiveness-inf}--\eqref{eq:monotoneeffectiveness-sup} 
   completely different from \eqref{eq:chenminus-inf}--\eqref{eq:chenminus-sup}; in particular, the results of \cite{CLW22} cannot be applied to our problems. 
  We discuss their connections in Appendix~\ref{sect:CHEN}. 
%By letting $Z = -X$, 
%our problems \eqref{eq:monotoneeffectiveness-inf}--\eqref{eq:monotoneeffectiveness-sup}
%can be formulated with  
%$\p(Y+Z\ge k)$
%with given marginals of $Y$ and $Z$, 
%under the constraint $Y\ge -Z$.

Without the constraint of monotone treatment response, the literature on risk aggregation under dependence uncertainty is rich and extends beyond the case of two variables: most notably, \cite{R82} gives the solution to the worst-case $\mathbb{P}(X+Y>k)$ and best-case $\mathbb{P}(X+Y\geq k)$. 
\cite{BJW14} and \cite{MW15} examine the set of all possible distributions of the sum of several random variables with given marginal distributions; bounds on best- and worst-case values of the quantile of the sum are given in \cite{EWW15} and  algorithms to numerically compute these values are developed in \cite{EPR13} and \cite{BLLW24}.

% \subsection{The problem in terms of causality in statistical inference}

% The average treatment effect is $\beta$ in regression. 

\section{Preliminaries and notation}
\label{sect:prelim}

All random variables live on an atomless probability space $(\Omega, \mathcal{F}, \mathbb{P})$. All order terms like ``increasing" are in the weak sense.  
Write $\R_+=[0,\infty)$. 
Let $\mathcal{M}$ be the set of all finite Borel measures on $\R$, equipped with the usual order $\le$. Note that $(\mathcal{M},\leq)$ is a lattice; $\wedge$ and $\vee$ respectively mean the infimum and supremum operators on that lattice. 
The probability measure $\delta_x$ is the point-mass at $x\in \R$. 
%\textcolor{magenta}{For two measures $\mu,\nu \in \mathcal M$, their common part is $\mu \wedge \nu  = \sup\{\theta\in \mathcal M:\theta\le \mu~\mbox{and}~\theta\le \nu\}$, which is well defined because $\mathcal M$ is a lattice.}  
The survival function of a measure $\mu\in\mathcal{M}$ is the function $S_{\mu} : \mathbb{R} \to \R_+ $ given by $S_\mu(x)= \mu((x,\infty))$.  
We write $\mu \le_{\rm st} \nu$
if $S_{\mu}\le S_{\nu}$. An equivalent property to $\mu \le_{\rm st} \nu$  is 
\begin{equation*}
\int f \d \mu \le \int f \d \nu
 \mbox{ for all increasing functions $f:\R\to \R_+$.}
\end{equation*}
Note that $\mu\le \nu$ implies $\mu\le_{\rm st } \nu$, and both relations are transitive. 
The quantile function of a measure $\mu\in\mathcal{M}$ is the function $F^{-1}_\mu : (0,\mu(\mathbb{R})) \to \mathbb{R}$ defined as
$
    F^{-1}_{\mu}(u) = \inf\{x\in\mathbb{R}: \mu(\R)-S_{\mu}(x) \geq u\},$ $u\in(0,1).$
For a random variable $Z$, its distribution  is the probability measure $\mu\in \mathcal M$ given by $\mu(A)=\p(Z\in A)$ for Borel $A\subseteq \R$, and we denote this by $Z\lawis \mu$.
While the exposition on treatment effects in the Introduction considered only probability measures (instead of general measures), we do not necessarily make this assumption henceforth and allow for non-probability measures as well in most of our results.

Let $\mathbb H=\{(x,y)\in \R^2: y \ge x\}$. 
% For two measures $\mu,\nu \in \mathcal M$,
% we denote by  
% $ \mathcal{H}(\mu,\nu)$  
%  the set  of all Borel measures $\pi$ on $\R^2$ supported on $\mathbb H$ with marginals $\mu$ and $\nu$.
Let $\Pi$ be the set of all Borel measures on $\R^2$.
For $\pi\in \Pi$, we let $P_1 (\pi)$ denote
the first marginal of $\pi$
and 
 $P_2( \pi)$ the second marginal of $\pi$. 
For $\mu,\nu \in \mathcal M$, 
define the set 
\begin{equation}
     \mathcal{H}(\mu,\nu) = \{ \pi \in \Pi : \mathrm{supp}(\pi)\subseteq \mathbb{H},~ P_1 (\pi) = \mu ,~ P_2 (\pi) \leq \nu  \} ,
     \label{eq:semicoupling}
\end{equation}
which is a set of \emph{semi-couplings} between $\mu$ and $\nu$ in the sense of \cite{HS13}. 
For simplicity, we refer to any element in $     \mathcal{H}(\mu,\nu)$ as a \emph{coupling} between $\mu$ and $\nu$. 
For $ \mathcal{H}(\mu,\nu)$   to be nonempty, it is necessary and sufficient that $\mu\le_{\rm st}\nu$;  see   \citet[Theorem 2.6.3]{MS02}. 
Note that $\pi(\R^2) = \mu(\R)\le \nu(\R)$ for all $\pi\in \mathcal H(\mu,\nu)$ when $\mu \le_{\rm st} \nu$.  \cite{AMZ20} and \cite{NW22} analyzed $ \mathcal{H}(\mu,\nu)$   when $\mu$ and $\nu$ are probability measures.

\section{Probabilities of worthwhile effect for binary treatments}
\label{sect:EFFECTIVENESS}

For measures $\mu$ and $\nu$ with $\mu\le_{\rm st} \nu$  that are not necessarily probability measures, we generalize problems \eqref{eq:monotoneeffectiveness-inf}--\eqref{eq:monotoneeffectiveness-sup} in the Introduction  by rewriting them as 
\begin{align}
    Q_k^{\inf}(\mu,\nu) &= \inf\left\{\int \id_{\{(x,y)\in\mathbb{R}^2: y-x>k\}}\mathrm{d}\pi : \pi\in \mathcal{H}(\mu,\nu) \right\}; \label{eq:submeasures-inf} \\
    Q_k^{\sup}(\mu,\nu) &= \sup\left\{\int \id_{\{(x,y)\in\mathbb{R}^2: y-x>k\}}\mathrm{d}\pi : \pi\in \mathcal{H}(\mu,\nu) \right\}. \label{eq:submeasures-sup}
\end{align}

 Our solutions to problems \eqref{eq:submeasures-inf}--\eqref{eq:submeasures-sup} are predicated on coupling algorithms. In this section, we present these algorithms, solving the problems for specific discrete distributions, and discuss the adequacy of approximating the solutions for the general discrete case and the continuous case this way. Although problems \eqref{eq:submeasures-inf} and \eqref{eq:submeasures-sup} look very similar, the solution to one does not follow directly from the other's by symmetry. %The difference between the two problems will become increasingly apparent throughout this section. 

For $\mu$ and $\nu$ that are probability measures with finite means and $\mu\leq_{\rm st}\nu$,
crude bounds for $Q^{\inf}_k(\mu,\nu)$ and $Q^{\sup}_k(\mu,\nu)$ are,  
\begin{equation*}
    0 \leq Q^{\inf}_k(\mu,\nu) \leq Q^{\sup}_k(\mu,\nu) \leq \frac{1}{k}\left(\int x \, \nu (\mathrm{d} x)   - \int x  \, \mu (\mathrm{d} x)  \right), \quad \quad k> 0,
\end{equation*}
where the last inequality follows from Markov's inequality. The algorithms presented below leverage the identification of $\mu$ and $\nu$ to produce improved bounds, tailored to the specified marginal distributions.

\subsection{Solutions for atomic measures with specific atom size }
\label{sect:atomic}

% In Algorithm~\ref{algo:A}, we explicitly construct the optimal coupling solving the problem, and then compute the infimal probability of clinically relevant benefit; the coupling is $\textbf{(x,y)} = \{(x_i, y_{d_i}) : i\in[n]\}$.

We first study a special case of discrete measures $\mu$ and $\nu.$
A measure $\mu$ is \textit{atomic with size $a>0$} if 
\begin{equation}
\label{eq:mu-a-m}
\mu= a \sum_{i\in \{1,\ldots,m\}}\delta_{x_i},
\end{equation}
for some $x_1, \ldots, x_m \in\mathbb{R}$.
When $\mu  $ has total mass $1$ (that is, $a=1/m$),  $\mu$ is simply an empirical distribution.
We call $x_1, \ldots, x_m$ the \textit{locations} of atoms, $m$ the number of atoms, and $a$  the size of an atom; note that multiple atoms may be at identical locations. 
For atomic measures $\mu$ and $\nu$ with the same size (meaning that they share the same $a>0$), 
if $\mu\le_{\rm st}\nu$ holds, then $m\le n$, where $m$ is the number of atoms in $\mu$ and $n$ is the number of atoms in $\nu$.

%Let us present Algorithm~\ref{algo:A}.

%\begin{proposition}
%	In the problem \eqref{eq:submeasures-sup}, the supremum is attained, meaning \eqref{eq:submeasures-sup} can be rewritten as 
%	$\max\{\mathbb{P}(X-Y\geq k): (X,Y)\in\mathcal{H}\}$.
%\end{proposition}
%\begin{proof}
%	co
%\end{proof}

As our first main result, the minimization in \eqref{eq:submeasures-inf} is attained by
Algorithm~\ref{algo:A},
and the  maximization  in \eqref{eq:submeasures-sup} is attained by
Algorithm~\ref{algo:B}.
We explain the intuition behind  these algorithms first.  Throughout, for ease of exposition, let $x$ denote a generic atom location in $\mu$ and $y$ denote a generic atom location in $\nu$.

For the construction of the pairs $(x,y)$ in the optimal $\pi$ for \eqref{eq:submeasures-inf}, we couple atoms in $\mu$ iteratively, starting from largest location to smallest location. For each atom, when its turn to be coupled comes, we check whether there is an uncoupled atom in $\nu$ allowing to satisfy $y-x\leq k$. If so, it couples to the one at the largest location among them.  If it is unable to satisfy $y-x\leq k$, it sacrifices itself and takes down the atom in $\nu$ at the largest location; this effectively helps the next-to-be-coupled atom in $\mu$, by not being a nuisance, since that probability atom in $\nu$ would not allow any atom at smaller locations in $\mu$ to satisfy $y-x\leq k$ anyway. By always transporting to atoms in $\nu$ at the largest allowed locations, either when helping toward the   
objective $y-x\leq k$ or not, we make sure that probability mass at smaller locations in $\mu$ has the best chance of satisfying $y-x\leq k$ when its turn to be coupled comes. What could occur, however, would be an atom in $\mu$ satisfying $y-x\leq k$ preventing an atom at a smaller location to do so, but this situation results in effectively the same value for $\int \id_{\{(x,y)\in\mathbb{R}^2: y-x>k\}}\mathrm{d}\pi$, $\pi \in \mathcal{H}(\mu,\nu)$, as switching would necessarily cause the first atom to not satisfy it anymore and both atoms have the same size. 
This construction is made precise in Algorithm~\ref{algo:A}.

\renewcommand{\thealgofloat}{A}

\begin{algofloat}
\centering

   \caption{}
    \fbox{
    \begin{minipage}{0.9\textwidth}
    \onehalfspacing
       % \textbf{Algorithm~\ref{algo:A} %solving $\inf\{\mathbb{P}(Y-X > k) : X\lawis \mu, Y\lawis \nu, Y\geq X\}$ }\\
       For $\mu, \nu \in \mathcal{M}$ that are atomic with the same size $a>0$ and $\mu\le_{\rm st}\nu$, 
        let $x_1 \leq \cdots \leq x_m$ and $y_1 \leq \cdots \leq y_n$ be the ordered locations of the atoms in $\mu$ and $\nu$. 
        \begin{itemize}
             \item[$\blacktriangleright$] Set $\mathcal{I}_m: = \{1,\ldots,n \}$. \item[$\blacktriangleright$] For $i=m, m-1, \ldots, 1$, in  descending order, do:
            \begin{itemize}
                \item[$\triangleright$] Check whether $\mathcal{J}_i := \{j\in \mathcal{I}_i:  x_i\leq y_j\leq x_i+k\}$ is empty.
                \item[$\triangleright$] If $\mathcal{J}_i$ is non-empty, set $d_i := \sup \mathcal{J}_i$;\\
                if $\mathcal{J}_i$ is empty, set $d_i: = \sup \mathcal{I}_i$.
                \item[$\triangleright$]  Define $\mathcal{I}_{i-1}: = \mathcal{I}_i\backslash \{ d_i\}$.
            \end{itemize}
            \item[$\blacktriangleright$] Return $Q^{\rm A}_{k}(\mu,\nu): = a (\sum_{i=1}^m \id_{\{y_{d_i} - x_i > k\}}) $. 
        \end{itemize}
    \end{minipage}
    }
\label{algo:A}
\end{algofloat}

The idea for constructing the optimal  $\pi$ for \eqref{eq:submeasures-sup} is similar to the one for  \eqref{eq:submeasures-inf}. Iteratively for each $x$ in descending order, we couple it to the yet-uncoupled atom in $\nu$ at the smallest location $y$ such that $y-x > k$, and, if not possible, to the one at the smallest location such that $y \geq x$. 
This construction  maximizes the total portion of   mass in $\nu$ transported such that $y- x> k$. The rest of the transport instructions is predicated on the principle of least nuisance: If the mass portion is capable of helping toward the objective $y-x> k$, it should do so by employing the smallest location $y$. If it cannot help, it should be the least troublesome and take with itself the smallest possible values of $y$ under the constraint $y\geq x$: this leaves the greatest flexibility for subsequent to-be-coupled atoms to satisfy $y-x> k$. By going through atoms in $\mu$ from the largest to the smallest location, we make sure that unhelpful atoms are so by obligation. It also ensures that at least some uncoupled atom always permits $y\geq x$, given stochastic dominance. This construction is made precise in Algorithm~\ref{algo:B}.

\renewcommand{\thealgofloat}{B}

\begin{algofloat}
\centering

   \caption{}
    \fbox{
    \begin{minipage}{0.9\textwidth}
        %\textbf{Algorithm~\ref{algo:B}}\\
            \onehalfspacing
           For $\mu, \nu \in \mathcal{M}$ that are atomic with the same size $a>0$,
        let $x_1 \leq \cdots \leq x_m$ and $y_1 \leq \cdots \leq y_n$ be the ordered locations of the atoms in $\mu$ and $\nu$. 
        \begin{itemize}
        \item[$\blacktriangleright$] Set $\mathcal{I}_m := \{1,\ldots,n \}$. 
            \item[$\blacktriangleright$] For $i=m, m-1, \ldots, 1$, in  descending order, do:
            \begin{itemize}
                \item[$\triangleright$] Check whether $\mathcal{J}_i := \{j\in \mathcal{I}_i:  y_j > x_i+k\}$ is empty.
                \item[$\triangleright$] If $\mathcal{J}_i$ is non-empty, set $d_i := \inf \mathcal{J}_i$;\\
                if $\mathcal{J}_i$ is empty, set $d_i := \inf \{j\in \mathcal{I}_i : y_j\geq x_i\}$.
                \item[$\triangleright$]  Define $\mathcal{I}_{i-1} := \mathcal{I}_i\backslash \{ d_i\}$.
            \end{itemize}
            \item[$\blacktriangleright$] Return $Q^{\rm B}_k(\mu,\nu) := a (\sum_{i=1}^m \id_{\{y_{d_i} - x_i > k\}}) $. 
        \end{itemize}
    \end{minipage}
    }
    \label{algo:B}
\end{algofloat}

  The optimality claimed for Algorithms \ref{algo:A}--\ref{algo:B}
  is formally stated in 
 Theorem~\ref{th:Q-inf}. 

\begin{theorem}
\label{th:Q-inf}
    Let $\mu, \nu \in \mathcal{M}$ be atomic with the same size $a>0$,  $\mu\leq_{\rm st}\nu$, and  $k\geq 0$. It holds that
    \begin{enumerate}[label={\rm (\roman*)}, ref={\rm (\roman*)}]
        \item \label{item:A}  $ Q^{\inf}_k(\mu,\nu)=Q^{\rm A}_k(\mu,\nu) $ with $Q^{\rm A}_k(\mu,\nu)$ specified in 
    Algorithm~\ref{algo:A};  
        \item \label{item:B} $ Q^{\sup}_k(\mu,\nu)=Q^{\rm B}_k(\mu,\nu) $ with  $Q^{\rm B}_k(\mu,\nu)$ specified in 
    Algorithm~\ref{algo:B}. 
    \end{enumerate} 
    Moreover, $ Q^{\inf}_k(\mu,\nu)$ 
    and 
     $ Q^{\sup}_k(\mu,\nu)  $
     take values in the set $\{0,a,2a,\dots, m a\}$,
     where   $m$ is in \eqref{eq:mu-a-m}.
     
\end{theorem}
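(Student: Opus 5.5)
The proof splits into three tasks: reduce to couplings that are partial matchings, show the algorithms are well defined and feasible, and prove optimality by an exchange argument.

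\emph{Reduction to matchings.} The objective is linear in $\pi$, and $\mathcal{H}(\mu,\nu)$ is a bounded polytope of transportation type. We can view $\pi$ as an $m\times n$ array of masses $\pi_{ij}\ge 0$, with $\pi_{ij}=0$ whenever $y_j<x_i$, row sums equal to $a$ and column sums at most $a$. After dividing by $a$, the constraint matrix is that of bipartite matching and is totally unimodular. Hence every extreme point is $a$ times a $0$--$1$ matrix, i.e.\ an injective assignment $i\mapsto d_i$ with $y_{d_i}\ge x_i$. Since the infimum and supremum of a linear functional over a compact polytope are attained at vertices, both $Q^{\inf}_k$ and $Q^{\sup}_k$ equal $a$ times the optimal number of indices with $y_{d_i}-x_i>k$, taken over injective feasible assignments. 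This already gives the final claim that both values lie in $\{0,a,\dots,ma\}$.

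\emph{Feasibility of the algorithms.}
\begin{itemize}
\item For Algorithm~\ref{algo:B}, a fallback $j\in\mathcal{I}_i$ with $y_j\ge x_i$ always exists. Stochastic dominance gives $\#\{j:y_j\ge x_i\}\ge\#\{l:x_l\ge x_i\}\ge m-i+1$ (Hall's condition along the chain). The $m-i$ indices used so far are at most the number of $j$ with $y_j\ge x_i$ minus one.
\item For Algorithm~\ref{algo:A}, we must check that $y_{d_i}\ge x_i$ when $d_i=\sup\mathcal{I}_i$. I would prove by induction that, after step $i$, the remaining set $\mathcal{I}_{i-1}$ still stochastically dominates $\{x_1,\dots,x_{i-1}\}$. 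Removing the \emph{largest} remaining $y$, or a $y\ge x_i$ that is the largest within a window, preserves the counting inequality $\#\{j\in\mathcal{I}:y_j\ge t\}\ge\#\{l<i:x_l\ge t\}$. For $t\le x_i$ this follows because every removed $y$ is at least $x_i$, hence at least $t$. For $t>x_i$, I would handle the window case by noting that if no remaining $y$ exceeds $y_{d_i}$ the count is unaffected; otherwise I would use the fact that the window choice was the largest element in $[x_i,x_i+k]$. This is a somewhat delicate point.
\end{itemize}

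\emph{Optimality by exchange.} Take an optimal matching $\sigma$, chosen to agree with the algorithm's $d$ on the indices $m,m-1,\dots,i+1$ for the largest possible such set, and suppose $\sigma(i)\ne d_i$. I modify $\sigma$ so that it agrees at $i$ as well, without worsening the objective and while keeping feasibility. Let $j=d_i$. Either $j$ is unused by $\sigma$, in which case we simply reassign $i\mapsto j$, or $j=\sigma(l)$ for some $l<i$, so $x_l\le x_i$. In the second case we swap: $i\mapsto j$ and $l\mapsto\sigma(i)$.
\begin{itemize}
\item \emph{Algorithm B, with $\mathcal{J}_i\neq\emptyset$.} Here $y_j$ is the smallest available value with $y_j>x_i+k$. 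If $\sigma(i)$ was also a success, then $y_{\sigma(i)}\ge y_j>x_i+k\ge x_l+k$, so $l$ succeeds after the swap and $i$ still succeeds. If $\sigma(i)$ was a failure, then $i$ gains a success and $l$ loses at most one. Feasibility holds since $y_{\sigma(i)}\ge x_i\ge x_l$.
\item \emph{Algorithm B, with $\mathcal{J}_i=\emptyset$.} Here $y_j$ is the smallest available value with $y_j\ge x_i$, and every available $y$ is at most $x_i+k$. Then $i$ fails under both assignments, and $y_{\sigma(i)}\ge y_j$, so $l$ does at least as well after the swap.
\item \emph{Algorithm A, with $\mathcal{J}_i\neq\emptyset$.} Here $j$ is the largest available index with $y_j\le x_i+k$, so $i$ becomes a non-success. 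If $\sigma(i)$ was a non-success, then $y_{\sigma(i)}\le y_j$, and $l$'s new partner is no larger than before, so $l$ does not get worse. If $\sigma(i)$ was a success, $l$ may switch to a success, but $i$ switched from success to non-success, so the net change is nonnegative. Feasibility for $l$ holds since $y_{\sigma(i)}\ge x_i\ge x_l$.
\item \emph{Algorithm A, with $\mathcal{J}_i=\emptyset$.} Here $j$ is the largest available index, and every available $y$ either exceeds $x_i+k$ or is below $x_i$. The index $i$ is a success either way. The partner $y_{\sigma(i)}$ is at most $y_j$, so $l$'s value does not increase, except possibly where $y_{\sigma(i)}\ge x_l$ fails. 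Ruling that out is exactly where the dominance invariant from the feasibility step is needed: one may have to re-route $l$ via an augmenting path.
\end{itemize}

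I expect this last case of Algorithm~\ref{algo:A}, together with the associated invariant, to be the main obstacle. My plan is to show that the available $y$'s below $x_i$ are in fact irrelevant for $l$: the invariant ensures that the remaining indices $<i$ can still be matched feasibly, so a cheaper fixup is to re-match the tail via the algorithm's inductive hypothesis rather than by a single swap.
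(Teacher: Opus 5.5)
Your proposal is correct, and the obstacle you single out at the end does not exist. Take Algorithm~\ref{algo:A} with $\mathcal{J}_i=\emptyset$. After the swap, the new partner of $l$ is $y_{\sigma(i)}$. The inequality $y_{\sigma(i)}\ge x_i\ge x_l$ holds simply because $\sigma$ is feasible at $i$ and $l<i$; this is the same inequality you already use in the other three cases. So the plain swap works:
\begin{itemize}
\item $i$ is a success under both assignments, since $\mathcal{J}_i=\emptyset$ forces $y_{\sigma(i)}>x_i+k$, and $y_{d_i}\ge y_{\sigma(i)}$;
\item $l$ moves from $y_{d_i}$ to the no-larger value $y_{\sigma(i)}$, so it cannot become a success.
\end{itemize}
No dominance invariant, augmenting path, or re-matching of the tail is needed. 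Feasibility of Algorithm~\ref{algo:A} is equally easy. The count you give for Algorithm~\ref{algo:B} shows that some available $y_j\ge x_i$ remains, so the largest available one is $\ge x_i$. In your proposed invariant the right-hand side is $0$ for $t>x_i$, so nothing is delicate there either. In fact, every greedy step of both algorithms is well defined as a by-product of your exchange argument, since $\sigma(i)$ is an available index with $y_{\sigma(i)}\ge x_i$.

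With that settled, your route differs from the paper's in both halves.
\begin{itemize}
\item \emph{Reduction to atomic couplings.} The paper does not use total unimodularity. It runs an explicit right-to-left kernel rearrangement: each $x_i$ is wholly assigned to one $y$ in the support of its kernel, mimicking the greedy rule, and the displaced mass is redistributed proportionally among smaller $x$'s. It then checks that the objective does not worsen at each step.
\item \emph{Optimality.} Instead of an exchange argument on the full instance, the paper inducts over the truncated problems $\mu|_{\{x_i,\dots,x_m\}}$ against all of $\nu$. The greedy output satisfies $\Gamma_i=\Gamma_{i+1}\cup\{(x_i,y_{d_i})\}$, and adding an atom cannot improve the optimum. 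So the only case to treat is when the new pair counts against the objective. There, a hypothetical competitor achieving the old value is refuted by following a chain of ``stolen'' partners and showing it can never end at a previously unused $y$, which is an augmenting-path argument.
\end{itemize}

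What each route buys:
\begin{itemize}
\item Your integrality argument is shorter and standard, and it delivers attainment and the value set $\{0,a,\dots,ma\}$ at once.
\item Your suffix-agreement exchange is purely local and avoids the chain bookkeeping.
\item The paper's constructive, inductive scheme is the template it adapts to the partial-treatment setting (Theorem~\ref{th:Q-inf-eta}). There the feasible set is a projection of a multi-marginal polytope, and total unimodularity is not available off the shelf.
\end{itemize}
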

The proof of Theorem \ref{th:Q-inf} is presented in Appendix~\ref{sect:proofs}. 
 Despite the similarity in the two problems \eqref{eq:submeasures-inf}--\eqref{eq:submeasures-sup} and the two algorithms, items~\ref{item:A} and~\ref{item:B} require different proof arguments.

   Algorithms~\ref{algo:A}--\ref{algo:B} not only provide the optimal values for \eqref{eq:submeasures-inf}--\eqref{eq:submeasures-sup}, they also construct corresponding optimizers (not necessarily unique), which are described below.  
For any atomic coupling $\pi$  with size $a$, define its \textit{coupling multiset} as the multiset $\Gamma$ of atom locations satisfying
\begin{equation*}
    \pi = a \sum_{(x,y)\in \Gamma} \delta_{(x,y)}. 
\end{equation*}
Note that, for a given size $a$, a coupling multiset entirely defines a coupling. 
%For $\mu$ and $\nu$ atomic with the same size $a$,
%we identify an atomic coupling $\pi$ with size $a$  between $\mu$ and $\nu$ with the multiset of atom locations in $\pi$.
% We will refer to any multiset
 %of pairs between $\{x_1,\dots,x_m\}$
% and $\{y_1,\dots,y_n\}$
% as a \emph{coupling} between $\mu$ and $\nu$. 
We denote by
 $\Gamma^{\rm A}_{k,\mu,\nu}$ the coupling multiset made of pairs $(x_{i},y_{d_i}) $  produced by Algorithm~\ref{algo:A}.
 The corresponding coupling is denoted by $\gamma^{\rm A}_{k,\mu,\nu} $, that is, $$\gamma^{\rm A}_{k,\mu,\nu}  = a\sum_{(x,y)\in\Gamma^{\rm A}_{k,\mu,\nu}}\delta_{(x,y)},$$ where $a$ is the size of one atom. In terms of optimal transport, $\gamma^{\rm A}_{k,\mu,\nu}$ is an unbalanced transport plan (``unbalanced" reflects the possibility of $\mu(\R)< \nu(\R)$), and it is optimal by Theorem~\ref{th:Q-inf}.   
Similarly, let $\Gamma^{\rm B}_{k,\mu,\nu}$ be the coupling multiset produced by Algorithm~\ref{algo:B}, and $\gamma^{\rm B}_{k,\mu,\nu}$ be the corresponding transport plan.  We provide examples of $\Gamma^{\rm A}_{k,\mu,\nu}$ and $\Gamma^{\rm B}_{k,\mu,\nu}$ for specific $\mu$ and $\nu$ below. 
%The construction of $\mathcal{D}_k(\nu,\mu)$ predicates on the idea of least nuisance.

\begin{example}
\label{ex:atomic-inf}
Consider probability measures $\mu$ and $\nu$, atomic with the same size, with atoms at locations  $\mathbf{x}=\{0,0.5,2,4,4.5\}$ and $\mathbf{y} = \{3.5,4,5.5,6,7\}$.
Note that $m=n=5$. Let $k=3$. Using Algorithm~\ref{algo:A}, we obtain $Q^{\rm A}_3(\mu,\nu) = 1/5$, which, by Theorem~\ref{th:Q-inf}, is equal to $Q_{3}^{\inf}(\mu,\nu)$. The coupling multiset $\Gamma^{\rm A}_{3,\mu,\nu}$ produced in the process is 
\begin{equation*}
    \Gamma_{3,\mu,\nu}^{\rm A} = \{(4.5,7),(4,6),(2,4),(0.5,3.5),(0,5.5)\}.
\end{equation*}
This coupling multiset is depicted in Figure~\ref{fig:D} where in green are the pairs such that $y-x > 3$  and in red are the ones such that $y-x \leq 3$. Note how ${\Gamma}_{k,\mu,\nu}^{\rm A}$ yields a smaller probability of worthwhile effect compared to comonotonic and DL couplings (see Appendix~\ref{sect:dl}):  comonotonic coupling would yield $\mathbb{P}(Y-X>3) = 3/5$ and DL-coupling, $\mathbb{P}(Y-X > 3) = 2/5$, for $X\lawis \mu$, $Y\lawis \nu$. %Note that an independence coupling would contradict monotone treatment response.       
\end{example}

\begin{figure}[tb]
\centering
	\begin{minipage}{0.70\textwidth}
		\centering
		\begin{tikzpicture}
          \node at (-1.25, 0.75) [rectangle] {$\mu$};
          \node at (-1.25, 3.25) [rectangle] {$\nu$};
			\draw[Black!40,  ->, thick] (-0.5,0) -- (7.5, 0);
			
			\foreach \x in {0,1,...,7}
			\draw[Black!40,  thick] (\x cm,-2pt) node[below]{\x} -- (\x cm,2pt);
			
			\filldraw (3.5,3.25) circle (0.1 cm);
			\filldraw (4,3.25) circle (0.1 cm);
			\filldraw (5.5,3.25) circle (0.1 cm);
			\filldraw (6,3.25) circle (0.1 cm);
			\filldraw (7,3.25) circle (0.1 cm);
			
			\draw[Red, very thick, ->] (4.5, 0.75) -- (7, 3.25);
			\draw[Red, very thick, ->] (4, 0.75) -- (6, 3.25);
			\draw[Red, very thick, ->] (2, 0.75) -- (4, 3.25);
			\draw[Red, very thick, ->] (0.5, 0.75) -- (3.5, 3.25);
			\draw[LimeGreen, very thick, ->] (0,0.75) -- (5.5,3.25); 
			
			\filldraw (0,0.75) circle (0.1 cm);
			\filldraw (0.5,0.75) circle (0.1 cm);
			\filldraw (2,0.75) circle (0.1 cm);
			\filldraw (4,0.75) circle (0.1 cm);
			\filldraw (4.5,0.75) circle (0.1 cm);
		\end{tikzpicture}	
	\end{minipage}
	\caption{Illustration of the coupling multiset $\Gamma^{A}_{k,\mu,\nu}$ produced by Algorithm~\ref{algo:A},
    for $k=3$, for the two discrete distributions with same-size atoms from Example~\ref{ex:atomic-inf}. In green are the pairs such that $y-x > 3$%(the treatment has a worthwhile effect)
    ; in red are the ones such that $y-x \leq 3$. %(the treatment does not have a worthwhile effect). 
    }
	\label{fig:D}
\end{figure}

%Note that when $k\to 0$, comonotonicity

% We next present Algorithm~\ref{algo:B}, solving the problem in \eqref{eq:submeasures-sup}, in a symmetric manner to Algorithm~\ref{algo:A}.

%The following theorem is a pendant to Theorem~\ref{th:Q-inf} for the problem in \eqref{eq:submeasures-sup}. 

% \begin{theorem}
% \label{th:Q-sup}
%      Let $\mu, \nu \in \mathcal{M}$ be atomic with same size, and $\mu\leq_{\rm st}\nu$. Then, $ Q^{\sup}_k(\mu,\nu)=Q^{\rm B}_k(\mu,\nu) $ for $k\geq 0$ and $Q^{\rm B}_k(\mu,\nu)$ specified in 
%     Algorithm~\ref{algo:B}.  
% \end{theorem}

\begin{example}
\label{ex:atomic-sup}
Consider probability measures $\mu$ and $\nu$, atomic with same size, with atoms at locations $\mathbf{x}=\{0,0.5,2, 3,4,5\}$ and $\mathbf{y} = \{3,3.5,4,5,6,7\}$, and let $k=2$. From Algorithm~\ref{algo:B}, we obtain $Q^{\rm B}_2(\mu,\nu) = 2/3$. By Theorem~\ref{th:Q-inf}, we thus have $Q^{\sup}_2(\mu,\nu) = 2/3$. The coupling multiset $\Gamma_{2,\mu,\nu}^{\rm B}$ produced in the process is 
\begin{equation*}
    \Gamma_{2,\mu,\nu}^{\rm B} = \{(5,5),(4,7),(3,6),(2,3),(0.5,3.5), (0,4)\}.
\end{equation*}
This coupling multiset is depicted in Figure~\ref{fig:E} where in green are the pairs such that $y-x > 2$  and in red are the ones such that $y-x \leq 2$. The coupling multiset ${\Gamma}_{k,\mu,\nu}^{\rm B}$ yields a larger probability of worthwhile effect compared to comonotonic and DL couplings: both would produce $\mathbb{P}(Y-X>2) = 1/3$, for $X\lawis \mu$, $Y\lawis \nu$. %Again, independence would contradict monotone treatment response.       
\end{example}

\begin{figure}[tb]
\centering
	\begin{minipage}{0.70\textwidth}
		\centering
		\begin{tikzpicture}
         \node at (-1.25, 0.75) [rectangle] {$\mu$};
          \node at (-1.25, 3.25) [rectangle] {$\nu$};
			\draw[Black!40,  ->, thick] (-0.5,0) -- (7.5, 0);
			
			\foreach \x in {0,1,...,7}
			\draw[Black!40,  thick] (\x cm,-2pt) node[below]{\x} -- (\x cm,2pt);
			
			\filldraw (3,3.25) circle (0.1 cm);
			\filldraw (3.5,3.25) circle (0.1 cm);
			\filldraw (4,3.25) circle (0.1 cm);
			\filldraw (5,3.25) circle (0.1 cm);
			\filldraw (6,3.25) circle (0.1 cm);
			\filldraw (7,3.25) circle (0.1 cm);
			
			\draw[Red, very thick, ->] (5, 0.75) -- (5, 3.25);
			\draw[LimeGreen, very thick, ->] (4, 0.75) -- (7, 3.25);
			\draw[LimeGreen, very thick, ->] (3, 0.75) -- (6, 3.25);
			\draw[Red, very thick, ->] (2, 0.75) -- (3, 3.25);
			\draw[LimeGreen, very thick, ->] (0.5,0.75) -- (3.5,3.25); 
			\draw[LimeGreen, very thick, ->] (0,0.75) -- (4,3.25);
			
			\filldraw (0,0.75) circle (0.1 cm);
			\filldraw (0.5,0.75) circle (0.1 cm);
			\filldraw (2,0.75) circle (0.1 cm);
			\filldraw (3,0.75) circle (0.1 cm);
			\filldraw (4,0.75) circle (0.1 cm);
			\filldraw (5,0.75) circle (0.1 cm);
		\end{tikzpicture}	
	\end{minipage}
	\caption{Illustration of the coupling multiset $\Gamma^{B}_{k,\mu,\nu}$ produced by Algorithm~\ref{algo:B},
    for $k=2$, for the two discrete distributions with same-size atoms from Example~\ref{ex:atomic-sup}. In green are the pairs such that $y-x > 2$; in red are the ones such that $y-x \leq  2$.  }
	\label{fig:E}
\end{figure}

It is clear from Figure~\ref{fig:D} that $\gamma^{\rm A}_{k, \mu, \nu}$ is not the unique transport plan solving \eqref{eq:submeasures-inf}, as one could have rather coupled, in Example~\ref{ex:atomic-inf}, $x_4$ to $y_5$ and $x_5$ to $y_4$, still respecting the constraints, and it would have no incidence on the probability of worthwhile effect. Similarly, an alternative to $\gamma^{B}_{k,\mu,\nu}$ in the case of Example~\ref{ex:atomic-sup} would be the transport plan obtained by rather coupling $x_1$ to $y_2$ and $x_2$ to $y_3$.%; it is also not the unique solution.

The next result highlights some stability from the algorithms, in the sense that the coupling multiset obtained may be separated in two parts, depending on whether atom pairs meet the objective or not, and re-running the algorithm on either part will not alter the coupling multiset. 

\begin{proposition}
\label{prop:separable}
	The coupling multiset $\Gamma^{\rm A}_{k,\mu,\nu}$ is separable in the following sense: Let $\Gamma^{>} = \{(x,y) \in \Gamma^{\rm A}_{k,\mu,\nu}: y-x > k\}$  and $\Gamma^{\leq} = \Gamma^{\rm A}_{k,\mu,\nu}\backslash \Gamma^{>}$. Then, we have $\Gamma^{>} = \Gamma^{A}_{k,\mu\vert_{\Gamma^{>}} ,\nu\vert_{\Gamma^{>}}}$ and $\Gamma^{\leq} = \Gamma^{A}_{k,\mu\vert_{\Gamma^{\leq}} ,\nu\vert_{\Gamma^{\leq}}}$, where $\mu|_{\Gamma}$ (resp.~$\nu|_{\Gamma}$) means restriction to the first (resp. second) component of multiset $\Gamma$'s elements. 
The coupling multiset $\Gamma^{\rm B}_{k,\mu,\nu}$ is separable similarly. 
    %$\mathcolor{red}{P = \mathcal{D}(X+k, Y) + \mathcal{D}(X, Y)}$. 
\end{proposition}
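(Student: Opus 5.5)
The plan is to compare, step by step, the run of Algorithm~\ref{algo:A} on $(\mu,\nu)$ with its run on a restricted pair. Fix one of the two parts, say $\Gamma^{\ast}\in\{\Gamma^{>},\Gamma^{\leq}\}$, and write $\mu^\ast=\mu|_{\Gamma^\ast}$ and $\nu^\ast=\nu|_{\Gamma^\ast}$.

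\textbf{Setup.} First, the restricted run is well defined. Both $\mu^\ast$ and $\nu^\ast$ are atomic with size $a$ and have the same number of atoms. The pairs of $\Gamma^\ast$ all satisfy $y\ge x$, so they form an element of $\mathcal H(\mu^\ast,\nu^\ast)$, which gives $\mu^\ast\le_{\rm st}\nu^\ast$. To deal with atoms at identical locations, I label atoms by their indices in the original run rather than by location. The atoms of $\mu^\ast$ are the $x_i$ with $i\in S$, for some index set $S\subseteq\{1,\dots,m\}$, processed in the inherited descending order. The atoms of $\nu^\ast$ are the $y_{d_i}$ with $i\in S$, and ties among them are broken by the inherited order of the original indices. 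With this convention the suprema and infima in the algorithm refer to the same atoms in both runs.

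\textbf{Induction invariant.} I argue by descending induction over $i\in S$. The claim is that when $x_i$ is processed in the restricted run, the restricted run has so far reproduced the original pairs. In that case its available set is
\[
\mathcal I_i^\ast=\{d_l: l\in S,\ l\le i\},
\]
which satisfies $\mathcal I_i^\ast\subseteq \mathcal I_i=\{1,\dots,n\}\setminus\{d_l:l>i\}$ and $d_i\in\mathcal I_i^\ast$. The candidate set in the restricted run is then $\mathcal J_i^\ast=\mathcal J_i\cap\mathcal I_i^\ast\subseteq\mathcal J_i$.

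\textbf{Inductive step, split by which part $i$ belongs to.}
\begin{itemize}
\item If $(x_i,y_{d_i})\in\Gamma^{\leq}$, the original run chose $d_i=\sup\mathcal J_i$ with $\mathcal J_i\neq\emptyset$. Since $d_i\in\mathcal J_i^\ast\subseteq\mathcal J_i$, we get $\sup\mathcal J_i^\ast=d_i$.
\item If $(x_i,y_{d_i})\in\Gamma^{>}$, then $\mathcal J_i$ must have been empty. Otherwise $y_{d_i}\le x_i+k$, contradicting $y_{d_i}-x_i>k$. So $d_i=\sup\mathcal I_i$. Then $\mathcal J_i^\ast=\emptyset$ as well, and $\sup\mathcal I_i^\ast=d_i$ because $d_i\in\mathcal I_i^\ast\subseteq\mathcal I_i$.
\end{itemize}
In both cases the restricted run selects $d_i$, which closes the induction. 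Hence $\Gamma^{\rm A}_{k,\mu^\ast,\nu^\ast}=\Gamma^\ast$.

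\textbf{Algorithm~\ref{algo:B}.} The same scheme works, using $\inf$ in place of $\sup$.
\begin{itemize}
\item For a pair in $\Gamma^{>}$ we have $d_i=\inf\mathcal J_i$. Since $d_i\in\mathcal J_i^\ast\subseteq\mathcal J_i$, the infimum is unchanged.
\item For a pair in $\Gamma^{\leq}$, $\mathcal J_i$ was empty, since otherwise the chosen $y$ would exceed $x_i+k$. So $\mathcal J_i^\ast=\emptyset$ too. Also $d_i=\inf\{j\in\mathcal I_i: y_j\ge x_i\}$, and this set contains the restricted one, which in turn contains $d_i$. So the restricted run again selects $d_i$.
\end{itemize}

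\textbf{Main obstacle.} The only delicate point is bookkeeping for ties, that is, repeated locations in $\mu$ or $\nu$. The multiset produced by the algorithm could in principle depend on how equal-location atoms are indexed. I would handle this by fixing the inherited indexing as above and observing that the output multiset is invariant under relabeling equal-location atoms. Such relabeling only permutes which of several identical $y$'s is picked, so it leaves the pairs of locations unchanged.
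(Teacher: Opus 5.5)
Your proof is correct and follows the same idea as the paper's, namely that rerunning Algorithm~\ref{algo:A} (or~\ref{algo:B}) on either part reproduces the original choices. You make that idea rigorous through the invariant $\mathcal J_i^\ast=\mathcal J_i\cap\mathcal I_i^\ast$ with $d_i\in\mathcal I_i^\ast\subseteq\mathcal I_i$, which is exactly what the paper's ``same principle'' sentence leaves implicit (e.g.\ that $\mathcal J_i^\ast$ stays empty for pairs in $\Gamma^{>}$), and you do not need the paper's preliminary appeal to Theorem~\ref{th:Q-inf} to get the multiset identity.
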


The separability of the coupling multiset will prove useful below in showing the stability of the couplings produced by the algorithms for limit cases of their applications.

\subsection{Some results on the coupling solutions}

This subsection gathers some results on the couplings solving \eqref{eq:submeasures-inf}--\eqref{eq:submeasures-sup}; the aim is to provide further insight on the structure of such couplings and how small variations in the parameters of the problems may affect the solution. 
Although we previously only discussed solving \eqref{eq:submeasures-inf}--\eqref{eq:submeasures-sup} for atomic measures with same size, the results in this subsection hold for any types of measures. Theorems~\ref{th:stosto} and~\ref{th:tv} below will be required in solving the problems for general discrete measures and continuous measures, in the next subsections.

First, we remark that values $Q_k^{\inf}(\mu,\nu)$ and $Q_{k}^{\sup}(\mu,\nu)$ not only bound the possible probabilities of worthwhile treatment effect for $\mu$ and $\nu$, but mark the interval of all of their possible values; we have 
\begin{equation}
   \left[Q_k^{\inf}(\mu,\nu), Q_k^{\sup}(\mu,\nu) \right) \subseteq \left\{\int \id_{\{y-x>k\}}\d \pi : \pi\in\mathcal{H}(\mu,\nu) \right\} \subseteq  \left[Q_k^{\rm inf}(\mu,\nu), Q_k^{\sup}(\mu,\nu) \right]
   \label{eq:bounds}
\end{equation}
This is because the probability $\int \id_{\{y-x>k\}}\d \pi$ is linear in the coupling $\pi$ and $\mathcal{H}(\mu,\nu)$ is convex. The attainability of the lower bound is claimed in Theorem~\ref{th:stosto} just below; the upper bound may be not attained, see Appendix~\ref{sect:approx-sup}.
%Moreover, any  point  in $[Q_k^{\inf}(\mu,\nu), Q_{k}^{\sup}(\mu,\nu)] $ with the form $aq$ for a rational $q$ is attained by an atomic $\pi$.
%  Let $\mu,\nu\in\mathcal{M}$ be atomic with the same size, $\mu\leq_{\rm st}\nu$, and $k\geq 0$. 
% For any $Q^*$ such that $Q^{\inf}_k(\mu, \nu) \leq Q^* \leq Q^{\sup}_{k}(\mu,\nu)$, there exists  $\gamma^* \in \mathcal{H}(\mu,\nu)$, a transport plan such that $Q^* = \int \id_{\{(x,y)\in\mathbb{R}^2: y-x>k \}}\mathrm{d}\gamma^* $.  Moreover, if $Q^*$ is rational, $\gamma^*$ is an atomic measure.

\begin{theorem}
\label{th:stosto}
   In \eqref{eq:submeasures-inf}, the infimum is attained, and a solution $\pi^*$ can be chosen such that $P_2(\pi^*)\leq_{\rm st}P_2(\pi)$ for all $\pi\in\mathcal{H}(\mu,\nu)$. 
\end{theorem}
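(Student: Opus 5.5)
We treat general $\mu,\nu\in\mathcal M$ with $\mu\le_{\rm st}\nu$, so Theorem~\ref{th:Q-inf} cannot be invoked directly. The plan has two parts: attainment via compactness and lower semicontinuity, and the stochastic-minimality of the second marginal via a reduction to the atomic case followed by a limit.

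\emph{Attainment.} The set $\mathcal H(\mu,\nu)$ is tight, since its first marginal is fixed and its second marginal is bounded by $\nu$. It is closed under weak convergence, because $\mathbb H$ is closed and the constraints $P_1(\pi)=\mu$ and $P_2(\pi)\le\nu$ pass to weak limits; the latter is tested against nonnegative continuous bounded functions. The set $\{(x,y):y-x>k\}$ is open, so $\pi\mapsto\int\id_{\{y-x>k\}}\d\pi$ is weakly lower semicontinuous by the portmanteau theorem. By Prokhorov's theorem, a minimizing sequence has a convergent subsequence, and its limit attains the infimum.

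\emph{Existence of a solution with minimal second marginal.} First I handle the atomic case with common size $a$. Here I claim that $\gamma^{\rm A}_{k,\mu,\nu}$ itself has the property. The algorithm always selects, among the admissible atoms of $\nu$, the largest-indexed one. I would therefore show by induction over $i=m,\dots,1$ that the set of indices left unused, $\mathcal I_{i-1}$, is ``as low as possible'' compared with what any coupling can leave. Concretely, for any $\pi\in\mathcal H(\mu,\nu)$ and every threshold $t$,
\[
\#\{j\notin\mathcal I_0: y_j>t\}\;\ge\;\#\{\text{atoms of }P_2(\pi)\text{ above }t\}\quad\text{is false in general,}
\]
so the correct statement to prove is the reverse inequality: $S_{P_2(\gamma^{\rm A})}(t)\le S_{P_2(\pi)}(t)$ for all $t$. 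This is the main obstacle. Taking the largest allowed $y$ pushes mass upward, which makes $\gamma^{\rm A}$ a poor candidate for stochastic minimality. I therefore expect the right construction to be a modified optimizer. Fix the set of $\mu$-atoms that $\gamma^{\rm A}$ routes to $y\le x+k$ and the set routed to $y>x+k$; these two pieces are separable by Proposition~\ref{prop:separable}. Then re-choose the targets so that every atom takes the \emph{smallest} admissible $y$ consistent with keeping the same number of ``good'' pairs. The key structural fact I would use is an exchange argument. Suppose some optimizer $\pi$ sends an $x$-atom to $y$ while a smaller $y'\ge x$ is unused or is used by an $x'$ that could be swapped. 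Then moving mass down does not increase the objective and lowers $P_2$ in $\le_{\rm st}$.

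Among the finitely many optimal atomic couplings, I would pick one minimizing $\sum_j\id\{j\text{ used}\}\,j$ lexicographically from the top, and argue via exchanges that its used set is dominated by the used set of every feasible coupling, not just every optimal one. For an arbitrary feasible $\pi$ with a larger objective value this also requires the following: reducing good pairs never forces higher $y$. I would prove this by a Hall-type counting argument on $\#\{\text{used } y_j>t\}$. The lower bound is that, for each $t$, the atoms $x>t$ must use $y>t$, together with the atoms forced to exceed $t$ because no low $y$ is available. The aim is to show the chosen optimizer meets this bound exactly.

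\emph{General measures.} Finally, approximate $\mu,\nu$ by atomic measures of equal size $a\to0$, rounding $\mu$ up and $\nu$ down in quantile so that $\le_{\rm st}$ is preserved. Obtain optimizers with minimal second marginal, extract a weak limit by tightness, and pass both the optimality and the $\le_{\rm st}$ minimality to the limit. This uses lower semicontinuity together with convergence of the optimal values, which I would argue by approximating any $\pi\in\mathcal H(\mu,\nu)$ by atomic couplings.
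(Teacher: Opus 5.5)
Your attainment argument is correct and is the paper's: tightness and weak closedness of $\mathcal H(\mu,\nu)$, plus Portmanteau on the open set $\{y-x>k\}$.

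The second half, however, is not proved. After rightly discarding $\gamma^{\rm A}_{k,\mu,\nu}$, you only restate what must be shown. The lexicographic selection and the Hall-type count describe the $\le_{\rm st}$-smallest used set that \emph{any} feasible coupling can have (the reverse-DL, left-to-right greedy marginal $\nu^*$). Nothing explains why an \emph{optimal} coupling can reach it, and that is the whole content of the theorem.

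The one exchange you verify, moving an atom to an unused $y'\in[x,y)$, is harmless but does not get you to $\nu^*$ in general. Your swap move is actually false as stated. A swap leaves $P_2$ unchanged and can destroy a good pair: with $k=1$, the pairs $(0,1)$ and $(-0.5,0.5)$ are both good, but after swapping, $(-0.5,1)$ is bad. Longer rematching chains have the same problem.

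The missing idea is the paper's Lemma~\ref{lem:construction}. Split an optimizer $\pi_2$ into its good part $\theta_2$ (on $\{y-x\le k\}$) and its bad part $\tau_2$, and recouple each by the reverse-DL coupling. The key fact is that the comonotone rearrangement of $\theta_2$ still satisfies $0\le Y-X\le k$ (by convex order, or the bottleneck property of sorted matchings). Hence moving its targets down quantile-wise, while keeping them above $X$, keeps every pair good, and the bad part can only improve. One then checks that the resulting second marginal is $\le_{\rm st}$ that of a coupling with marginal $\nu^*$. This works for general measures directly, with no discretization.

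Your passage to general measures would also need repair. As written, rounding $\mu$'s quantiles up and $\nu$'s down destroys $\mu_a\le_{\rm st}\nu_a$ (take $\mu=\nu$ uniform). You must round the other way, and that pushes mass lying on $y=x+k$ into the bad region.

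Because $\pi\mapsto\pi(\{y-x>k\})$ is only lower semicontinuous, the inequality you need, $\liminf_a Q^{\inf}_k(\mu_a,\nu_a)\le Q^{\inf}_k(\mu,\nu)$, is precisely the delicate point. It requires a margin argument, for example comparing with level $k+2\delta_a$. You also cannot borrow the paper's continuity results (Theorem~\ref{th:tv}, Propositions~\ref{prop:atomization}--\ref{prop:convergence}), since they rest on this very theorem through Lemma~\ref{lem:inf-equality}.
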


The selection result of Theorem~\ref{th:stosto} is noteworthy in particular since $\nu\leq_{\rm st} \nu^{\prime}$ does not imply $Q_{k}^{\inf}(\mu,\nu) \leq Q_{k}^{\inf}(\mu,\nu^{\prime})$, as in the following example. 
\begin{example}
\label{ex:0356}
Consider $\mu,\nu,\nu^{\prime}$, atomic probability measures with same size, with atoms respectively at locations $\mathbf{x} = \{0,5\}$, $\mathbf{y}=\{3,8\}$ and $\mathbf{y}^{\prime} = \{5,8\}$. Then, $\nu\leq_{\rm st}\nu^{\prime}$ but $Q_{1}^{\inf}(\mu,\nu) = 1$ and $Q_{1}^{\inf}(\mu,\nu^{\prime}) = 0.5$. Next, if we consider $\nu^{\dagger}$ a measure atomic with same size as $\mu$, with atom locations at $\mathbf{y}^{\dagger} = \{3,5,8\} $, we have $Q_1^{\inf}(\mu,\nu^{\dagger}) = 0.5$ and, by Theorem~\ref{th:stosto}, the coupling solution $\pi^*$ may effectively be taken so that $P_2(\pi^*) =: \nu^* \leq_{\rm st} P_2(\pi)$ for every $\pi\in\mathcal{H}(\mu,\nu^{\dagger})$. Namely, $\pi^*$ is described by the coupling multiset $\Gamma^* = \{(0,3), (5,5)\}$.
Let us remark that $\nu,\nu^{\prime},\nu^* \leq \nu^{\dagger}$, and that $\nu^*\leq_{\rm st} \nu$ and $\nu^{*}\leq_{\rm st}\nu^{\prime}$.
\end{example}
In the case where $\mu$ and $\nu$ are probability measures and $k=0$, the upper bound in \eqref{eq:bounds} is attained and the solution is given by the comonotonic coupling, as indicated in the following proposition. Apart from the case $k=0$, however, the optimizers do not have a simple dependence structure like comonotonicity or countermonotonicity.

\begin{proposition} \label{prop:comono} Let $\mu,\nu \in \mathcal{M}$ be probability measures, and $\mu\leq_{\rm st} \nu$. 
Then, $$Q^{\sup}_0(\mu,\nu) = 1-
\int_0^1 \id_{\{F_{\mu}^{-1}(u) = F_{\nu}^{-1}(u)\}}(u)\mathrm{d}u  = \mathbb{P}(Y - X >0 ),$$  
where  $X \lawis \mu$ and $Y\lawis \nu$ are comonotonic.
%Comonotonicity solves $\sup\{\int \id_{\{(x,y)\in\mathbb{R}^2: y>x\}}\mathrm{d}\pi : \pi\in \mathcal{H}(\mu,\nu)\}$. 
\end{proposition}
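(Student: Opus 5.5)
We need to show that $Q^{\sup}_0(\mu,\nu)=1-\lambda(A)$, where $A=\{u\in(0,1):F_\mu^{-1}(u)=F_\nu^{-1}(u)\}$ and $\lambda$ is Lebesgue measure. Equivalently, for every $\pi\in\mathcal H(\mu,\nu)$ (which here is a genuine coupling, since both measures have mass one) we need $\pi(\{y=x\})\ge \lambda(A)$. The comonotonic coupling attains this value.

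\textbf{Attainment.} Take $U$ uniform on $(0,1)$ and set $X=F_\mu^{-1}(U)$, $Y=F_\nu^{-1}(U)$. Since $\mu\le_{\rm st}\nu$ is equivalent to $F_\mu^{-1}\le F_\nu^{-1}$ pointwise, we have $Y\ge X$, so the law of $(X,Y)$ lies in $\mathcal H(\mu,\nu)$. Moreover
\[
\mathbb P(Y-X>0)=1-\mathbb P(U\in A)=1-\lambda(A).
\]
This gives $Q^{\sup}_0\ge 1-\lambda(A)$ and the second equality of the statement.

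\textbf{Upper bound.} Fix any $(X,Y)$ with $X\lawis\mu$, $Y\lawis\nu$ and $Y\ge X$. We must show $\mathbb P(X=Y)\ge\lambda(A)$, i.e.\ that the mass forced onto the diagonal is at least $\lambda(A)$.

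The diagonal part of $\pi$ defines a measure $\theta(B)=\mathbb P(X=Y\in B)$. The key step is to identify the minimal diagonal mass through a common-part argument. Write
\[
\mu=\theta+\mu',\qquad \nu=\theta+\nu',
\]
where $\mu'$ and $\nu'$ are the laws of $X$ and $Y$ on the event $\{Y>X\}$. So $\mu',\nu'$ are the marginals of a coupling supported on $\{y>x\}$.

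Claim: for every $t$, the set $\{u: F_\mu^{-1}(u)=F_\nu^{-1}(u)=t\}$ is an interval $I_t$ of length at most $\theta(\{t\})$. The sets $I_t$ are disjoint, and $A$ is their union up to a null set. This holds because a level set where both quantile functions equal $t$ can only be nonnegligible if $t$ is an atom of both measures. Summing over the countably many such $t$ then gives $\lambda(A)\le\sum_t\theta(\{t\})\le\mathbb P(X=Y)$.

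To prove the claim, fix $t$ and let $I_t=(u_0,u_1)$. On $I_t$ both quantile functions equal $t$, which translates into
\[
F_\mu(t^-)\le u_0,\quad F_\nu(t^-)\le u_0,\qquad F_\mu(t)\ge u_1,\quad F_\nu(t)\ge u_1 .
\]
Now consider the event $\{X=t\}$. Its subevent $\{X=t<Y\}$ requires $Y>t$, and $\{Y>t\}$ has probability $1-F_\nu(t)\le 1-u_1$. Likewise, $\{X<t=Y\}$ requires $X<t$, which has probability $F_\mu(t^-)\le u_0$. Using the constraint $Y\ge X$ (so that $X\le t<Y$ or $X<t\le Y$ are the only alternatives), we get
\begin{align*}
\mathbb P(X=Y=t)&\ge \mathbb P(X\le t)-\mathbb P(X<t)-\mathbb P(Y>t,\,X=t)\\
&\ge (u_1-u_0)-\bigl(\text{overlap corrections}\bigr).
\end{align*}
This must be made precise by a counting argument. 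The event $\{X\le t\}$ has probability at least $u_1$. On it, either $Y\le t$ or $Y>t$; the second case has probability at most $1-F_\nu(t)$.

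A cleaner route: $\{X\le t,\,Y\le t\}\setminus\{X<t\}\subseteq\{X=t=Y\}$, again by $Y\ge X$. Then
\begin{align*}
\mathbb P(X\le t,\,Y\le t)&=\mathbb P(Y\le t)=F_\nu(t)\ge u_1,\\
\mathbb P(X<t)&=F_\mu(t^-)\le u_0,
\end{align*}
where the first line uses that $Y\le t$ implies $X\le Y\le t$. Hence $\mathbb P(X=Y=t)\ge u_1-u_0=\lambda(I_t)$, which proves the claim.

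The expected obstacle is only the bookkeeping: verifying the quantile/cdf inequalities at interval endpoints with the left-continuous convention, and handling the countable decomposition of $A$. Level sets of a common value $t$ with positive length correspond to atoms, of which there are countably many, so the remainder of $A$ is null.
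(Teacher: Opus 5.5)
Your attainment argument is correct, and so is your pointwise estimate $\mathbb{P}(X=Y=t)\ge F_\nu(t)-F_\mu(t^-)\ge\lambda(I_t)$. The gap is the step claiming that $A$ is, up to a null set, the union of the countably many $I_t$ with $t$ a common atom.

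It is true that $A=\bigcup_{t\in\mathbb{R}}I_t$ is a disjoint union and that only countably many $I_t$ have positive length. But the union of the remaining uncountably many null level sets need not be null. Take $\mu=\nu=\mathrm{Uniform}(0,1)$: then $A=(0,1)$, every $I_t=\{t\}$, and there are no atoms. Your chain of inequalities would then read $1=\lambda(A)\le\sum_t\theta(\{t\})=0$.

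More generally, whenever $F_\mu$ and $F_\nu$ coincide (and are continuous) on some stretch, the mass forced onto the diagonal is diffuse, and a sum over atoms of $\theta$ cannot detect it. So your upper bound is only proved when $A$ is carried by common atoms, e.g.\ for discrete $\mu,\nu$. This is a missing piece of the argument, not endpoint bookkeeping.

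Your idea can be repaired by replacing points with cells. For $J=(a,b]$, the constraint $Y\ge X$ gives $\{X\in J,\,Y\in J\}=\{X>a,\,Y\le b\}$. Hence
\[
\mathbb{P}(X\in J,\,Y\in J)\ge \bigl(F_\nu(b)-F_\mu(a)\bigr)_+ .
\]
On the other side, $(F_\nu(b)-F_\mu(a))_+=\lambda\{u: F_\mu^{-1}(u)\in J,\ F_\nu^{-1}(u)\in J\}$. This uses $F^{-1}(u)\le x\iff u\le F(x)$ together with $F_\mu^{-1}\le F_\nu^{-1}$.

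Now sum over the dyadic cells of mesh $2^{-n}$ and let $n\to\infty$. Both sides decrease along the nested partitions, to $\mathbb{P}(X=Y)$ and to $\lambda(A)$ respectively. This gives $\mathbb{P}(X=Y)\ge\lambda(A)$ for every admissible coupling.

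That would be an elementary alternative to the paper's proof. The paper instead uses the fact that the comonotonic coupling makes $Y-X$ minimal in convex order. It then applies this to $w\mapsto\id_{\{w=0\}}$, which is convex on $[0,\infty)$, an interval containing the support of $Y-X$.
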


%In particular, Proposition~\ref{prop:comono} implies that the function $k\mapsto Q^{\rm sup}_k( \mu,\nu)$ is not continuous at $0$, even if $\mu$ and $\nu$ are absolutely continuous probability measures; indeed, the comonotonic coupling will not necessarily yield $\mathbb{P}(Y-X>0)=1$, yet $\mathbb{P}(Y-X\geq 0) = 1$ from the ordering constraint. 

%From the instructions in Algorithm~\ref{algo:B}, we note that $\gamma^{\rm B}_{0,\mu,\nu} = \mathrm{dl}$. \textcolor{magenta}{irrelevant : all elements in $\mathcal{H}(\mu,\nu)$ will solve.} But then, $\lim_{k\to 0} \gamma^{\rm B}_{0,\mu,\nu}$ is comonotonicity. 

For the remainder of Section~\ref{sect:EFFECTIVENESS} (that is, also for upcoming Sections~\ref{sec:discrete} and~\ref{sec:cont}), we will discuss only the problem in \eqref{eq:submeasures-inf}. While results for \eqref{eq:submeasures-sup} would be similar in essence, some adjustments to the mathematical settings are required, which we believe would overburden the discussion; rather, see~Appendix~\ref{sect:approx-sup}. The result in the following theorem constrains how much $Q_k^{\inf}$ varies when adding or removing mass in the coupling pool.

\begin{theorem}
\label{th:tv}
    Consider $\mu,\nu,\mu^{\prime},\nu^{\prime}\in\mathcal{M}$ such that $\mu\leq_{\rm st}\nu$, $\mu^{\prime}\leq_{\rm st}\nu^{\prime}$, $\mu\geq \mu^{\prime}$ and $\nu\leq \nu^{\prime}$. For every $k\geq 0$, it holds that
   \begin{equation*}
      0 \leq  Q^{\inf}_k(\mu,\nu) - Q^{\inf}_k(\mu^{\prime},\nu^{\prime})  \leq \mu(\mathbb{R}) - \mu^{\prime}(\mathbb{R}) + \nu^{\prime}(\mathbb{R}) - \nu(\mathbb{R}).
    \end{equation*}
    \end{theorem}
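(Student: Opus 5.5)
The lower bound $0\le Q^{\inf}_k(\mu,\nu)-Q^{\inf}_k(\mu',\nu')$ is a direct restriction argument. By Theorem~\ref{th:stosto}, pick $\pi\in\mathcal H(\mu,\nu)$ attaining $Q^{\inf}_k(\mu,\nu)$. Since $\mu'\le\mu$, the density $h=\mathrm d\mu'/\mathrm d\mu$ exists and satisfies $0\le h\le 1$. Set $\pi'(\mathrm dx,\mathrm dy)=h(x)\,\pi(\mathrm dx,\mathrm dy)$. Then:
\begin{itemize}
\item $\pi'$ is supported in $\mathbb H$;
\item $P_1(\pi')=\mu'$;
\item $P_2(\pi')\le P_2(\pi)\le\nu\le\nu'$.
\end{itemize}
So $\pi'\in\mathcal H(\mu',\nu')$, and $\int\id_{\{y-x>k\}}\mathrm d\pi'\le\int\id_{\{y-x>k\}}\mathrm d\pi$. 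This gives $Q^{\inf}_k(\mu',\nu')\le Q^{\inf}_k(\mu,\nu)$.

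For the upper bound I would pass through the intermediate pair $(\mu',\nu)$. This pair is admissible because $\mu'\le\mu$ implies $\mu'\le_{\rm st}\mu\le_{\rm st}\nu$. I would then prove the two one-sided estimates
\[
Q^{\inf}_k(\mu,\nu)-Q^{\inf}_k(\mu',\nu)\le \mu(\R)-\mu'(\R),\qquad Q^{\inf}_k(\mu',\nu)-Q^{\inf}_k(\mu',\nu')\le \nu'(\R)-\nu(\R),
\]
and add them. It is convenient to rewrite the objective as $Q^{\inf}_k(\mu,\nu)=\mu(\R)-G_k(\mu,\nu)$, where
\[
G_k(\mu,\nu)=\sup\Bigl\{\int\id_{\{0\le y-x\le k\}}\mathrm d\pi:\pi\in\mathcal H(\mu,\nu)\Bigr\}
\]
is the maximal \emph{good} mass. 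In these terms the two estimates become:
\begin{itemize}
\item[(a)] $G_k(\mu,\nu)\ge G_k(\mu',\nu)$ when $\mu'\le\mu$, i.e. adding mass to the first marginal never destroys good mass;
\item[(b)] $G_k(\mu',\nu')-G_k(\mu',\nu)\le\nu'(\R)-\nu(\R)$, i.e. removing mass $\delta$ from the second marginal destroys at most $\delta$ of good mass.
\end{itemize}

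Both (a) and (b) reduce to one extension lemma. Let $g\in\Pi$ be a "good part", i.e. supported on $\{0\le y-x\le k\}$, with $P_1(g)\le\tilde\mu$ and $P_2(g)\le\tilde\nu$, where $\tilde\mu\le_{\rm st}\tilde\nu$. The lemma should say that $g$ can be replaced by another good part $\tilde g$ with $\tilde g(\R^2)\ge g(\R^2)$, such that the remainder $\mathcal H(\tilde\mu-P_1(\tilde g),\tilde\nu-P_2(\tilde g))$ is nonempty. Equivalently, $S_{\tilde\mu-P_1(\tilde g)}\le S_{\tilde\nu-P_2(\tilde g)}$, by the criterion of \citet[Theorem 2.6.3]{MS02} already quoted. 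The two estimates then follow:
\begin{itemize}
\item for (a), take $g$ to be the good part of an optimal coupling for $(\mu',\nu)$ and apply the lemma with $(\tilde\mu,\tilde\nu)=(\mu,\nu)$;
\item for (b), take the good part of an optimal coupling for $(\mu',\nu')$, first discard at most $\nu'(\R)-\nu(\R)$ of it so that its second marginal fits under $\nu$, and then apply the lemma with $(\tilde\mu,\tilde\nu)=(\mu',\nu)$.
\end{itemize}

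The main obstacle is this extension lemma. A fixed good part can block feasibility of the remainder: it may use $\nu$-mass at high locations that the leftover $\mu$-mass needs. My plan is to modify $g$ without losing good mass, replacing it by a good part whose second marginal is stochastically minimal, in the spirit of Theorem~\ref{th:stosto} and of the "take the largest admissible $y$" rule of Algorithm~\ref{algo:A}. Concretely, I would first prove the lemma for atomic measures of equal size, where it is a statement about bipartite matchings on the graph with edges $\{y\ge x\}$. There I would use an exchange argument on the output of Algorithm~\ref{algo:A}, together with Theorem~\ref{th:Q-inf} and Proposition~\ref{prop:separable}, to show that the algorithm's good part can be chosen to dominate any given good part in size while keeping the remainder feasible. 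Finally, I would pass to general measures by approximating $\mu,\nu,\mu',\nu'$ with atomic measures of common size, preserving the orderings $\le$ and $\le_{\rm st}$, and use weak compactness of $\mathcal H$ together with lower semicontinuity of the relevant indicator functionals.
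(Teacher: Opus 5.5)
Your reductions are correct and follow the paper's own structure. The restriction $h(x)\pi$ gives the lower bound; the paper gets it from the splitting $\pi=\pi^*+\pi^\dagger$ together with $\mathcal H(\mu',\nu)\subseteq\mathcal H(\mu',\nu')$. You pass through the same intermediate pair $(\mu',\nu)$, and your truncation by $\mathrm d\nu/\mathrm d\nu'$ in (b) is the paper's Lemma~\ref{lem:decompo}. The gap is the extension lemma. It is not an auxiliary step: it is equivalent to estimate (a). To see this, apply (a) to $P_1(g)\le\tilde\mu$, noting that $g$ itself is an entirely good element of $\mathcal H(P_1(g),\tilde\nu)$. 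So the lemma carries the whole nontrivial content of the theorem, and your plan gives no argument for it. Theorem~\ref{th:Q-inf} certifies Algorithm~\ref{algo:A} only against couplings of \emph{all} of the first marginal. It says nothing about a good sub-coupling $g$ that cannot be completed inside $\mathcal H(\tilde\mu,\tilde\nu)$, which is exactly your situation in both (a) and (b). Hence ``the algorithm's good part dominates any given good part'' is a new claim: that this good part is a maximum matching in the graph $\{x\le y\le x+k\}$. Proposition~\ref{prop:separable} does not give it either, and the exchange argument is left unspecified precisely where it is needed.

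The paper closes this step with the selection part of Theorem~\ref{th:stosto}, which you cite only for attainment. Take an optimizer $\pi^*$ for $(\mu',\nu)$ whose whole second marginal is $\le_{\rm st}$-minimal over $\mathcal H(\mu',\nu)$. Then $\nu-P_2(\pi^*)$ is $\le_{\rm st}$-maximal among the leftovers $\nu-P_2(\pi')$. Restricting any element of $\mathcal H(\mu,\nu)$ to $\mu'$ produces a leftover that dominates $\mu-\mu'$, so $\pi^*$ extends to an element of $\mathcal H(\mu,\nu)$ (Lemma~\ref{lem:inf-equality}). This is (a) for general measures, with no approximation.

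For (b) no general lemma is then needed. Truncate the whole coupling: $\pi_1=\frac{\mathrm d\nu}{\mathrm d\nu'}(y)\,\pi\in\mathcal H(\mu_1,\nu)$, with $\mu_1=P_1(\pi_1)\le\mu'$ and $\mu'(\R)-\mu_1(\R)\le\nu'(\R)-\nu(\R)$. Then apply (a) to $\mu_1\le\mu'$.

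If you prefer your own route, the atomic lemma does follow from an exchange, but applied to $g$ directly rather than to the algorithm's output:
\begin{enumerate}
\item If Hall's condition for the leftover fails at some level, it also fails at a leftover atom $x_c$.
\item Counting against $\tilde\mu\le_{\rm st}\tilde\nu$ then forces a pair $(x,y)\in g$ with $x<x_c\le y$.
\item The pair $(x_c,y)$ is good because $y\le x+k<x_c+k$, so swapping keeps $|g|$ and strictly moves the leftover to the left; repeat until Hall's condition holds.
\end{enumerate}
You would still owe a careful passage from atomic to general measures. This must handle commensurable masses, discretization pushing good pairs only to $y-x\le k+2^{-n}$, and unbounded supports, and it must use upper semicontinuity of the good mass on the closed set $\{0\le y-x\le k\}$. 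It also cannot lean on Propositions~\ref{prop:atomization}--\ref{prop:convergence}, which are derived from this very theorem.
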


    % Perhaps surprisingly, the statement of Theorem~\ref{th:tv} would fail to hold without the assumption of monotone treatment response, as shown in the next example.

Theorem~\ref{th:tv} indicates that adding or removing small portions of mass in $\mu$ or $\nu$ cannot snowball into large variations for $Q_k^{\inf}(\mu,\nu)$; the disruption is at most the amount of mass added. Let us remark that the 0 lower bound is non-trivial for $\mu \neq \mu^{\prime}$. In particular, it indicates that a portion of mass in $\mu$ never leads to major compromises in satisfying the objective $y-x\leq k$ because it has to be coupled. In Appendix~\ref{sect:approx-sup}, we provide an example where this would be the case for $Q_k^{\sup}(\mu,\nu)$, and this predicates the need to tweak the setting accordingly for approximations regarding problem~\eqref{eq:submeasures-sup}.

%The upcoming subsections  to more general measures. 

\subsection{Discrete measures}
\label{sec:discrete}

%Suppose the supports of $\mu$ and $\nu$ are subsets of the integers $\mathbb{Z}$. 
We next extend our solutions to \eqref{eq:submeasures-inf}--\eqref{eq:submeasures-sup} for $\mu$ and $\nu$ that are discrete measures but not necessarily atomic with the same size. These solutions are obtained as limit cases to those for measures that are atomic with the same size. 

Let $\mu$ and $\nu$ be two discrete measures such that $|\mathrm{supp}(\nu)|<\infty$ and $\mu\leq_{\rm st}\nu$. Define $\widetilde{\mu}_r$ and $\widetilde{\nu}_r$, $r\in\mathbb{N}$, as %subprobability 
atomized versions of $\mu$ and $\nu$, described by
\begin{align}
    \widetilde{\mu}_r (\{x\}) &= \frac{1}{2^r}\left\lfloor 2^r{\mu(\{x\})} \right\rfloor, \quad \text{for every $x\in\mathrm{supp}(\mu)$}, \label{eq:atomization-mu}\\
    \widetilde{\nu}_r (\{y\}) &= \frac{1}{2^r}\left\lceil 2^{r}\nu(\{y\}) \right\rceil, \quad \text{for every $y\in\mathrm{supp}(\nu)$}, \label{eq:atomization-nu}
\end{align}
and this construction ensures  that the ordering constraint $\widetilde{\mu}_r\le_{\rm st} \widetilde{\nu}_r$  holds for every atomization level $r$. 
It also highlights the importance of allowing for $\mu(\R)<\nu(\R)$ in Theorem \ref{th:Q-inf}
because we will apply it to $    \widetilde{\mu}_r $ and $    \widetilde{\nu}_r$, which do not have the same total mass in general. We require the support of $\nu$ to be finite, otherwise $\widetilde{\nu}_r$ would have infinite mass for any $r\in\mathbb{N}$.

\begin{proposition}
\label{prop:atomization}
Let $\mu, \nu\in\mathcal{M}$ be discrete measures such that $|\mathrm{supp}(\nu)|<\infty$ and $\mu\leq_{\rm st}\nu$.
    For any $k\geq 0$, it holds that $Q_k^{\rm A}(\widetilde{\mu}_r, \widetilde{\nu}_r) \uparrow Q_k^{\inf}(\mu, \nu)$ as $r\to\infty$. 
\end{proposition}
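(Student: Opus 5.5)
Write $\mu_r := \widetilde{\mu}_r$ and $\nu_r := \widetilde{\nu}_r$. The plan is to combine Theorem~\ref{th:Q-inf}\ref{item:A}, which gives $Q_k^{\rm A}(\mu_r,\nu_r)=Q_k^{\inf}(\mu_r,\nu_r)$, with the mass-perturbation bound of Theorem~\ref{th:tv}.

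\textbf{Setup.} First check that $\mu_r$ and $\nu_r$ are atomic with common size $2^{-r}$. Also check that $\mu_r$ has finitely many atoms: $\mu_r \le \mu$ and $\mu$ is finite, so only finitely many $x$ satisfy $\mu(\{x\})\ge 2^{-r}$. Next, the sequences are monotone. Since $\lfloor 2\cdot 2^r t\rfloor \ge 2\lfloor 2^r t\rfloor$, we get $\mu_r \le \mu_{r+1} \le \mu$. Similarly $\lceil 2^{r+1}t\rceil \le 2\lceil 2^r t\rceil$ gives $\nu \le \nu_{r+1}\le \nu_r$. Combining these with $\mu\le_{\rm st}\nu$ yields $\mu_r\le\mu\le_{\rm st}\nu\le\nu_r$, so $\mu_r\le_{\rm st}\nu_r$, and likewise $\mu_{r+1}\le_{\rm st}\nu_{r+1}$.

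\textbf{Monotonicity.} Apply the lower bound $0\le Q^{\inf}_k(\mu,\nu)-Q^{\inf}_k(\mu',\nu')$ of Theorem~\ref{th:tv} in two ways.
\begin{itemize}
\item Take $(\mu,\nu)=(\mu_{r+1},\nu_{r+1})$ and $(\mu',\nu')=(\mu_r,\nu_r)$. This is admissible because $\mu_{r+1}\ge\mu_r$ and $\nu_{r+1}\le\nu_r$. It gives $Q^{\inf}_k(\mu_r,\nu_r)\le Q^{\inf}_k(\mu_{r+1},\nu_{r+1})$, so the sequence is increasing.
\item Take $(\mu,\nu)$ itself against $(\mu_r,\nu_r)$. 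It gives $Q^{\inf}_k(\mu_r,\nu_r)\le Q^{\inf}_k(\mu,\nu)$.
\end{itemize}

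\textbf{Convergence.} The upper bound in Theorem~\ref{th:tv} gives
\[
Q^{\inf}_k(\mu,\nu)-Q^{\inf}_k(\mu_r,\nu_r)\le \mu(\R)-\mu_r(\R)+\nu_r(\R)-\nu(\R).
\]
The $\nu$-term is at most $|\mathrm{supp}(\nu)|\,2^{-r}\to0$, which is where finiteness of the support of $\nu$ is used.

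For the $\mu$-term, write
\[
\mu(\R)-\mu_r(\R)=\sum_{x}\bigl(\mu(\{x\})-2^{-r}\lfloor 2^r\mu(\{x\})\rfloor\bigr).
\]
Each summand is bounded by $\mu(\{x\})$, which is summable because $\mu$ is a finite discrete measure. Each summand also tends to $0$ as $r\to\infty$. Dominated convergence therefore gives $\mu_r(\R)\to\mu(\R)$.

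\textbf{Main obstacle.} The support of $\mu$ may be countably infinite, so the crude bound $|\mathrm{supp}(\mu)|\,2^{-r}$ is unavailable. The dominated-convergence step above is what handles this. Beyond that, one only needs care that Theorem~\ref{th:tv} applies to general (non-probability) measures, which it does as stated.
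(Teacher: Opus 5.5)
Your proof is correct and follows the paper's argument. You identify $Q_k^{\rm A}(\widetilde{\mu}_r,\widetilde{\nu}_r)=Q_k^{\inf}(\widetilde{\mu}_r,\widetilde{\nu}_r)$ via Theorem~\ref{th:Q-inf}, use $\widetilde{\mu}_r\le\widetilde{\mu}_{r+1}\le\mu$ and $\nu\le\widetilde{\nu}_{r+1}\le\widetilde{\nu}_r$, and then invoke Theorem~\ref{th:tv} both for monotonicity in $r$ and for the vanishing mass bound. You spell out the details the paper calls ``clear'': the floor/ceiling inequalities, finiteness of the atom count, and dominated convergence for $\widetilde{\mu}_r(\R)\to\mu(\R)$ when $\mathrm{supp}(\mu)$ is infinite.
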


Executing the algorithm on atomized versions of $\mu$ and $\nu$ will not only yield an approximate value of $Q^{\inf}_k(\mu,\nu)$, it most notably produces a coupling that is distributionally close to the optimal coupling,  thus giving valuable information on the dependence scheme solving \eqref{eq:submeasures-inf}. 

\begin{theorem}
\label{th:discrete}
   % For discrete $\mu, \nu$ such that $\mu \leq_{\rm st}\nu$, let $\widetilde{\mu}_n$ and $\widetilde{\nu}_n$ be their atomized versions as defined above for all $n\in\mathbb{N}$. 
   Let $\mu, \nu\in\mathcal{M}$ be discrete measures such that $|\mathrm{supp}(\nu)|<\infty$ and $\mu\leq_{\rm st}\nu$.
   For any $k\geq 0$, there exists $\gamma \in \mathcal{H}(\mu,\nu)$ such that $\gamma_{k,\widetilde{\mu}_r,\widetilde{\nu}_r}^{\rm A} \to \gamma$ as $r\to \infty$. 
\end{theorem}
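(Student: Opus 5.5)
The plan is to reinterpret Algorithm~\ref{algo:A} as a \emph{mass-level} greedy procedure that makes sense for arbitrary discrete $\mu,\nu$. I would show that on $(\widetilde\mu_r,\widetilde\nu_r)$ it reproduces $\gamma^{\rm A}_{k,\widetilde\mu_r,\widetilde\nu_r}$ exactly, and that it is Lipschitz in total variation with respect to its inputs. Since $\widetilde\mu_r\to\mu$ and $\widetilde\nu_r\to\nu$ in total variation (each discrepancy is at most $2^{-r}$ per support point, and $|\mathrm{supp}(\nu)|<\infty$), this gives convergence of the whole sequence $\gamma^{\rm A}_{k,\widetilde\mu_r,\widetilde\nu_r}$, in total variation and hence weakly, to the output $\gamma$ of the mass-level procedure on $(\mu,\nu)$. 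The point of this route is that a compactness argument alone would only give subsequential limits, whereas the statement asks for convergence of the full sequence.

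\textbf{Step 1 (the mass-level procedure).} Assume first that $\mathrm{supp}(\mu)=\{x^{(1)}>\cdots>x^{(p)}\}$ is finite. Process the locations $x^{(\ell)}$ in descending order, keeping a residual measure $\rho_\ell\le\nu$, with $\rho_0=\nu$. The mass $\mu(\{x^{(\ell)}\})$ is allocated in two phases.
\begin{itemize}
\item First, it takes residual mass in $[x^{(\ell)},x^{(\ell)}+k]$, starting from the largest $y$ and moving downward.
\item If that window is exhausted, the remaining mass takes residual mass from the largest available $y$ downward, over all locations.
\end{itemize}
Consecutive equal atoms of size $2^{-r}$ at $x^{(\ell)}$ in Algorithm~\ref{algo:A} act exactly this way. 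Each atom takes $\sup\mathcal J_i$ while $\mathcal J_i\neq\emptyset$; once the window is exhausted it stays exhausted for the remaining atoms at that location, and each of those takes $\sup\mathcal I_i$. Atoms of $\widetilde\nu_r$ at a common location are interchangeable. Hence the mass-level procedure applied to $(\widetilde\mu_r,\widetilde\nu_r)$ returns $\gamma^{\rm A}_{k,\widetilde\mu_r,\widetilde\nu_r}$. The output always lies in $\mathcal H$: the residual after stage $\ell$ stochastically dominates the unprocessed part of $\mu$, by the same argument that makes Algorithm~\ref{algo:A} well defined.

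\textbf{Step 2 (Lipschitz stability).} For a single stage, a ``fill from the top'' map $(m,\rho)\mapsto$ (allocated measure, new residual) is $1$-Lipschitz in $(|m-m'|,\|\rho-\rho'\|_{\rm TV})$. This holds because both the window fill and the overflow fill are quantile-type truncations of $\rho$. Composing the finitely many stages gives $\|\gamma-\gamma'\|_{\rm TV}\le C(\|\mu-\mu'\|_{\rm TV}+\|\nu-\nu'\|_{\rm TV})$, with $C$ depending only on $p$. I would then check that the limit $\gamma$ has first marginal $\mu$, second marginal at most $\nu$, and support in the closed set $\mathbb H$, so that $\gamma\in\mathcal H(\mu,\nu)$.

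\textbf{Step 3 (countable support of $\mu$).} This is the main obstacle, since descending processing need not be well-ordered and the constant $C$ above grows with $p$. I would first try a constant-free stability bound: removing mass $\varepsilon$ from one location of $\mu$ changes the final coupling by at most $c\,\varepsilon$ in total variation, with $c$ absolute. To get this, track the residual discrepancy through the later stages and argue that a greedy fill from the top never amplifies a total-variation discrepancy in the residual. This is the mass analogue of the $0\le\cdot\le$ mass bound in Theorem~\ref{th:tv}, and it may use the separability in Proposition~\ref{prop:separable}.

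Granting this bound, choose a finite set $F$ with $\mu(F^c)<\varepsilon$. The couplings for $(\mu|_F,\nu)$ and $(\widetilde\mu_r|_F,\widetilde\nu_r)$ converge by Step~2. Each of them is within $c\varepsilon$ of the coupling for the full measures $(\mu,\nu)$ and $(\widetilde\mu_r,\widetilde\nu_r)$, respectively. It follows that $(\gamma^{\rm A}_{k,\widetilde\mu_r,\widetilde\nu_r})_r$ is Cauchy in total variation, and its limit $\gamma$ lies in $\mathcal H(\mu,\nu)$ by closedness.
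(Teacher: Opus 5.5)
\textbf{Steps 1--2 are fine; Step~3 is the gap.} For finitely supported $\mu$, your Steps 1--2 are sound, and they take a different route from the paper. You identify Algorithm~\ref{algo:A} with a mass-level greedy map and get total-variation convergence from its Lipschitz continuity. One nit: when the residual moves by $\varepsilon$, the allocated part can move by $2\varepsilon$ in total-variation norm, so ``$1$-Lipschitz'' should read ``Lipschitz'', which is all you use.

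Step~3 is not a side case. The statement only assumes $|\mathrm{supp}(\nu)|<\infty$, so $\mu$ may have countably many atoms, possibly accumulating from below at an atom of $\nu$, so that there is no first location to process. Then the number of stages on $(\widetilde\mu_r,\widetilde\nu_r)$ is unbounded in $r$. Your treatment of this case is conditional on a stability bound you do not prove, and the mechanism you suggest does not deliver it.
\begin{itemize}
\item Non-amplification of the residual discrepancy is true: if one run's residual exceeds the other's by a nonnegative measure of mass $\varepsilon$, this persists through later stages.
\item But the distance between the two couplings is the \emph{sum} over stages of the allocation discrepancies. Each stage may reshuffle the whole residual discrepancy at a cost of up to $2\varepsilon$ without amplifying it.
\item So non-amplification alone gives only a bound of order (number of stages)$\,\cdot\,\varepsilon$, which is exactly the dependence on the number of atoms that you need to remove.
\end{itemize}
Moreover, $c$ cannot be absolute. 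Take $\nu=\sum_{j=1}^N\delta_j$, $k\ge N$, and $\mu$ made of $N$ unit atoms at distinct points of $[0,1)$. The greedy procedure then fills from the top. Removing $\varepsilon$ from the highest atom of $\mu$ shifts every later allocation by $\varepsilon$ across an atom of $\nu$, which moves the coupling by $(2N-1)\varepsilon$. What is missing is an argument limiting how often a piece of discrepancy can be reshuffled, and it must use the finiteness of $\mathrm{supp}(\nu)$.

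\textbf{How to repair it.} The simplest fix, and essentially the paper's argument, replaces a uniform-in-stages bound by monotonicity in $r$ at each fixed location.
\begin{itemize}
\item A single greedy stage returns a residual that is nondecreasing (as a measure) in the incoming residual and nonincreasing in the mass allocated.
\item Since $\widetilde\nu_{r+1}\le\widetilde\nu_r$ and $\widetilde\mu_{r+1}\ge\widetilde\mu_r$, induction over the finitely many locations of $\mathrm{supp}(\widetilde\mu_{r+1})$ shows that the residual $\rho_r(x)$ met by each location $x$ is nonincreasing in $r$. It therefore converges on the finite set $\mathrm{supp}(\nu)$.
\item Your single-stage Lipschitz estimate then gives convergence of the allocation $A_r(x)$ of each $x$.
\item Each $A_r(x)$ has mass at most $\mu(\{x\})$, so dominated convergence yields $\gamma^{\rm A}_{k,\widetilde\mu_r,\widetilde\nu_r}=\sum_x\delta_x\otimes A_r(x)\to\gamma$ in total variation, with $\gamma\in\mathcal H(\mu,\nu)$ checked as in your Step~2.
\end{itemize}
The paper expresses the same monotonicity through stochastic ordering of the within-window part $\theta_r(x,\cdot)$ and the overflow part $\tau_r(x,\cdot)$ of the kernel. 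It concludes by dominated convergence with $g_r=\mathrm{d}\widetilde\mu_r/\mathrm{d}\mu\uparrow 1$.

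If you prefer to keep your stability route, the bound can be proved with $c$ of order $|\mathrm{supp}(\nu)|$ by a potential argument. Every unit of discrepancy moves to a strictly lower atom of $\nu$, except for at most one jump from a window into the overflow region $(x+k,\infty)$. After that jump it lies above all later windows, so it can only move down again.
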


For discrete $\mu$ and $\nu$, define $\gamma^{A}_{k,\mu,\nu} = \lim_{r\to\infty} \gamma^{A}_{k,\widetilde{\mu}_r,\widetilde{\nu}_r} $ as per Theorem~\ref{th:discrete}. 
Although Proposition~\ref{prop:atomization} and Theorem~\ref{th:discrete} require $|\mathrm{supp}(\nu)|<\infty$, it is possible to handle cases where the measures have infinite support. For example, in cases where $\mu$ and $\nu$ have a support that is unbounded to the right, one does not have a natural starting point to construct $\gamma^{\mathrm{A}}_{k,\mu,\nu}$ and the atomization of $\nu$ would produce unbounded measures, but a possible workaround is by finding  
$t\in\mathbb{R}$ such that 
$\mu \vert_{[t,\infty)}   \leq \nu^{[-k]}\vert_{[t+k,\infty)} $, where $\nu^{[-k]}(A)=\nu(A+k)$ for every Borel set $A$ on $\mathbb{R}$. Then, it is certain that all atom locations in $[t,\infty)$ are able to be coupled so that they satisfy the objective $y-x \leq k$, specifically by coupling every $x$ to $y=x+k$. We continue by constructing the coupling leftward from $t$.  We illustrate this in the following example.
%\textcolor{magenta}{Add the trick for $\mathcal{D}_k$}

\begin{example}
\label{ex:poisson-2}
Consider $ \mu =$ Poisson$(1)$ and $\nu =$ Poisson$(2)$. Obviously, $\mu\leq_{\rm st} \nu$; we also note that $\mu\vert_{[3,\infty)} \leq \nu^{[-1]}\vert_{[4,\infty)}$.
    Suppose $k=1$. The transport plan $\gamma^{\rm A}_{1,\mu,\nu}$  solving \eqref{eq:submeasures-inf} is depicted in Figure~\ref{fig:Dex-poisson}, and we remark that for every $x\in\{3,4,\ldots\}$, mass at that location in $\mu$ is coupled to location $y=x+1$ in $\nu$. 
    We obtain $Q_1^{\inf}(\mu,\nu) = 0.0626$.
\end{example}

\begin{figure}[tb]
\centering
\begin{tikzpicture}[xscale=1.7, yscale=2.5]
	\node at (-1.5, 0.25) [rectangle] {$\mu = $ Poisson(1)};
	\node at (-1.5, 1) [rectangle] {$\nu = $ Poisson(2)};
	\draw[Black!40,  ->, thick] (-0.25,0) -- (7.25, 0);
			
	\foreach \x in {0,1,...,7}
   \draw[Black!40, thick] (\x cm,-2pt) node[below]{\x} -- (\x cm,2pt);

\foreach \x/\y in {
4/0.0902,
5/0.0361,
6/0.0120,
7/0.0034
}
\filldraw[LimeGreen] (\x-0.1, 1) -- (\x+0.1,1) -- (\x+0.1, \y+1) -- (\x-0.1,\y+1) -- (\x-0.1,1);

\foreach \x/\y in {
0/0.1353,
1/0.2707,
2/0.2707,
3/0.1804,
4/0.0613,
5/0.0153,
6/0.0031,
7/0.0005
}
\filldraw[Red] (\x-0.1, 1) -- (\x+0.1,1) -- (\x+0.1, \y+1) -- (\x-0.1,\y+1) -- (\x-0.1,1);

\foreach \x/\y in {
0/0.1353,
1/0.2707,
2/0.2707,
3/0.1804,
4/0.0902,
5/0.0361,
6/0.0120,
7/0.0034
}
\draw[Black, thick] (\x-0.1, 1) -- (\x+0.1,1) -- (\x+0.1, \y+1) -- (\x-0.1,\y+1) -- (\x-0.1,1);

\foreach \x/\y in {
0/0.3679,
}
\filldraw[LimeGreen] (\x-0.1, 0.25) -- (\x+0.1,0.25) -- (\x+0.1, \y+0.25) -- (\x-0.1,\y+0.25) -- (\x-0.1,0.25);

\foreach \x/\y in {
0/0.3053,
1/0.3679,
2/0.1839,
3/0.0613,
4/0.0153,
5/0.0031,
6/0.0005,
7/0.0001
}
\filldraw[Red] (\x-0.1, 0.25) -- (\x+0.1,0.25) -- (\x+0.1, \y+0.25) -- (\x-0.1,\y+0.25) -- (\x-0.1,0.25);

\foreach \x/\y in {
0/0.3679,
1/0.3679,
2/0.1839,
3/0.0613,
4/0.0153,
5/0.0031,
6/0.0005,
7/0.0001}
\draw[Black, thick] (\x-0.1, 0.25) -- (\x+0.1,0.25) -- (\x+0.1, \y+0.25) -- (\x-0.1,\y+0.25) -- (\x-0.1,0.25);

\foreach \xz/\yz/\xw/\yw in {
0.1/0.33/3/-0.25
}
\draw[LimeGreen](\xz,\yz+0.25) -- (\xw, \yw+1);

\foreach \xz/\yz/\xw/\yw in {
3/0.5/3.9/0.08,
3/0.5/4.9/0.018,
3/0.5/5.9/0.01,
3/0.5/6.9/0.003
}
\draw[LimeGreen, ->](\xz,\yz+0.25) -- (\xw, \yw+1);

\foreach \xz/\yz/\xw/\yw in {
7/0.0001/7.25/-0.55,
6/0.0005/7/0,
5/0.0031/6/0,
4/0.0153/5/0,
3/0.0613/4/0,
2/0.1839/3/0,
2/0.1839/2/0,
1.1/0.13/1.9/0.13,
1/0.3679/1/0,
0.1/0.12/0.9/0.24,
0/0.3679/0/0
}
\draw[Red, ->](\xz,\yz+0.25) -- (\xw, \yw+1);
			
	% \draw[Red, very thick, ->] (1.666, 1) -- (1.666, 3);
	% \draw[Red, very thick, ->] (1.666, 1) -- (2.333, 3);
	% \draw[Red, very thick, ->] (1.333, 1) -- (1.333, 3);
	% \draw[Red, very thick, ->] (1.333, 1) -- (2.666, 3);
	% \draw[Red, very thick, ->] (0.666, 1) -- (0.666, 3);
	% \draw[LimeGreen, very thick, ->] (0.666, 1) -- (3.666, 3);
	% \draw[Red, very thick, ->] (0.333, 1) -- (0.333, 3);
	% \draw[LimeGreen, very thick, ->] (0.333,1) -- (3.333,3); 
			
	% \filldraw[LimeGreen] (0,0.5) -- (0,0.75) -- (1, 0.75) -- (1,0.5) ;
	% \filldraw[Red] (1,0.5) -- (1,1) -- (2, 1) -- (2,0.5) ;
	% \filldraw[Red] (0,0.75) -- (0,1) -- (1, 1) -- (1,0.75) ;
	% \draw[Black, very thick] (0,0.5) -- (0,1) -- (2, 1) -- (2,0.5) -- (0,0.5) ;
		\end{tikzpicture}
        \caption{Transport plan $\gamma^{\rm A}_{1,\mu,\nu}$ from Example~\ref{ex:poisson-2}.}
        \label{fig:Dex-poisson}
\end{figure}

\subsection {Absolutely continuous measures}
\label{sec:cont}

We now let $\mu$ and $\nu$ be absolutely continuous measures, with survival functions $S_{\mu}$ and $S_{\nu}$, whose derivatives are $-f_{\mu}$ and $-f_{\nu}$; we call $f_{\mu}$ and $f_{\nu}$ the \textit{densities} of $\mu$ and $\nu$. 
We obtain the solution to \eqref{eq:submeasures-inf} as a limit case to solutions for the discrete case.
%As for the previous subsection, the case of \eqref{eq:submeasures-sup} is discussed in Appendix~\ref{sect:approx-sup}.  

Consider $\overline{\mu}_{s}$ and $\overline{\nu}_{s}$, for $s\in\mathbb{N}$, 
discretized versions of $\mu$ and $\nu$ defined by
\begin{align}
    \overline{\mu}_{s} \left(\left\{x+2^{-(s+1)}\right\}\right) &= 2^{-s} \inf\left\{ f_{\mu}(z) : z\in\left[x, x+2^{-s}\right)\right\}, \quad \text{for every $x\in\{ 2^{-s}z : z\in\mathbb{Z}\}$},  \label{eq:discretization-mu}\\
    \overline{\nu}_{s} \left(\left\{y+2^{-(s+1)}\right\}\right) &= 2^{-s} \sup\left\{ f_{\nu}(z) : z\in\left[y, y+2^{-s}\right)\right\}, \quad \text{for every $y\in\{ 2^{-s}z : z\in\mathbb{Z}\}$}. \label{eq:discretization-nu}
\end{align}
For a discretization level $s$, mass of interval $[x,x+{2}^{-s})$ in $\mu$, rounded down according to the smallest value of $f_{\mu}$ on that interval, is attributed to the middle point of the interval in $\overline{\mu}_s$; same for $\overline{\nu}_s$, but the amount of mass is rounded up according to the largest value of $f_{\nu}$ on the interval. Note how the intervals are designed such they divide in two going from discretization level $s$ to $s+1$, and thus no two discretization levels share points for their support.    
If $\mu\leq_{\rm st}\nu$, then 
 $\overline{\mu}_{s} \leq_{st} \overline{\nu}_{s}$ holds true for every $s\in\mathbb{N}$; indeed, for every $z \in\mathrm{supp}(\overline{\mu}_{s})\cup \mathrm{supp}(\overline{\nu}_{s})$,
\begin{equation*}
    \overline{\mu}_{s}\left( \left[ z, \infty\right)\right) \leq \mu\left( \left[ z - 2^{-(s+1)}, \infty\right)\right) \leq \nu\left( \left[ z - 2^{-(s+1)}, \infty\right)\right) \leq \overline{\nu}_{s}\left( \left[ z , \infty\right)\right). 
\end{equation*}

% Consider $\overline{\mu}^{\dagger}_{\varepsilon}$ and $\overline{\nu}_{\varepsilon}$, for $\varepsilon>0$, 
% discretized versions of $\mu$ and $\nu$ defined by
% \begin{align*}
%     \overline{\mu}_{\varepsilon}^{\dagger} \left(\left\{x+\frac{\varepsilon}{2}\right\}\right) &= \varepsilon \inf\left\{ f_{\mu}(z) : z\in\left[x, x+\varepsilon\right)\right\}, \quad \text{for every $x\in\{ \ldots, -2\varepsilon, -\varepsilon, 0, \varepsilon, 2\varepsilon, \ldots \}$}, \\
%     \overline{\nu}_{\varepsilon} \left(\left\{y+\frac{\varepsilon}{2}\right\}\right) &= \varepsilon \inf\left\{ f_{\nu}(z) : z\in\left[y, y+\varepsilon\right)\right\}, \quad \text{for every $y\in\{ \ldots, -2\varepsilon, -\varepsilon, 0, \varepsilon, 2\varepsilon, \ldots \}$},
% \end{align*}
% and let 
%  $\overline{\mu}_{\varepsilon}$ be given by $\overline{\mu}_{\varepsilon}((x,\infty)) = \min(\overline{\mu}_{\varepsilon}^{\dagger}((x,\infty)), \overline{\nu}_{\varepsilon}((x,\infty)))$
% to ensure that the ordering constraint continues to hold for every discretization step $\varepsilon$. 

\begin{proposition}
\label{prop:convergence}
    Let $\mu, \nu \in\mathcal{M}$ have respective densities $f_{\mu}$ and $f_{\nu}$, and assume that $\mu\leq_{\rm st}\nu$ and $\mathrm{supp}(\nu)$ is compact. If $f_{\mu}$ and $f_{\nu}$ are piecewise Lipschitz continuous, then, as $s\to \infty$, for any $k\geq 0$, $Q_k^{\inf}(\overline{\mu}_{s},\overline{\nu}_{s}) \to Q_k^{\inf}(\mu,\nu)$.  
\end{proposition}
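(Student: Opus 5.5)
The plan is to squeeze $Q_k^{\inf}(\overline{\mu}_s,\overline{\nu}_s)$ between $Q_k^{\inf}(\mu,\nu)$ up to errors that vanish as $s\to\infty$. For this I combine Theorem~\ref{th:tv}, which controls the effect of adding or removing mass, with a control of the effect of moving mass by at most $2^{-s}$. The piecewise Lipschitz assumption and the compactness of $\mathrm{supp}(\nu)$ give mass-defect bounds. Note that $\mathrm{supp}(\mu)$ is then also bounded above, and only finitely many cells carry $\nu$-mass.

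Let $\varepsilon_s=2^{-s}$. On each cell $[x,x+\varepsilon_s)$ where $f_\mu$ is Lipschitz with constant $L$, the gap $\int_{\text{cell}} f_\mu - \varepsilon_s\inf_{\text{cell}} f_\mu$ is at most $L\varepsilon_s^2$. There are only finitely many breakpoints, each contributing $O(\varepsilon_s)$. Hence $\mu(\R)-\overline{\mu}_s(\R)\to 0$, provided the relevant portion of $\mu$ lives on a compact set; otherwise I first truncate the left tail of $\mu$, which costs an arbitrarily small mass by Theorem~\ref{th:tv}. Likewise $\overline{\nu}_s(\R)-\nu(\R)=O(\varepsilon_s)$.

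Next I introduce intermediate measures to separate mass errors from location errors. Let $\mu_s^\flat\le\mu$ be the measure with density $\inf_{\text{cell}}f_\mu$ on each cell, and let $\nu_s^\sharp\ge\nu$ be the measure with density $\sup_{\text{cell}}f_\nu$ on each cell. Then $\overline{\mu}_s$ and $\overline{\nu}_s$ are obtained from these by collapsing each cell's mass to the cell midpoint, a move of distance at most $\varepsilon_s/2$. Theorem~\ref{th:tv} then gives
\[
0\le Q_k^{\inf}(\mu,\nu)-Q_k^{\inf}(\mu_s^\flat,\nu_s^\sharp)\le \delta_s\to0,
\]
once I verify $\mu_s^\flat\le_{\rm st}\nu_s^\sharp$, which follows from $\mu_s^\flat\le\mu\le_{\rm st}\nu\le\nu_s^\sharp$.

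The remaining step, which I expect to be the main obstacle, is comparing $Q_k^{\inf}(\mu_s^\flat,\nu_s^\sharp)$ with $Q_k^{\inf}(\overline{\mu}_s,\overline{\nu}_s)$. I transfer couplings in both directions, using the common cell structure. Given an optimal $\pi$ for one pair, which exists by Theorem~\ref{th:stosto}, I push it forward cellwise: each point is moved to its cell midpoint in the first direction, and each midpoint atom is spread uniformly-in-law over its cell in the other. The order constraint $y\ge x$ survives except for mass with $x$ and $y$ in the same cell. In the first direction it survives exactly, since midpoints preserve order across cells and equality within a cell. In the spreading direction, I can couple within a shared cell comonotonically, which keeps $y\ge x$ as long as the cell masses are compatible. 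Where compatibility fails, I discard the offending mass, which is bounded by the cell mass defects, and pay for it through Theorem~\ref{th:tv}.

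The objective indicator $\id_{\{y-x>k\}}$ can only change on pairs with $y-x\in(k-\varepsilon_s,k+\varepsilon_s]$. So the two values differ by at most $\pi(\{|y-x-k|\le\varepsilon_s\})$ plus $o(1)$ mass terms. To control this band I plan to replace $k$ by $k\pm\varepsilon_s$ and use monotonicity of $k\mapsto Q_k^{\inf}$, reducing the problem to continuity of $k\mapsto Q_k^{\inf}(\mu,\nu)$. Right-continuity should follow from attainment (Theorem~\ref{th:stosto}) and monotone convergence. Left-continuity, where the band's mass could concentrate at $y-x=k$, is the delicate part. I would attack it by perturbing an optimal coupling: shift a small fraction of $\nu$-mass slightly, using absolute continuity of $\nu$, so that the optimizer puts no mass on $\{y-x=k\}$ with the $x$-marginal absolutely continuous. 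If that fails, I would instead use the dual formulation from Appendix~\ref{sect:duality}.
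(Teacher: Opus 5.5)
Your reduction is sound as far as it goes. The mass defects are handled by Theorem~\ref{th:tv} exactly as you say, and uniformly in the threshold, since $\mu_s^\flat\le\mu$ and $\nu\le\nu_s^\sharp$. This separation of mass errors from location errors is also the paper's strategy. But everything nontrivial sits in the step you leave open.

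Collapsing an optimizer of $(\mu_s^\flat,\nu_s^\sharp)$ to midpoints only gives
$Q^{\inf}_k(\overline{\mu}_s,\overline{\nu}_s)\le Q^{\inf}_{k-\varepsilon_s}(\mu_s^\flat,\nu_s^\sharp)\le Q^{\inf}_{k-\varepsilon_s}(\mu,\nu)$.
So the proposition hinges on left-continuity of $k\mapsto Q^{\inf}_k(\mu,\nu)$, and for this you offer only a tentative plan with a fallback.

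As stated, your reduction is false at $k=0$. We have $Q^{\inf}_{k'}(\mu,\nu)=\mu(\R)$ for every $k'<0$, whereas $Q^{\inf}_0(\mu,\mu)=0$. Your two-sided band $\{|y-x-k|\le\varepsilon_s\}$ then contains the diagonal, whose mass does not vanish. At $k=0$, and in fact at every dyadic $k=m\varepsilon_s$, you need a sharper one-sided observation. If $x,y$ lie in cells $i\le j$ and $(j-i)\varepsilon_s>k$, then $y-x>(j-i-1)\varepsilon_s\ge k$, so collapsing creates no new pairs with $y-x>k$.

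Your perturbation target also cannot be met literally. Take $\mu=\mathrm{Uniform}(0,1)$ and $\nu=\mathrm{Uniform}(k,k+1)$. Then $Q^{\inf}_k=0$, and every optimizer has $Y-X=k$ almost surely, because $Y-X\le k$ and $\mathbb{E}[Y-X]=k$. So you must work with near-optimizers and pay for the perturbation.

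Your right-continuity sketch is also thin: attainment plus monotone convergence only yields the trivial inequality $\lim_{h\downarrow 0}Q^{\inf}_{k+h}\le Q^{\inf}_k$. You can avoid it altogether. Spread each cell-pair portion rigidly, $y=x+(j-i)\varepsilon_s$; this gives $Q^{\inf}_k(\mu_s^\flat,\nu_s^\sharp)\le Q^{\inf}_k(\overline{\mu}_s,\overline{\nu}_s)$ with no error term.

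The missing idea is a translation argument paid for by Theorem~\ref{th:tv}. This is also how the paper disposes of its band term, via $\nu\vee\nu^{[+2^{-s}]}$. The argument runs as follows.

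1. Fix $k>0$ and $0<h\le k/2$. Take an optimizer $\pi$ of $Q^{\inf}_k(\mu,\nu)$, let $B_h=\{k-h<y-x\le k\}$ and $\beta_h=P_2(\pi|_{B_h})$.
2. Replace $\pi|_{B_h}$ by its image under $(x,y)\mapsto(x,y-h)$. The moved pairs satisfy $y-x\in(k-2h,k-h]$, so they stay in $\mathbb{H}$ and no longer count at threshold $k-h$. The untouched part contributes exactly $\pi(\{y-x>k\})=Q^{\inf}_k(\mu,\nu)$.
3. The new second marginal exceeds $\nu$ by at most $(\beta_h^{[-h]}-\beta_h)_+$, where $\beta_h^{[-h]}(A)=\beta_h(A+h)$. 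The mass of this excess tends to $0$ for two reasons. First, $\pi(B_h\setminus\{y-x=k\})\to 0$. Second, translation is continuous in total variation for $P_2(\pi|_{\{y-x=k\}})$, which has a density because it is dominated by $\nu$.
4. Discard moved mass carrying that excess. What remains lies in $\mathcal{H}(\mu',\nu)$ with $\mu'\le\mu$ and $\mu(\R)-\mu'(\R)=o(1)$.
5. Theorem~\ref{th:tv} then gives $Q^{\inf}_{k-h}(\mu,\nu)\le Q^{\inf}_{k-h}(\mu',\nu)+o(1)\le Q^{\inf}_k(\mu,\nu)+o(1)$.

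With this left-continuity for $k>0$ and the dyadic observation for $k=0$, your outline becomes a complete proof.
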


\begin{figure}[tb]
    \centering
 	\begin{minipage}{0.70\textwidth}
		\centering
			\begin{tikzpicture}[xscale=1.8]
			 \node at (-1.1, 0.75) [rectangle] {$\mu = \mathrm{Uniform}(0,2)$};
          \node at (-1.1, 3.15) [rectangle] {$\nu = \mathrm{Uniform}(0,4)$};
			\draw[Black!40,  ->, thick] (-0.25,0) -- (4.25, 0);
			
			\foreach \x in {0,1,...,4}
			\draw[Black!40, thick] (\x cm,-2pt) node[below]{\x} -- (\x cm,2pt);
			
			\filldraw[Red] (2.333,3) -- (2.333,3.25) -- (3, 3.25) -- (3,3) ;
			\filldraw[Red] (1.666,3) -- (1.666,3.25) -- (2.666, 3.25) -- (2.666,3) ;
			\filldraw[Red] (1,3) -- (1,3.25) -- (3, 3.25) -- (3,3) ;
			\filldraw[Red] (0.666,3) -- (0.666,3.25) -- (1, 3.25) -- (1,3) ;
			\filldraw[LimeGreen] (3.666,3) -- (3.666,3.25) -- (4, 3.25) -- (4,3) ;
			\filldraw[Red] (0.333,3) -- (0.333,3.25) -- (1, 3.25) -- (1,3) ;
			\filldraw[LimeGreen] (3.333,3) -- (3.333,3.25) -- (4, 3.25) -- (4,3) ;
			\filldraw[Red] (0,3) -- (0,3.25) -- (1, 3.25) -- (1,3) ;
			\filldraw[LimeGreen] (3,3) -- (3,3.25) -- (4, 3.25) -- (4,3) ;
			\draw[Black, very thick] (0,3) -- (0,3.25) -- (4, 3.25) -- (4,3) -- (0,3) ;

			\draw[Red, very thick, ->] (1.666, 1) -- (2.333, 3);
			\draw[Red, very thick, ->] (1.333, 1) -- (1.666, 3);
			\draw[Red, very thick, ->] (0.666, 1) -- (0.666, 3);
			\draw[LimeGreen, very thick, ->] (0.666, 1) -- (3.666, 3);
			\draw[Red, very thick, ->] (0.333, 1) -- (0.333, 3);
			\draw[LimeGreen, very thick, ->] (0.333,1) -- (3.333,3);

			\filldraw[Red] (1.666,0.5) -- (1.666,1) -- (2, 1) -- (2,0.5) ;
			\filldraw[Red] (1.333,0.5) -- (1.333,1) -- (2, 1) -- (2,0.5) ;
			\filldraw[Red] (1,0.5) -- (1,1) -- (2, 1) -- (2,0.5) ;
			\filldraw[Red] (0.666,0.75) -- (0.666,1) -- (1, 1) -- (1,0.75) ;
			\filldraw[LimeGreen] (0.666,0.75) -- (0.666,0.5) -- (1, 0.5) -- (1,0.75) ;
			\filldraw[Red] (0.333,0.75) -- (0.333,1) -- (1, 1) -- (1,0.75) ;
			\filldraw[LimeGreen] (0.333,0.75) -- (0.333,0.5) -- (1, 0.5) -- (1,0.75) ;
			\filldraw[Red] (0,0.75) -- (0,1) -- (0.5, 1) -- (0.5,0.75) ;
			\filldraw[LimeGreen] (0,0.75) -- (0,0.5) -- (0.5, 0.5) -- (0.5,0.75) ;			
			\draw[Black, very thick] (0,0.5) -- (0,1) -- (2, 1) -- (2,0.5) -- (0,0.5) ;
		\end{tikzpicture}	
	\end{minipage}
    \caption{Transport plan $\gamma_{1,\mu,\nu}^{\rm A}$ from Example~\ref{ex:continuous}.}
    \label{fig:D-Unif}
\end{figure}

Note that we require $f_{\mu}$ and $f_{\nu}$ to be piecewise Lipschitz continuous to have some regularity in approximating $\mu$ and $\nu$ by $\overline{\mu}_{s}$ and $\overline{\nu}_{s}$ as we let $s$ become large; piecewise Lipschitz continuity, combined with compactness of $\mathrm{supp}(\nu)$, also ensures that $\overline{\nu}_{s}$ has finite total mass for every discretization level $s\in\mathbb{N}$. 
The construction in \eqref{eq:discretization-mu}--\eqref{eq:discretization-nu} not only permits to approximate the value of $Q_k^{\inf}(\mu,\nu)$ with measures on which Algorithm~\ref{algo:A} is applicable, by Propositions~\ref{prop:atomization}--\ref{prop:convergence}, applications of Algorithm~\ref{algo:A} will moreover yield a consistent coupling as $s$ tends to infinity. 

\begin{theorem}
\label{th:continuous}
    Let $\mu,\nu\in\mathcal{M}$ with $\mu\leq_{\rm st}\nu$. Suppose that densities $f_{\mu}$ and $f_{\nu}$ are piecewise Lipschitz continuous and that $\mathrm{supp}(\nu)$ is compact. For any $k\geq 0$, there exists $\gamma \in \mathcal{H}(\mu,\nu)$ such that $\gamma_{k,\overline{\mu}_{s},\overline{\nu}_{s}
    }^{\rm A} \to \gamma$ weakly as $s\to \infty$. 
\end{theorem}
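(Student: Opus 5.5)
Here $\gamma^{\rm A}_{k,\overline\mu_s,\overline\nu_s}$ is the discrete-measure plan defined after Theorem~\ref{th:discrete}; its hypotheses hold because $\overline\nu_s$ has finite support ($\mathrm{supp}(\nu)$ is compact) and $\overline\mu_s\le_{\rm st}\overline\nu_s$. I would prove the theorem in two steps: first compactness, then identification of the limit.

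\textbf{Compactness.} By Theorem~\ref{th:discrete}, each $\gamma_s:=\gamma^{\rm A}_{k,\overline\mu_s,\overline\nu_s}$ lies in $\mathcal H(\overline\mu_s,\overline\nu_s)$. Its support is therefore contained in $\mathbb H\cap (K_s\times K_s)$, where $K_s$ is the $2^{-s}$-neighbourhood of $\mathrm{supp}(\nu)$. Since $\mu\le_{\rm st}\nu$, $\mathrm{supp}(\mu)$ is bounded above. Any $\mu$-mass far to the left must be coupled to points of the compact $\mathrm{supp}(\nu)$, so the first marginals $\overline\mu_s\le\mu$ are uniformly tight: $\overline\mu_s(\R)\le\mu(\R)$, and the tails are controlled by those of $\mu$. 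The total masses $\gamma_s(\R^2)=\overline\mu_s(\R)\le\mu(\R)$ are bounded.

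Piecewise Lipschitz continuity of the densities (assume finitely many pieces on compacts, with the tail of $\mu$ handled by tightness) gives the following convergence. On each interval, $2^{-s}\inf f_\mu$ and $2^{-s}\sup f_\nu$ differ from the true interval masses by $O(2^{-2s})$, except on the $O(1)$ intervals containing breakpoints. Summing over the $O(2^s)$ intervals meeting a compact set gives $\overline\mu_s\to\mu$ and $\overline\nu_s\to\nu$ weakly. Moreover $\overline\mu_s(\R)\to\mu(\R)$ and $\overline\nu_s(\R)\to\nu(\R)$, with errors of order $O(2^{-s})$ plus tail terms.

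By Prokhorov's theorem (uniform mass bound plus tightness), every subsequence of $(\gamma_s)$ has a weakly convergent further subsequence. Any such limit $\gamma$ belongs to $\mathcal H(\mu,\nu)$, for three reasons. The first marginal passes to the limit, giving $P_1(\gamma)=\mu$. The inequality $P_2(\gamma_s)\le\overline\nu_s$ passes to the limit when tested against nonnegative continuous functions, giving $P_2(\gamma)\le\nu$. Finally, $\mathbb H$ is closed, so the support condition is preserved.

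\textbf{Uniqueness of the limit.} This is the main obstacle. Compactness alone yields only subsequential limits, whereas the theorem asserts convergence of the whole sequence. The natural approach is to exploit the dyadic nesting: $\overline\mu_{s+1}$ and $\overline\nu_{s+1}$ refine $\overline\mu_s$ and $\overline\nu_s$, and Algorithm~\ref{algo:A} is monotone in the sense that it scans $x$ from right to left and always assigns $x$ to the rightmost admissible $y$.

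I would describe $\gamma_s$ explicitly through its cumulative structure. Define $G_s(x,y)=\gamma_s([x,\infty)\times[y,\infty))$. Using the greedy rule together with the separability of Proposition~\ref{prop:separable}, one can write $G_s$ as a functional of $S_{\overline\mu_s}$ and $S_{\overline\nu_s}$ only. The split into the $\Gamma^{\le}$ part (a ``$y\in[x,x+k]$, largest-available'' matching) and the $\Gamma^{>}$ part (a ``largest remaining'' matching) should each give an antitone or comonotone-type formula in the survival functions. If this functional is continuous under uniform convergence of survival functions, which holds here since $\mu$ and $\nu$ are atomless, then $G_s$ converges pointwise. This identifies the limit uniquely, and the full sequence converges.

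If such a closed form proves hard to obtain, I would fall back on the following argument. By Theorem~\ref{th:tv}, any two subsequential limits attain $Q_k^{\inf}(\mu,\nu)$, since Proposition~\ref{prop:convergence} and lower semicontinuity of $\int\id_{\{y-x>k\}}$ on open sets apply. Both limits also share the greedy ``rightmost-available'' structure. A right-to-left exchange argument then shows that two couplings with this structure and the same marginals must coincide.
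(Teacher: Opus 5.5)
Your compactness step is essentially sound, with two small slips. The first coordinates of $\mathrm{supp}(\gamma_s)$ need not lie near $\mathrm{supp}(\nu)$. Also, $\overline{\mu}_s\le\mu$ fails setwise because $\overline{\mu}_s$ is atomic; tightness should instead come from $\overline{\mu}_s(\{x\})\le\mu([x-2^{-(s+1)},x+2^{-(s+1)}))$. The real problem is that compactness only yields subsequential limits in $\mathcal{H}(\mu,\nu)$. The actual content of the theorem is convergence of the whole sequence, and you leave that unproven.

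Plan A rests on a continuity claim that is false. The output of Algorithm~\ref{algo:A} is not continuous under uniform convergence of survival functions, even at atomless limits. Take $k=0$, $\mu$ with density $\id_{(0,1)}$ and $\nu$ with density $\id_{(0,1)}+\id_{(2,3)}$. Let $\mu_n$ have atoms of size $1/n$ at $i/n$, and $\nu_n$ atoms at $i/n$ and $2+i/n$ ($1\le i\le n$); the algorithm then returns the plan $y=x$. If instead the lower atoms of $\nu_n$ sit at $i/n+1/(2n)$, every $\mathcal{J}_i$ is empty and the algorithm returns $y=x+2$, with value $1$ instead of $0$. Both sequences converge to $(\mu,\nu)$ uniformly in survival functions. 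So atomlessness of the limit buys nothing, and a valid argument must exploit the specific grid structure of $\overline{\mu}_s,\overline{\nu}_s$ (shared midpoints, dyadic nesting).

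Plan B is only a sketch.
\begin{itemize}
\item You correctly get optimality of subsequential limits from Proposition~\ref{prop:convergence} and Portmanteau, but optimality does not identify the plan, since optimizers are not unique.
\item A ``rightmost-available structure'' is not defined for non-atomic plans. Nor is it shown to survive weak limits, where the dichotomy $y-x\le k$ versus $y-x>k$ is unstable.
\item Two subsequential limits need not even share their second marginal, since $\mathcal{H}(\mu,\nu)$ only imposes $P_2(\gamma)\le\nu$.
\end{itemize}

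The missing ingredient is a quantitative link between consecutive discretization levels. The paper separates support refinement from mass correction through $\widehat{\mu}^{(t)}_s,\widehat{\nu}^{(t)}_s$. It splits each atom of $\widehat{\gamma}^{(t)}_s$ into halves at $(x\pm2^{-(s+2)},y\pm2^{-(s+2)})$, at Prokhorov distance at most $2^{-(s+1)}$. It then argues that Algorithm~\ref{algo:A} at level $s+1$ keeps all mass within $2^{-(s+1)}$ of its previous position, except mass whose reassignment lowers the objective. This bounds $\sum_{j\ge s}d_{\rm P}(\widehat{\gamma}^{(t)}_j,\widehat{\gamma}^{(t)}_{j+1})$ by $3\cdot 2^{-s}+4\bigl(Q^{\inf}_k(\widehat{\mu}^{(t)}_s,\widehat{\nu}^{(t)}_s)-Q^{\inf}_k(\widehat{\mu}^{(t)},\widehat{\nu}^{(t)})\bigr)$. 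That bound vanishes uniformly in $t$ by the estimate in the proof of Proposition~\ref{prop:convergence}, so the sequence in $s$ is Cauchy. The limit in $t$ then follows from monotonicity of the kernel components: $\theta$ decreases in $\le_{\rm st}$ and $\tau$ in the cdf order. A diagonal argument gives $\overline{\gamma}_s=\widehat{\gamma}^{(s)}_s\to\gamma$. Without some estimate or monotonicity of this kind across levels, your argument establishes only that a subsequence converges.
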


Define $\gamma_{k,\mu,\nu}^{\rm A} = \lim_{s\to\infty} \gamma_{k,\overline{\mu}_{s}, \overline{\nu}_{s}}^{\rm A}$ as per Theorem~\ref{th:continuous}. We illustrate the discretization approach in the following example. 
%Because $c$ in \eqref{eq:nutz-inf} is lower semicontinuous, the infimum is attained in \eqref{eq:submeasures-inf}, see \citet[Theorem 5.10]{V09}. %Therefore, $\gamma_{k,\mu,\nu}^{\rm A}$ is a transport plan yielding $Q^{\inf}_k(\mu,\nu)$.

\begin{example}
\label{ex:continuous}
    Let $\mu =\mathrm{Uniform}(0,2)$ and $\nu = \mathrm{Uniform}(0,4)$, and thus $\mu\leq_{\rm st}\nu$. Suppose $k=1$; we examine $\gamma_{1,\mu,\nu}^{\rm A}$. By taking small discretization steps, and after atomization, we notice that, for each atom in $\mu$ at a given location $1<x\leq 2$, Algorithm~\ref{algo:A} couples it to the two largest atoms in $\nu$, at locations $y_1,y_2$, such that $y_i-x\leq 1$, $i\in\{1,2\}$. Then, for atoms in $\mu$ located between 0 and 1, only half of the atoms at a given location can be coupled in $\nu$ such that $x=y$, and Algorithm~\ref{algo:A} couples the other half to the largest uncoupled location in $\nu$. As we let the discretization level $s\to\infty$, we obtain that $\gamma^{\rm A}_{1, \mu,\nu}$ is the transport plan illustrated in Figure~\ref{fig:D-Unif}. 
    In green are the transport portions such that $y-x > 1$; in red are the ones such that $y-x \leq 1$. By Proposition~\ref{prop:convergence}, we have $Q_1^{\rm inf}(\mu,\nu) = 0.25$. 
\end{example}

The transport plan from Example~\ref{ex:continuous} shows that in general $\gamma^{A}_{k,\mu,\nu}$ is not Monge-type. A transport plan is Monge-type if the corresponding transport map is described by a deterministic kernel. 
As illustrated by the double arrows in Figure~\ref{fig:D-Unif}, the kernel transporting from $\mu$ to $\nu$ in the example is two-atomic for $x\in [0,1]$, namely $x\mapsto (\delta_x + \delta_{3+x})/2$ on that portion.% (see \cite{NW22}, as $\mathcal{E}_k(\mu,\nu)$ is the DL coupling in this particular case). Note that $\mathcal{E}_k$ is not the DL coupling in general; one easily notices how $\mathcal{E}_k(\mu,\nu)$ performs better than ${\rm dl}(\mu,\nu)$ in the discrete example. The transport $\mathcal{D}_k$ is not Monge-type in general either. 

%\section{Limitations for non-binary treatments}
\section{Non-binary treatments}
\label{sect:NONBINARY}

Hitherto, we only discussed binary treatments, that is, treatments that may   be applied either fully or not, without gradation. One may inquire about the case where a participant may receive partial treatment, producing new marginal response distributions to account for. While still interested in the probability of worthwhile effect of the full treatment, we may impose the coupling to be monotone with respect to the partial-treatment responses as well. 
%Some hurdles, however, interfere in extending Algorithms~\ref{algo:A} and~\ref{algo:B} to handle these added constraints. 
Example~\ref{ex:constraints} below shows, in a simple case, how this constrains the admissible couplings for $X$ and $Y$.

\begin{example}
\label{ex:constraints}
Suppose that, for a given treatment, participants' responses to the treatment are $\mathbf{y} = \{2,4\}$ and responses to its absence are $\mathbf{x} = \{0,2\}$. 
The coupling multiset maximizing $\mathbb{P}(Y-X > 3)$ is $\{(2,2),(0,4)\}$: no effect half of the time, and an effect of 4 the other half. Now, suppose that we also observed responses to a partial treatment,  $\mathbf{z}=\{1,3\}$. This additional information, paired with monotonicity of the treatment, dispels the possibility that the full treatment may have no effect, and thus leaves $\{(0,2),(2,4)\}$ as the only feasible coupling multiset for full treatment. 
This is illustrated in Figure~\ref{fig:limitations}.
\label{ex:limitations}
\end{example}
    
    \begin{figure}[tb]
     \centering
	\begin{minipage}{0.4\textwidth}
		\centering
		\begin{tikzpicture}
        	\node at (-1.5, 0.75) [rectangle] {$\mu$};
            	\node at (-1.5, 2) [rectangle] {$\eta$};
		\node at (-1.5, 3.25) [rectangle] {$\nu$};
			\draw[Black!40,  ->, thick] (-0.5,0) -- (4.5, 0);
			
			\foreach \x in {0,1,...,4}
			\draw[Black!40,  thick] (\x cm,-2pt) node[below]{\x} -- (\x cm,2pt);
			
			\filldraw (4,3.25) circle (0.1 cm);
			\filldraw (2,3.25) circle (0.1 cm);

            \filldraw[black!50] (1,2) circle (0.1 cm);
			\filldraw[black!50] (3,2) circle (0.1 cm);
			
			\draw[CornflowerBlue, very thick, ->] (0, 0.75) -- (4, 3.25);
			\draw[CornflowerBlue, very thick, ->] (2, 0.75) -- (2, 3.25);

			\filldraw (0,0.75) circle (0.1 cm);
			\filldraw (2,0.75) circle (0.1 cm);
		
		\end{tikzpicture}	
	\end{minipage}
	\begin{minipage}{0.4\textwidth}
		\centering
				\begin{tikzpicture}
			\draw[Black!40,  ->, thick] (-0.5,0) -- (4.5, 0);
			
			\foreach \x in {0,1,...,4}
			\draw[Black!40,  thick] (\x cm,-2pt) node[below]{\x} -- (\x cm,2pt);
			
			\filldraw (4,3.25) circle (0.1 cm);
			\filldraw (2,3.25) circle (0.1 cm);

             \draw[CornflowerBlue!50, very thick, dashed, ->] (3, 2) -- node[red]{$\times$} (2, 3.25);
             \draw[CornflowerBlue, very thick, ->] (1, 2) -- (2, 3.25);
                \draw[CornflowerBlue, very thick, ->] (3, 2) -- (4, 3.25);

            \filldraw[black!50] (1,2) circle (0.1 cm);
			\filldraw[black!50] (3,2) circle (0.1 cm);
			
			\draw[CornflowerBlue!50, very thick, dashed, ->] (2, 0.75) -- node[red]{$\times$}(1, 2);
			\draw[CornflowerBlue, very thick, ->] (2, 0.75) -- (3, 2);

           \draw[CornflowerBlue, very thick, ->] (0, 0.75) -- (1, 2);

			\filldraw (0,0.75) circle (0.1 cm);
			\filldraw (2,0.75) circle (0.1 cm);
		\end{tikzpicture}	
	\end{minipage}
	\caption{The coupling in the left panel is no longer valid if one requires monotonicity of treatment response with respect to the partial treatment (whose responses' distribution is $\eta$) as well, as illustrated in the right panel.}
	\label{fig:limitations}
\end{figure}

In Example~\ref{ex:limitations}, we only considered one degree of partial treatment, but one could have more, leading to more layers of marginal distributions for which monotonicity has to be satisfied in the coupling. Let $Z_1,\ldots,Z_h$ be the responses to each of $h\in\mathbb{N}$ degrees of partial treatment, in increasing order of intensity, and their distributions are denoted by $\eta_1,\ldots, \eta_h$. Aligned with the assumptions regarding full treatment, we suppose that $\eta_1,\ldots,\eta_h$ are fully identified. Monotone treatment response translates to $X\leq Z_1\leq \cdots \leq Z_h \leq Y$. For this to be possible, it must hold that $\mu \leq_{\rm st} \eta_1 \leq_{\rm st} \cdots \leq_{\rm st} \eta_h \leq_{\rm st} \nu$.  In the general setting, we allow $\eta_1,\ldots,\eta_h\in\mathcal{M}$ to be non-probability measures, and this stochastic ordering must also hold. %As a result, we have $\mu(\mathbb{R})\leq \eta_1(\mathbb{R}) \leq \dots \leq \eta_h(\mathbb{R}) \leq \nu(\mathbb{R})$.\footnote{To consider cases where the reverse ordering holds, one may apply the negation transformation to all distributions, as in the bivariate case.} 

While we do not directly examine the effect of partial treatment, it gives valuable information by restricting the set of admissible measures in $\mathcal{H}(\mu,\nu)$. For any $h\in\mathbb{N}$, define 
\begin{equation*}
    \mathbb{H}_h = \left\{(x,z_1,\ldots,z_h,y)\in \R^{h+2} : x \leq z_1 \leq \cdots \leq z_h \leq y \right\}. 
\end{equation*}
Let $\Theta_{h}$ be the set of finite measures on $\mathbb{R}^{h+2}$ supported on $\mathbb{H}_h$.  Define
\begin{equation*}
    \mathcal{T}(\mu, \eta_1,\ldots,\eta_h,\nu) = \{\theta\in\Theta_h: ~ P_X(\theta) = \mu,~ P_{Z_{\ell}}(\theta) \leq \eta_{\ell},~ \ell\in\{1,\ldots,h\},~ P_Y(\theta) \leq \nu\}.
\end{equation*}
For $\theta\in \Theta_h$, we let $P_{XY}(\theta)$ denote the pairwise marginal of $\theta$ for the first and last dimensions. The set of admissible measures satisfying the additional partial-treatment constraints is given by
\begin{equation*}
    \mathcal{H}(\mu,\nu\mid\eta_1,\ldots,\eta_h) = \{\pi\in\Pi : \pi = P_{XY}(\theta) ~\text{for some}~ \theta\in\mathcal{T}(\mu, \eta_1,\ldots,\eta_h,\nu)\}. 
\end{equation*}
%
%Let $\mathcal{H}(\mu,\nu \mid \eta_1,\ldots,\eta_n)$ denote this restricted set of admissible measures: Considering the set of measures on $\mathbb{R}^{h+2}$ supported on $\mathbb{H}_h$ with marginals $\mu,\eta_1,\ldots,\eta_h,\nu$, then $\mathcal{H}_h(\mu,\nu \mid \eta_1,\ldots,\eta_n)$ is the set of their bivariate marginals in the dimension corresponding to $\mu$ and $\nu$. 
%on reduced set of feasible couplings, that is
%\begin{equation*}
%    \mathcal{H}(\mu,\nu\mid \eta_1,\ldots,\eta_n) = \{(X,Y) : X\lawis \mu, Y\lawis \nu, Z_1\lawis\eta_1, \ldots,Z_n\lawis\eta_d ,~ Y\geq Z_d \geq \cdots \geq Z_1 \geq X \}
%\end{equation*}
%
The worst-case and best-case probabilities of worthwhile effect become
\begin{align}
    Q^{\rm inf}_k(\mu,\nu\mid \eta_1,\ldots,\eta_h)
    &=\inf\left\{\int \id_{\{(x,y)\in\mathbb{R}^2: y-x>k\}}\mathrm{d}\pi : \pi\in \mathcal{H}(\mu,\nu \mid \eta_1,\ldots,\eta_h)\right\}; \label{eq:partialtreatment-inf}\\
    Q^{\rm sup}_k(\mu,\nu \mid  \eta_1,\ldots,\eta_h)
    &=\sup\left\{\int \id_{\{(x,y)\in\mathbb{R}^2: y-x>k\}}\mathrm{d}\pi : \pi\in \mathcal{H}(\mu,\nu \mid \eta_1,\ldots,\eta_h) \right\}. \label{eq:partialtreatment-sup}
\end{align}
Because $\mathcal{H}(\mu,\nu \mid \eta_1,\ldots,\eta_h) \subseteq \mathcal{H}(\mu,\nu)$, we have the following elementary bounds for the solutions of \eqref{eq:partialtreatment-inf}--\eqref{eq:partialtreatment-sup}:
\begin{equation*}
    Q^{\inf}_k(\mu,\nu) \leq Q^{\rm inf}_k(\mu,\nu \mid \eta_1,\ldots,\eta_h) \leq Q^{\rm sup}_k(\mu,\nu\mid  \eta_1,\ldots,\eta_h) \leq Q^{\sup}_k(\mu,\nu).
\end{equation*}
Modifying Algorithms~\ref{algo:A} and~\ref{algo:B} to consider the partial-treatment responses, we hereby present Algorithms~\ref{algo:A-eta} and~\ref{algo:B-eta}. Our second main result is that Algorithm~\ref{algo:A-eta} solves the minimization problem in \eqref{eq:partialtreatment-inf} and that Algorithm~\ref{algo:B-eta} solves the maximization problem in \eqref{eq:partialtreatment-sup}. We first discuss the intuition behind each.

Algorithm~\ref{algo:A-eta} employs the same heuristics as Algorithm~\ref{algo:A} in identifying optimal atom locations in $\nu$ to which each atom in $\mu$ should be coupled. Additional steps during each of these couplings assess whether the partial-treatment constraints may be satisfied when this optimal coupling yields $y-x\leq k$. If it is not possible, the algorithm will recommend to take the atoms in $ \eta_1,\ldots,\eta_h$ and $\nu$ at the largest locations. If it is possible, that atom in $\nu$ at this location is selected along the rightmost atoms in $\eta_1,\ldots,\eta_h$ that are to the left of the upper-degree treatment. Indeed, the directionality of monotone-treatment-response constraints makes atoms in $\eta_1,\ldots,\eta_h$ at the larger locations the more restrictive. The idea is therefore that committing the larger-location atoms in $\eta_1,\ldots,\eta_h$ to the coupling of a given atom in $\mu$ is the lesser nuisance for the atoms that are yet to be coupled. The main difficulty is to prove that, by this choice, an atom in $\mu$ satisfying the objective cannot prevent more than one 
atom from satisfying it because of these committed atoms in $\eta_1,\ldots,\eta_h$. This is the main part of the argument in the proof of Theorem~\ref{th:Q-inf-eta} in Appendix~\ref{sect:proofs}. If the algorithm recommends the largest atom in $\nu$, there is no need to verify whether the ordering constraint may be satisfied: it is guaranteed to be possible from stochastic ordering. We present an application of Algorithm~\ref{algo:A-eta} below.

Algorithms~\ref{algo:A-eta} and~\ref{algo:B-eta} differ more than just from the heuristic difference between Algorithms~\ref{algo:A} and~\ref{algo:B}; they also differ in how they verify that the partial-treatment constraint is satisfied. As stated above, the larger the location  of an atom in any of $\eta_1,\ldots,\eta_h$, the more restrictive it is in coupling $\mu$ to $\nu$. Algorithm~\ref{algo:A-eta} always tries to select the largest location for atoms in $\nu$, among either set of the atoms helping toward the objective or the others. Therefore, this selection mechanism flows in the same direction as the restrictiveness of the atoms in $\eta_1,\ldots,\eta_h$. Hence, we only need to check whether one atom in $\nu$, in the set of those meeting the objective, allows to satisfy the partial-treatment constraints; if it does not, the others in that set, at smaller locations, do not either. Algorithm~\ref{algo:B-eta}, by contrast, always aims to select atoms at smallest locations, whether it is among the set of those meeting the objective $y-x > k$ or the others, meeting $y\geq x$. Hence, its selection process is  opposite to the partial-treatment restrictiveness direction. As a result, if the favored atom in the set meeting the objective $y-x > k$ cannot satisfy the partial treatment constraint, the algorithm must check every other atom in that set, as they might satisfy it, before opting for the other set, for which the same process must be repeated. Algorithm~\ref{algo:B-eta} has therefore a higher degree of complexity than Algorithm~\ref{algo:A-eta}, from this opposite directionality.

\renewcommand{\thealgofloat}{$\mathrm{A}^{\eta}$}

\begin{algofloat}[t]
\centering

   \caption{}
    \fbox{
    \begin{minipage}{0.9\textwidth}
        %\textbf{Algorithm ${\rm A}^{\eta}$}\\
            \onehalfspacing
        For measures $\mu$, $\nu$ and $\eta_1, \ldots, \eta_h$, atomic with the same size $a>0$ and $\mu \leq_{\rm st} \eta_1 \leq_{\rm st} \dots \leq_{\rm st} \eta_h \leq_{\rm st} \nu$,
        let $x_1 \leq \cdots \leq x_m$ and $y_1 \leq \cdots \leq y_n$ be the ordered locations of atoms in $\mu$ and $\nu$ and, for each $\ell\in\{1, \ldots,h\}$, let $z_{\ell,1}\leq \cdots \leq z_{\ell, s_{\ell}}$ be the ordered locations of atoms in $\eta_{\ell}$. 
        \begin{itemize}
        \item[$\blacktriangleright$] Set $\mathcal{I}_m := \{1,\ldots,n \}$ and, for each $\ell \in \{1,\ldots,h\}$, set $\mathcal{S}_{\ell, m} := \{1,\ldots,s_{\ell} \}$.  
            \item[$\blacktriangleright$] For $i=m, m-1, \ldots, 1$, in  descending order, do:
            \begin{itemize}
                \item[$\triangleright$] Check whether $\mathcal{J}_i := \{j\in \mathcal{I}_i: x_i \leq y_j \leq x_i+k\}$ is empty. If it is non-empty, do:
                \begin{itemize}
                    \item[$>$] Set $d^*_i := \sup \mathcal{J}_i$.
                    \item[$>$] Set $c_{i,h} := \sup\{r\in\mathcal{S}_{h,i} : z_{h,r} \leq y_{d^*_i}\}$ (skip if the set is empty).
                    \item[$>$] For $\ell = h-1$, $h-2$, $\ldots$, 1, in descending order, do (skip if $h=1$): 
                    \begin{itemize}
                       \item[--] Set $c_{i,\ell} := \sup\{r\in \mathcal{S}_{\ell, i} : z_{\ell, r} \leq z_{\ell+1, c_{i,\ell+1}}\}$ (skip if the set is empty). 
                    \end{itemize}
                    \item[$>$] If $c_{i,1}$ was defined and, in which case, $z_{1,c_{i,1}}\geq x_i$, do: 
                    \begin{itemize}
                    \item[--] Set $d_i := d_i^*$.
                    \item[--] Define $\mathcal{S}_{\ell, i-1} := \mathcal{S}_{\ell, i}\backslash \{c_{i,\ell} \}$ for each $\ell\in\{1,\ldots,h\}$. 
                \end{itemize}
                \end{itemize}
                
                \item[$\triangleright$]  If $d_i$ is still undefined at this step, do: 
                \begin{itemize}
                    \item[$>$] Set $d_i := \sup \mathcal{I}_i$.
                    \item[$>$] Define $\mathcal{S}_{\ell, i-1} := \mathcal{S}_{\ell, i}\backslash \{\sup \mathcal{S}_{\ell, i}\}$ for each $\ell\in\{1,\ldots,h\}$.
                \end{itemize}
                \item[$\triangleright$] Define $\mathcal{I}_{i-1} := \mathcal{I}_i\backslash\{d_i\}$.
            \end{itemize}
            \item[$\blacktriangleright$] Return $Q^{\mathrm{A}^{\eta}}_k(\mu,\nu | \eta_1,\ldots,\eta_h) := a(\sum_{i=1}^m \id_{\{y_{d_i} - x_i > k\}})$. 
        \end{itemize}
    \end{minipage}
    }
    \label{algo:A-eta}
\end{algofloat}
% Assume that , or that $\mu(\mathbb{R})\geq \eta_1(\mathbb{R}) \geq \dots \geq \eta_h(\mathbb{R}) \geq \nu(\mathbb{R})$ and 
% but if neither of these orderings apply, the algorithm will not produce a coupling. (It will not leave some $x_i$'s out to accommodate $\eta$'s with lesser atoms.) 

\renewcommand{\thealgofloat}{$\mathrm{B}^{\eta}$}

\begin{algofloat}[t]
\centering

   \caption{}
    \fbox{
    \begin{minipage}{0.9\textwidth}
    \onehalfspacing
       For measures $\mu$, $\nu$ and $\eta_1, \ldots, \eta_h$, atomic with the same size $a>0$ and $\mu \leq_{\rm st} \eta_1 \leq_{\rm st} \dots \leq_{\rm st} \eta_h \leq_{\rm st} \nu$,
        let $x_1 \leq \cdots \leq x_m$ and $y_1 \leq \cdots \leq y_n$ be the ordered locations of atoms in $\mu$ and $\nu$ and, for each $\ell\in\{1, \ldots,h\}$, let $z_{\ell,1}\leq \cdots \leq z_{\ell, s_{\ell}}$ be the ordered locations of atoms in $\eta_{\ell}$. 
        \begin{itemize}
        \item[$\blacktriangleright$] Set $\mathcal{I}_m := \{1,\ldots,n \}$ and, for each $\ell \in \{1,\ldots,h\}$, set $\mathcal{S}_{\ell, m} := \{1,\ldots,s_{\ell} \}$.  
            \item[$\blacktriangleright$] For $i=m, m-1, \ldots, 1$, in  descending order, do:
            \begin{itemize}
                \item[$\triangleright$] For each element $j$ of $\mathcal{J}_i := \{j\in \mathcal{I}_i:  y_j \geq x_i\}$, do:
                \begin{itemize}
                    \item[$>$] Set $c_{i,j,h} := \sup\{r\in\mathcal{S}_{h,i} : z_{h,r} \leq y_j\}$ (skip if the set is empty).
                    \item[$>$] For $\ell = h-1$, $h-2$, $\ldots$, 1, in descending order, do (skip if $h=1$): 
                    \begin{itemize}
                       \item[--] Set $c_{i,j,\ell} := \sup\{r\in \mathcal{S}_{\ell, i} : z_{\ell, r} \leq z_{\ell+1, c_{i,j,\ell+1}}\}$ (skip if the set is empty).
                    \end{itemize}
                    \item[$>$] If $c_{i,j,1}$ was defined and, in which case, $z_{1,c_{i,j,1}}\geq x_i$, do:
                    \begin{itemize}
                        \item[--] If $y_j> x_i + k$, include $j$ in a set $\mathcal{J}_i^*$.
                        \item[--] Otherwise, include $j$ in a set $\mathcal{J}_i^{\prime}$.
                    \end{itemize} 
                \end{itemize}
                \item[$\triangleright$] Check whether $\mathcal{J}_i^{*}$ is empty.
                \item[$\triangleright$] If $\mathcal{J}_i^*$ is non-empty, set $d_i := \inf \mathcal{J}_i^*$.\\
                If $\mathcal{J}_i^*$ is empty, set $d_i := \inf \mathcal{J}_i^{\prime}$.
                \item[$\triangleright$]  Define $\mathcal{S}_{\ell, i-1} := \mathcal{S}_{\ell, i}\backslash \{c_{i,d_i, \ell}\}$ for each $\ell \in\{1,\ldots, h\}$.
                \item[$\triangleright$] Define $\mathcal{I}_{i-1} := \mathcal{I}_i\backslash\{d_i\}$.
            \end{itemize}
            \item[$\blacktriangleright$] Return $Q^{\mathrm{B}^{\eta}}_k(\mu,\nu | \eta_1,\ldots,\eta_h) := a(\sum_{i=1}^m \id_{\{y_{d_i} - x_i > k\}})$. 
        \end{itemize}
    \end{minipage}
    }
        \label{algo:B-eta}
\end{algofloat}

\begin{theorem}
\label{th:Q-inf-eta}
   Let $\mu,\eta_1,\ldots,\eta_h,\nu \in \mathcal{M}$ be atomic with the same size $a>0$, $\mu \leq_{\rm st} \eta_1 \leq_{\rm st} \dots \leq_{\rm st} \eta_h \leq_{\rm st} \nu$, and $k\geq 0$. It holds that
    \begin{enumerate}[label={\rm (\roman*)}, ref={\rm (\roman*)}]
        \item \label{item:A-eta}  $ Q^{\inf}_k(\mu,\nu\mid \eta_1,\ldots,\eta_h)=Q^{\mathrm{A}^{\eta}}_k(\mu,\nu\mid \eta_1,\ldots,\eta_h) $, the latter specified in 
    Algorithm~\ref{algo:A-eta};  
        \item \label{item:B-eta} $ Q^{\sup}_k(\mu,\nu\mid \eta_1,\ldots,\eta_h)=Q^{\mathrm{B}^{\eta}}_k(\mu,\nu\mid \eta_1,\ldots,\eta_h) $,  the latter specified in
    Algorithm~\ref{algo:B-eta}. 
    \end{enumerate} 
    Moreover, $Q^{\inf}_k(\mu,\nu \mid \eta_1,\ldots,\eta_h)$ and $Q^{\sup}_k(\mu,\nu \mid \eta_1,\ldots,\eta_h)$ take values in the set  $\{0,a,2a,\dots, m a\}$,
     where   $m$ is in \eqref{eq:mu-a-m}.
\end{theorem}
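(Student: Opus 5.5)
The plan is to reduce the problem to couplings of atoms and then argue optimality of each algorithm by an exchange argument.

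\textbf{Reduction to chains of atoms.} First I would show that, for atomic measures of common size $a$, the infimum in \eqref{eq:partialtreatment-inf} and the supremum in \eqref{eq:partialtreatment-sup} may be taken over \emph{atomic chains}: multisets of $m$ tuples $(x_i,z_{1,r_1(i)},\ldots,z_{h,r_h(i)},y_{d_i})$ lying in $\mathbb{H}_h$, where each atom of each $\eta_\ell$ and of $\nu$ is used at most once. Any $\theta\in\mathcal{T}(\mu,\eta_1,\ldots,\eta_h,\nu)$ corresponds, after rescaling by $a^{-1}$, to a point of a transportation-type polytope with integer right-hand sides. Its constraints are consecutive bipartite ``sub-assignment'' layers, namely $x\to z_1\to\cdots\to z_h\to y$, with flow conservation at every intermediate node and capacity $1$. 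Such a network-flow polytope is totally unimodular. The objective $\int\id_{\{y-x>k\}}\,\d\pi$ is linear in the flow, once the flow is lifted to path decompositions; equivalently, one adds the $(x,y)$ layer via a path formulation restricted to chains. Hence the optimum is attained at an integral vertex. This gives both the ``Moreover'' claim, since values lie in $\{0,a,\ldots,ma\}$, and the reduction. If the path lifting is awkward, I would instead use a direct decomposition argument: decompose $\theta$ into chains via the Birkhoff--von Neumann-type splitting of each layer, and note that the value is an average of integer chain values.

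\textbf{Feasibility and optimality of Algorithm~\ref{algo:A-eta}.} I would first check, by induction on $i$ descending, that the algorithm always produces a valid chain. In the fallback branch, taking the largest remaining atoms of every layer is feasible by the stochastic ordering $\mu\le_{\rm st}\eta_1\le_{\rm st}\cdots\le_{\rm st}\nu$ restricted to the remaining atoms. This is the same invariant as in Algorithm~\ref{algo:A}, and it is preserved because the removed atoms are maximal in each layer, or at least dominate $x_i$ along a chain. For optimality, I would take an optimal chain multiset $\Gamma^*$ that agrees with the algorithm on $x_m,\ldots,x_{i+1}$ for the largest possible number of steps, and modify it at step $i$ to agree there too without increasing the count of pairs with $y-x>k$. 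The failure-set case mirrors the proof of Theorem~\ref{th:Q-inf}\ref{item:A}: swapping $y$'s among atoms that fail anyway is harmless. The difficult case is when the algorithm picks $d_i^*$ together with the greedy $c_{i,\ell}$, the rightmost feasible atoms beneath $y_{d_i^*}$. There I would perform a layer-by-layer swap from $\ell=h$ down to $1$. The optimal chain for $x_i$ and the chains currently holding $y_{d_i^*}$ and the $z_{\ell,c_{i,\ell}}$ exchange tails. By maximality of the greedy choices, every displaced chain receives atoms at locations no larger, at least layerwise, than what it lost, and still at least its own $x$ because $x\le x_i$. So monotonicity is preserved, and at most one displaced chain can switch from ``$\le k$'' to ``$>k$''. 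That loss is compensated by $x_i$ switching to ``$\le k$''. This is the main obstacle: showing that the cascading exchanges touch only one other objective-satisfying chain. I would try to organize the swaps so that all displaced partial-treatment atoms are absorbed by a single chain, namely the one that previously owned $y_{d_i^*}$.

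\textbf{Algorithm~\ref{algo:B-eta}.} Here the argument runs symmetrically, with an exchange toward smallest $y$, again by induction on agreement with an optimal $\Gamma^*$. Because the algorithm scans all $j$ and keeps only those admitting a feasible greedy $z$-chain, feasibility is built in. Optimality follows as in Theorem~\ref{th:Q-inf}\ref{item:B}: replacing the partner of $x_i$ by the smallest admissible successful $y$, or by the smallest admissible $y$ if no successful one exists, frees larger atoms in every layer. These larger atoms can only help the remaining chains, which have smaller $x$, to satisfy both ordering and $y-x>k$. Together with the integrality established in the first step, this completes the proof.
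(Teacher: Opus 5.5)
\textbf{Step 1 (reduction to atomic chains) fails as stated.}

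The layer-flow polytope of $x\to z_1\to\cdots\to z_h\to y$ is indeed totally unimodular. However, the objective is not determined by the layer flows. Take $h=1$, atoms of size $a$ at $x\in\{0,1\}$, $z\in\{2,3\}$, $y\in\{4,5\}$, and $k=4$. Consider
$\theta=\frac{a}{2}(\delta_{(0,2,4)}+\delta_{(0,3,5)}+\delta_{(1,2,5)}+\delta_{(1,3,4)})$ and $\theta'=\frac{a}{2}(\delta_{(0,2,4)}+\delta_{(0,3,4)}+\delta_{(1,2,5)}+\delta_{(1,3,5)})$.
Both lie in $\mathcal{T}(\mu,\eta,\nu)$ and put mass $a/2$ on every $(x,z)$ edge and every $(z,y)$ edge. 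Yet $\theta$ gives mass $a/2$ to $\{y-x>4\}$, while $\theta'$ gives it $0$.

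If you pass to the chain (path) formulation, where the objective is linear, you get a multi-index axial assignment polytope. It is not integral even under the monotone-chain restriction. Any two chains in $\mathrm{supp}(\theta)$ that cover $x=0$ and $x=1$ share a $z$-atom or a $y$-atom, so $\theta$ is not a mixture of atomic chain systems.

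Your fallback, Birkhoff--von Neumann splitting layer by layer, uses only the layer flows. Those coincide for $\theta$ and $\theta'$, so it cannot reproduce both values.

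Hence neither attainment by atomic chains nor the ``Moreover'' claim is established. You need an argument that works on the objective itself. The paper uses a right-to-left kernel reattribution: it commits $x_i$ fully to one $y^*$, together with the top chain of partial-treatment atoms partially carried by $(x_i,y^*)$, and redistributes the rest proportionally.

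\textbf{Optimality of Algorithm~A$^\eta$ stops at its hard point.} You name the obstacle but only propose to try, and the mechanism you sketch does not work as described.
\begin{itemize}
\item The atoms $y_{d_i^*}$ and $z_{\ell,c_{i,\ell}}$ may sit in several different chains of $\Gamma^*$, not one.
\item Giving such a chain $(x',\ldots,z'_{\ell-1},z_{\ell,c_{i,\ell}},\ldots)$ the atom $z^*_\ell$ used by $x_i$, or $x_i$'s whole tail from layer $\ell$ up, can violate $z'_{\ell-1}\le z^*_\ell$.
\item If $x_i$'s chain in $\Gamma^*$ is already good, there is no gain to offset a displaced chain turning bad. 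You would then need that no displaced chain turns bad at all.
\end{itemize}
In the fallback branch, componentwise max/min uncrossing of two chains does the job: it preserves monotonicity and atom usage, $x_i$ stays bad, and the other chain's $y$ only decreases. In the good branch, the same uncrossing can push $x_i$'s $y$ above $x_i+k$.

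The paper avoids transforming an optimum into the greedy output. It inducts on $\mathcal{X}_{[i]}$ with $\Gamma_i=\Gamma_{i+1}\cup\{\text{new tuple}\}$.
\begin{itemize}
\item When the new tuple meets $y-x\le k$, optimality is immediate from that of $\Gamma_{i+1}$.
\item Otherwise, a competitor with the same count would need an extra admissible atom left free by $\Gamma_{i+1}$. The largest-admissible greedy choices rule this out.
\end{itemize}

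\textbf{Algorithm~B$^\eta$.} Your argument here is only a least-nuisance heuristic. It also misdescribes the algorithm: B$^\eta$ consumes the \emph{largest} admissible $z$-atoms below the chosen $y_j$, so it does not ``free larger atoms in every layer.''
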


\begin{figure}[tb]
\centering
	\begin{minipage}{0.70\textwidth}
		\centering
		\begin{tikzpicture}
            \node at (-1.5, 0.75) [rectangle] {$\mu$};
            	\node at (-1.5, 2) [rectangle] {$\eta_1$};
		          \node at (-1.5, 3.25) [rectangle] {$\eta_2$};
          \node at (-1.5, 4.5) [rectangle] {$\nu$};
			\draw[Black!40,  ->, thick] (-0.5,0) -- (9.5, 0);
			
			\foreach \x in {0,1,...,9}
			\draw[Black!40,  thick] (\x cm,-2pt) node[below]{\x} -- (\x cm,2pt);

            \draw[LimeGreen, very thick] (7, 3.25) -- (9, 4.5);
			\draw[Red, very thick] (5, 3.25) -- (6, 4.5);
			\draw[LimeGreen, very thick] (4, 3.25) -- (4, 4.5);
			\draw[Red, very thick] (2, 3.25) -- (2, 4.5);

            \draw[LimeGreen, very thick] (7, 2) -- (7, 3.25);
			\draw[Red, very thick] (4, 2) -- (5, 3.25);
			\draw[Red, very thick] (2, 2) -- (2, 3.25);
            \draw[LimeGreen, very thick] (1, 2) -- (4, 3.25);

            \draw[LimeGreen, very thick] (6, 0.75) -- (7, 2);
			\draw[Red, very thick] (4, 0.75) -- (4, 2);
			\draw[Red, very thick] (1, 0.75) -- (2, 2);
            \draw[LimeGreen, very thick] (0, 0.75) -- (1, 2);
            
			\filldraw (2,4.5) circle (0.1 cm);
			\filldraw (4,4.5) circle (0.1 cm);
			\filldraw (6,4.5) circle (0.1 cm);
			\filldraw (9,4.5) circle (0.1 cm);

			\filldraw[black!50] (2,3.25) circle (0.1 cm);
			\filldraw[black!50] (4,3.25) circle (0.1 cm);
			\filldraw[black!50] (5,3.25) circle (0.1 cm);
			\filldraw[black!50] (7,3.25) circle (0.1 cm);

            \filldraw[black!50] (1,2) circle (0.1 cm);
			\filldraw[black!50] (2,2) circle (0.1 cm);
			\filldraw[black!50] (4,2) circle (0.1 cm);
			\filldraw[black!50] (7,2) circle (0.1 cm);

			\filldraw (0,0.75) circle (0.1 cm);
			\filldraw (1,0.75) circle (0.1 cm);
			\filldraw (4,0.75) circle (0.1 cm);
			\filldraw (6,0.75) circle (0.1 cm);
		\end{tikzpicture}	
	\end{minipage}
 	\caption{Illustration of the coupling multiset produced by Algorithm~\ref{algo:A-eta} in Example~\ref{ex:D-eta}, wherein $k=2$. In green are the vectors such that $y-x > 2$; in red are the ones such that $y-x \leq 2$. }
	\label{fig:D-eta}
\end{figure}

\begin{example}
\label{ex:D-eta}
    Consider probability measures $\mu$ for no-treatment responses, $\eta_1$ and $\eta_2$ for two degrees of partial-treatment responses, and $\nu$ for full-treatment responses. Each consists of equiprobable atoms at respective locations $\mathbf{x} = \{0,1,4,6 \}$, $\mathbf{z}_1 = \{1,2,4,7\}$, $\mathbf{z}_2 = \{2,4,5,7\}$ and $\mathbf{y} = \{2,4,6,9\}$. Let $k=2$. Algorithm~\ref{algo:A-eta} starts the coupling process with the atom at $x_4=6$. According to Algorithm~\ref{algo:A}, the preferred atom in $\nu$ should be $y_3 = 6$; however, Algorithm~\ref{algo:A-eta} needs to verify whether it can satisfy the partial-treatment constraint. Looking down, selecting atoms in $\eta_1$ and $\eta_2$ at largest admissible locations, we obtain $z_{1,c_{4,1}} = 4 < 6$, so it is not possible to satisfy the constraint. Therefore, instead, Algorithm~\ref{algo:A-eta} couples atom at $x_4 = 6$ to atom at $y_4 = 9$, the largest location, and commits atoms at $z_{1,4} = 7$ and $z_{2,4} = 7$ to this couple, effectively creating tuple (6,7,7,9) for the joint distribution $\pi\in\mathcal{H}(\mu,\nu\mid\eta_1,\eta_2)$. Then it is the turn of the atom at $x_3 = 4$ to be coupled. The favored atom in $\nu$ according to Algorithm~\ref{algo:A} would be $y_3 = 6$, and, after verification, it satisfies the partial-treatment constraint, by committing atoms at $z_{1,3} = 4$ and $z_{2,3} = 5$, so the algorithm selects this atom. The coupling multiset thus constructed is depicted in Figure~\ref{fig:D-eta}. Algorithm~\ref{algo:A-eta} yields $Q_2^{\mathrm{A}^{\eta}}(\mu,\nu\mid\eta_1,\eta_2) = 0.5$; we remark that it is larger than $Q_2^{\rm A}(\mu,\nu)$, without the partial-treatment constraint. By Theorem~\ref{th:Q-inf-eta}, we have $Q^{\inf}_2(\mu,\nu\mid\eta_1,\eta_2) = 0.5$. 
\end{example}

% As a pendant to Algorithm~\ref{algo:A-eta}, we present Algorithm~\ref{algo:B-eta}, adapting Algorithm~\ref{algo:B} to consider the partial-treatment constraints. Algorithm~\ref{algo:B-eta} solves the problem in \eqref{eq:partialtreatment-sup}.

%\begin{theorem}
% \label{th:Q-sup-eta}
%    Let $\mu,\eta_1,\ldots,\eta_h,\nu \in \mathcal{M}$ be atomic with same size and stochastically ordered.  Then, for any $k\geq 0$, it holds that
%    $
%     Q^{\rm B^{\eta}}_k(\mu,\nu|\eta_1,\ldots,\eta_h) = Q^{\sup}_k(\mu,\nu|\eta_1,\ldots,\eta_h)
% $.
% \end{theorem}

\begin{figure}[tb]
\centering
	\begin{minipage}{0.70\textwidth}
		\centering
		\begin{tikzpicture}
        \node at (-1.5, 0.75) [rectangle] {$\mu$};
            	\node at (-1.5, 2) [rectangle] {$\eta$};
          \node at (-1.5, 3.25) [rectangle] {$\nu$};
			\draw[Black!40,  ->, thick] (-0.5,0) -- (9.5, 0);
			
			\foreach \x in {0,1,...,9}
			\draw[Black!40,  thick] (\x cm,-2pt) node[below]{\x} -- (\x cm,2pt);

            \draw[Red, very thick] (7, 2) -- (8, 3.25);
			\draw[Red, very thick] (5, 2) -- (5, 3.25);
			\draw[LimeGreen, very thick] (4, 2) -- (6, 3.25);
            \draw[LimeGreen, very thick] (2, 2) -- (3, 3.25);

            \draw[Red, very thick] (6, 0.75) -- (7, 2);
			\draw[Red, very thick] (4, 0.75) -- (5, 2);
			\draw[LimeGreen, very thick] (2, 0.75) -- (4, 2);
            \draw[LimeGreen, very thick] (0, 0.75) -- (2, 2);

			\filldraw (3,3.25) circle (0.1 cm);
			\filldraw (5,3.25) circle (0.1 cm);
			\filldraw (6,3.25) circle (0.1 cm);
			\filldraw (8,3.25) circle (0.1 cm);

            \filldraw[black!50] (2,2) circle (0.1 cm);
			\filldraw[black!50] (4,2) circle (0.1 cm);
			\filldraw[black!50] (5,2) circle (0.1 cm);
			\filldraw[black!50] (7,2) circle (0.1 cm);

			\filldraw (0,0.75) circle (0.1 cm);
			\filldraw (2,0.75) circle (0.1 cm);
			\filldraw (4,0.75) circle (0.1 cm);
			\filldraw (6,0.75) circle (0.1 cm);
		\end{tikzpicture}	
	\end{minipage}
	\caption{Illustration of the coupling multiset produced by Algorithm~\ref{algo:B-eta} in Example~\ref{ex:E-eta}, wherein $k=2$. In green are the vectors such that $y-x > 2$; in red are the ones such that $y-x \leq 2$. }
	\label{fig:E-eta}
\end{figure}

\begin{example}
\label{ex:E-eta}
Consider probability measures $\mu$  for no-treatment responses, $\eta$ for partial-treatment responses, and $\nu$ for full-treatment responses. Each comprises four atoms of the same size at respective locations $\mathbf{x}=\{0,2,4,6\}$, $\mathbf{z}=\{2,4,5,7\}$, and $\mathbf{y}=\{3,5,6,8\}$. Assume $k=2$. Algorithm~\ref{algo:B-eta} returns $Q_2^{\mathrm{B}^{\eta}}(\mu,\nu\mid\eta) = 0.5$; therefore, $Q_2^{\sup}(\mu,\nu\mid\eta) = 0.5$ by Theorem~\ref{th:Q-inf-eta}. In the beginning of the coupling process, when coupling the atom at $x_4=6$, Algorithm~\ref{algo:B-eta} selects atom at $y_4=8$ instead of the one at $y_3=6$, as would be recommended by Algorithm~\ref{algo:B}. The verification indeed yields $z_{c_{4,3}} = 4 < 6$, indicating that Algorithm~\ref{algo:B}'s choice would not allow to satisfy the partial-treatment constraints. The complete coupling multiset produced by Algorithm~\ref{algo:B-eta} is depicted in Figure~\ref{fig:E-eta}. Without the partial treatment, choosing to couple atoms at $x_4 = 6$ and $y_3 = 6$ would have allowed to couple atoms at $x_3 = 4$ and $y_4 = 8$, and atoms at $x_2 = 2$ and $y_2 = 5$, letting one more couple satisfy the objective $y-x> 2$; effectively, $Q_2^{\rm B}(\mu,\nu) = 0.75$.
\end{example}

By using the same atomization and discretization methods, we can also obtain results similar to those in Sections \ref{sec:discrete}--\ref{sec:cont} under partial-treatment constraints.  

The approach from Algorithms~\ref{algo:A-eta} and~\ref{algo:B-eta} does not extend to continuous treatments, where we would have a continuum of measures $\{\eta_u: u\in (0,1)\}$ to represent the degrees of partial treatment. Let us remark, however, that continuity of treatments conflicts with the assumption of fully identified marginal distributions (and no further structural assumptions) in practical applications: for continuous treatments, almost surely no two subjects will be administered the same level of treatment and thus only one point may be observed for each marginal distribution.

%\section{Numerical illustration}

\section{Conclusion}
\label{sect:CONCLUSION}

The probability of worthwhile effect for a treatment  requires, to be computable, identification of the joint distribution of treatment responses, unlike the average treatment effect, which is commonly employed but ill-suited to some contexts.  We examined the range of possible probabilities of worthwhile effect of a monotone-response treatment with the assumption that only marginal distributions are fully identifiable. With no further identification assumptions, Algorithms~\ref{algo:A} and~\ref{algo:B} yield the extremal probabilities of worthwhile effect for atomic distributions. Results for discrete and continuous distributions follow by convergence after atomization and discretization. Partial identification may come from responses to intermediate degrees of treatment, in which case Algorithms~\ref{algo:A-eta} and~\ref{algo:B-eta} produce the extremal probabilities of worthwhile effect. 
While we mainly grounded this paper in the statistical considerations of treatment effects, these results also transpose to other frameworks; they notably solve extant problems in optimal transport and risk aggregation under model uncertainty.

\subsection*{Acknowledgments}
 Ruodu Wang is supported by the Natural Sciences and Engineering Research Council of Canada (RGPIN-2018-03823, RGPAS-2018-522590) and Canada Research Chairs (CRC-2022-00141).

\appendix

% \section{Algorithms}

\setcounter{table}{0}
\setcounter{figure}{0}
\setcounter{equation}{0}
\renewcommand{\thetable}{S.\arabic{table}}
\renewcommand{\thefigure}{S.\arabic{figure}}
\renewcommand{\theequation}{S.\arabic{equation}}

\setcounter{theorem}{0}
\setcounter{proposition}{0}
\renewcommand{\thetheorem}{S.\arabic{theorem}}
\renewcommand{\theproposition}{S.\arabic{proposition}}
\setcounter{lemma}{0}
\renewcommand{\thelemma}{S.\arabic{lemma}}

\setcounter{corollary}{0}
\renewcommand{\thecorollary}{S.\arabic{corollary}}

\setcounter{remark}{0}
\renewcommand{\theremark}{S.\arabic{remark}}
\setcounter{definition}{0}
\renewcommand{\thedefinition}{S.\arabic{definition}}
\setcounter{example}{0}
\renewcommand{\theexample}{S.\arabic{example}}

\newpage
\begin{center}
\Large 
Supplementary Appendices
\\
``Probability of worthwhile effect of monotone-response treatments" 
\end{center}

~

There are six supplementary appendices. 
Appendix~\ref{sect:proofs}  presents all proofs.
Appendix~\ref{sect:dl}
explains the background on direction optimal transport.
Appendix~\ref{sect:CHEN} explains the corresponding robust risk aggregation problem to our main quantities of interest.
Appendix~\ref{sect:coupling-order} discusses an alternative ordering in the design of the main algorithms. 
Appendix~\ref{sect:approx-sup} 
presents the approximation arguments for the sup problem, complementing the results in Sections~\ref{sec:discrete}--\ref{sec:cont} stated for the inf problem.
Appendix~\ref{sect:duality} discusses duality based on some known results.

\section{Proofs}
\label{sect:proofs}

\subsection{Proof of Theorem~\ref{th:Q-inf}}

Let $\mathcal{X}= \{x_1\leq x_2\leq \dots \leq x_m\}$, $m\in\mathbb{N}$, be the multiset of the locations of the probability atoms in $\mu$, and $\mathcal{Y}= \{y_1\leq y_2\leq \dots \leq y_n\}$, $n\in\mathbb{N}$, be the multiset of the locations of the probability atoms in~$\nu$. Each of these atoms has size $a>0$. %As in the main text, we will write $x$ and $y$ (non indexed) to refer to generic elements of $\mathcal{X}$ and $\mathcal{Y}$, respectively; when referring to specific elements of either sets,  the symbols will be indexed. 

First, we claim that we can reduce our focus to couplings that are atomic with size $a$ (meaning that each atom in $\mu$ is coupled fully to only one atom in $\nu$) when arguing the optimality of the couplings produced by the algorithms. The last part of Theorem~\ref{th:Q-inf}'s statement follows directly from this observation.

For item~\ref{item:A}: Consider any $\pi_0\in\mathcal{H}(\mu,\nu)$ that is not atomic with size $a$; we can construct, as follows, a coupling $\pi_a$ atomic with size $a$ such that $\int \id_{\{y-x>k\}} \mathrm{d}\pi_a\leq \int \id_{\{y-x>k\}} \mathrm{d}\pi_0$. Let $K_0$ be the kernel describing $\pi_0$, meaning $\pi_0(\mathrm{d}x,\mathrm{d}y) = K_0(x,\mathrm{d}y)\mu(\mathrm{d}x)$. We proceed right to left in terms of atom locations $x_i\in\mathcal{X}$, $i\in\{m,\ldots,1\}$. Let $K_0^{[m+1]} = K_0$. At step $i$,   
couple $x_i$ to the largest $y\in\mathcal{Y}$, say $y^*$, such that $K_0^{[i-1]}(x,\{y^*\})>0$ and $y^*-x_i\leq k$, and if none of them satisfy this, simply the largest satisfying $K_0^{[i-1]}(x_i,\{y^*\})>0$. That is, we let $K_0^{[i]}(x_i,\cdot) = \delta_{y^*}$.  For all other $x\in\mathcal{X}$ such that $K_0^{[i-1]}(x,\{y^*\})>0$ as well, let \begin{equation}
K_0^{[i]}(x, \cdot) = K_0^{[i-1]}(x,\cdot)|_{\mathcal{Y}\backslash\{y^*\}} + \frac{K_0^{[i-1]}(x,\{y^*\})}{1-K_0^{[i-1]}(x_i,\{y^*\})}K_0^{[i-1]}(x_i,\cdot)|_{\mathcal{Y}\backslash\{y^*\}}; 
\label{eq:reattrib-inf}
\end{equation}
temporarily until their turn to be processed comes, so the coupling remains well-defined at every step; for the other $x\in\mathcal{X}$, let $K_0^{[i]}(x,\cdot) = K_0^{[i-1]}(x,\cdot)$.  
In other words, the locations impacted by the change of coupling in $x_i$ inherit, proportionally, the other locations to which $x_i$ was coupled. Note that because we proceed in this fashion right to left, all locations $x\in\mathcal{X}$ impacted by the change are smaller than $x_i$ (the larger ones have become atomic from this very process, at previous steps, and thus could not have been partially coupled to $y^*$). Because of this, it is clear that $\int \id_{\{y-x>k\}} K_0^{[i]}(x,\mathrm{d}y)\mu(\mathrm{d}x)\leq \int \id_{\{y-x>k\}} K_0^{[i-1]}(x,\mathrm{d}y)\mu(\mathrm{d}x)$: if $y^*-x_i\leq k$ then all of 
the portion of mass that newly leads the other altered $x$ to satisfy $y-x>k$ due to the alteration in coupling was already satisfying $y-x>k$ with $x_i$, because $y^*$ was taken to be the largest satisfying $y^*-x_i\leq k$; if $y^*-x_i>k$, then this means that all the mass from $\mu(\{x_i\})$ already belonged to $\{y-x>k\}$, and all other altered $x$ are now partially paired to smaller locations in $\mathcal{Y}$. 
Let $\pi_a$ be described by the kernel $K_0^{[1]}$; it is atomic with size $a$.    

For item~\ref{item:B}: Consider any $\pi_0^{\prime}\in\mathcal{H}(\mu,\nu)$ that is not atomic with size $a$; we can construct, as follows, a coupling $\pi_a^{\prime}$ atomic with size $a$ such that $\int \id_{\{y-x>k\}} \mathrm{d}\pi^{\prime}_a\geq \int \id_{\{y-x>k\}} \mathrm{d}\pi^{\prime}_0$. Let $K_0^{\prime}$ be the kernel characterizing $\pi_0^{\prime}$; as for the previous item, we transform through its kernel iteratively going from largest to lowest atom locations $x_i\in\mathcal{X}$, $i\in\{m,\ldots,1\}$. We start with $K_0^{\prime[m+1]} = K_0$. Then, for every $i\in\{m,\ldots,1\}$, at step $i$ let $K_0^{\prime[i]}(x_i,\cdot) = \delta_{y^*}$ where $y^*$ is the smallest $y\in\mathcal{Y}$ satisfying $K_0^{\prime[i-1]}(x_i,\{y\})>0$ and $y-x_i>k$, and if none does, the smallest $y\in\mathcal{Y}$ satisfying $K_0^{\prime[i-1]}(x_i,\{y\})>0$. For all other $x\in\mathcal{X}$ such that $K_0^{\prime [i-1]}(x,\{y^*\})>0$, let 
\begin{equation}
K_0^{\prime[i]}(x, \cdot) = K_0^{\prime[i-1]}(x,\cdot)|_{\mathcal{Y}\backslash\{y^*\}} + \frac{K_0^{\prime[i-1]}(x,\{y^*\})}{1-K_0^{\prime[i-1]}(x_i,\{y^*\})}K_0^{\prime[i-1]}(x_i,\cdot)|_{\mathcal{Y}\backslash\{y^*\}}; 
\label{eq:reattrib-sup}
\end{equation}
and for the other $x\in\mathcal{X}$, let $K_0^{[i]}(x,\cdot) = K_0^{[i-1]}(x,\cdot)$. Again, because we proceed right to left in $\mathcal{X}$, locations collaterally impacted at step $i$ are all smaller than $x_i$. From this, we can see that $\int \id_{\{y-x>k\}} K_0^{[i]}(x,\mathrm{d}y)\mu(\mathrm{d}x)\geq \int \id_{\{y-x>k\}} K_0^{[i-1]}(x,\mathrm{d}y)\mu(\mathrm{d}x)$: if $y^* - x_i > k$, then all the portion of mass that was previously coupling $x_i$ to other locations, and now couples the other altered $x$ to locations such that $y-x\leq k$, must already have been satisfying $y-x<k$ because $y^*$ was taken to be the smallest such that $y^* - x_i \leq k$; if $y^* - x_i \leq k$, then all other altered $x$ are then partially paired to larger locations $y\in\mathcal{Y}$. By letting $\pi_a^{\prime}$ be described by the kernel $K_0^{\prime[1]}$, we obtain the desired coupling that is atomic with size $a$.

% It is straightforward to see that $\mathcal{H}(\mu,\nu)$ is convex: for $\pi_1,\pi_2\in\mathcal{H}(\mu,\nu)$, $\lambda \pi_1 + (1-\lambda)\pi_2 \in\mathcal{H}(\mu,\nu)$ for all $\lambda\in(0,1)$. Because $\int \id_{\{(x,y)\in\mathbb{R}^2: y-x>k\}}\mathrm{d}\pi$ is linear in $\pi$, there must be some optimal couplings solving \eqref{eq:submeasures-inf} and \eqref{eq:submeasures-sup} that are extreme points of $\mathcal{H}(\mu,\nu)$, \textcolor{magenta}{and thus  }

%As stated in the main text, the algorithm is well-defined for general measures when $m\leq n$; we described above how a negation transformation of the random variables allow to employ the algorithm for the case $m>n$ so we will not argue it further. 

Next, 
we argue that the couplings produced by the algorithms are optimal among couplings that are atomic with size $a$.
Let $\mathcal{X}_{[i]}=\{x_i \leq \cdots \leq x_m\}\subseteq \mathcal{X}$, for $i\in\{1,\ldots,m\}$. Note that, as $i$ decreases, more atom locations are added to the left, so that $\{x_m\} = \mathcal{X}_{[m]} \subseteq \mathcal{X}_{[m-1]}  \subseteq \dots \subseteq \mathcal{X}_{[1]} = \mathcal{X}$. The algorithm is well-defined when changing $\mu$ to $\mu\vert_{\mathcal{X}_{[i]}}$. We write $\Gamma_\ell$ for the coupling multiset produced by the algorithm for $\mu\vert_{\mathcal{X}_{[\ell]}}$ and $\nu$, for $\ell\in\{i,i+1\}$. (This is $\Gamma^{\rm A}_{k,\mu\vert_{\mathcal{X}_{[\ell]}}, \nu}$ in the main text.) For both items~\ref{item:A} and~\ref{item:B}, we will prove the existence and optimality of the coupling multiset produced by the algorithm inductively on $i$: we prove for the base case $i=m$, and, by assuming it holds for $i+1$, we prove it for a general $i\in\{1,\ldots,m-1\}$. 

For item~\ref{item:A}: For the base step: Let $i=m$, then $\mathcal{X}_{[i]} = \{x_m\}$; existence and optimality are thus trivial. 
For the inductive step: Let $i\in\{1,\ldots,m-1\}$ and assume that the coupling multiset produced by the algorithm is valid and optimal for $\mu\vert_{\mathcal{X}_{[i+1]}}$ and $\nu$. Because the coupling process is made iteratively in decreasing locations of atoms in $\mu$, adding an atom leftmost will not change the coupled element of the other atoms, which are processed before. Therefore, $\Gamma_i = \Gamma_{i+1}\cup\{(x_i,y_d)\}$ where $d$ is determined by the algorithm.  
If $\Gamma_{i+1}$ is valid, then $\Gamma_i$ is valid as well: because $\mu\vert_{\mathcal{X}_{[i]}}((x,\infty)) \leq \nu((x,\infty))$ for all $x\in\mathbb{R}$, there is at least one uncoupled $y\in\mathcal{Y}$ such that $x_i\leq y$ and the algorithm will select among them. For optimality, let $\varphi_{\Gamma} = \sum_{(x,y)\in\Gamma} a\id_{\{y-x>k\}}$ for any coupling multiset $\Gamma$; we know that $\varphi_{\Gamma_{i+1}}$ is smallest for all coupling multisets of $\mathcal{X}_{[i+1]}$ and $\mathcal{Y}$, and we need to prove that $\varphi_{\Gamma_i}$ is  smallest for all coupling multisets of $\mathcal{X}_{[i]}$ and $\mathcal{Y}$. 

From the optimality of $\varphi_{\Gamma_{i+1}}$, we have that $\varphi_{\Gamma}\geq \varphi_{\Gamma_{i+1}}$ for all possible coupling multisets $\Gamma$ of $\mathcal{X}_{[i]}$ and $\mathcal{Y}$. For $\Gamma_i$, two situations may occur. First possibility: $d$ is such that $y_d - x_i \leq k$. Then $\varphi_{\Gamma_i} = \varphi_{\Gamma_{i+1}}$, and therefore $\varphi_{\Gamma_{i}}$ is smallest among all coupling multisets of $\mathcal{X}_{[i]}$ and $\mathcal{Y}$. Second possibility: $d$ is such that $y_d - x_i > k$. Then, $\varphi_{\Gamma_i} = \varphi_{\Gamma_{i+1}} + a$; we need to argue that no coupling multiset $\Gamma$ of $\mathcal{X}_{[i]}$ and $\mathcal{Y}$ could produce $\varphi_{\Gamma} = \varphi_{\Gamma_{i+1}}$. Suppose that there exists such a coupling multiset, say $\Gamma^{\dagger}$. By the optimality of $\varphi_{\Gamma_{i+1}}$, it must hold that $y_{d^{\dagger}} - x_i \leq k$ for $y_{d^{\dagger}}$ such that $(x_i,y_{d^{\dagger}})\in\Gamma^{\dagger}$; otherwise, we would have $\varphi_{\Gamma^{\dagger}\backslash \{(x_i, y_{d^{\dagger}})\}} = \varphi_{\Gamma_{i+1}} - a$. Therefore, $y_{d^{\dagger}}$ is an element of $\mathcal{Y}$ that was already coupled to another $x\in\Gamma_{i+1}$, that is $y_{d^{\dagger}}\in\{y:(x,y)\in \Gamma_{i+1}\}$. Indeed, if there was a then-uncoupled element of $\mathcal{Y}$ satisfying this, the algorithm would have selected it for $x_i$. Let $x^*$ be such that $(x^*, y_{d^{\dagger}})\in\Gamma_{i+1}$. Since, in $\Gamma^{\dagger}$, $x_i$ steals $y_{d^{\dagger}}$ from $x^*$, the location $x^*$ must be coupled to another $y\in\mathcal{Y}$. Because $x^*\geq x_i$, it must be that $y_{d^{\dagger}} - x^* \leq k$ as well. Hence, $x^*$ must find another element of $\mathcal{Y}$, say $y^*$, such that $y^*-x^* \leq k$, for $\varphi_{\Gamma^{\dagger}} = \varphi_{\Gamma_{i+1}}$ to hold. Since $\Gamma_{i+1}$ was constructed from the algorithm, $y_{d^{\dagger}}$ is the largest available $y\in\mathcal{Y}$ such that $y- x^*\leq k$. Hence, $y^* \leq y_{d^{\dagger}}$, but this $y^*$ must therefore also be coupled in $\Gamma_{i+1}$
by the same argument as before. 
Thus emerges a contradiction conflicting the existence of $\Gamma^{\dagger}$ as, from the arguments deployed, no $y\in\mathcal{Y}$ that was uncoupled in $\Gamma_{i+1}$ can be coupled in $\Gamma^{\dagger}$, but the latter requires one more element.

For item \ref{item:B}:
%The set up for this proof is the same as for the one of Theorem~1; we will reuse the notation. The argument for the proof is very similar in nature: for $\mu\vert_{\mathcal{X}_{[i]}}$, defined as therein, we prove existence and optimality of its coupling with $\nu$ as per Algorithm~\ref{algo:B} inductively on $i$. 
For the base step: Again, for $i=m$, we have $\mathcal{X}_{[i]} = \{x_m\}$, so existence and optimality are trivial. For the inductive step, we set $i\in\{1,\ldots,m-1\}$, and assuming that the coupling for $\mu\vert_{\mathcal{X}_{[i+1]}}$ and $\nu$ is valid and optimal, we argue that it is the case also for $\mu\vert_{\mathcal{X}_{[i]}}$ and $\nu$. As for Algorithm~\ref{algo:A}, Algorithm~\ref{algo:B} constructs the coupling iteratively, decreasingly in terms of elements of $\mathcal{X}$. Hence, adding an atom leftmost to $\mu\vert_{\mathcal{X}_[i+1]}$ will not change the coupled elements of the other atoms, which are processed before. Therefore, $\Gamma_i = \Gamma_{i+1}\cup\{(x_i, y_d)\}$, where $d$ is determined by the algorithm. Since $\Gamma_{i+1}$ is valid, $\Gamma_i$ is also valid, as the constraint $\mu\vert_{\mathcal{X}_{[i]}}((x,\infty)) \leq \nu((x,\infty))$ ensures that there remains at least one $y\in\mathcal{Y}$, uncoupled in $\Gamma_{i+1}$, such that $x_i\leq y$. This proves existence. For optimality, let $\psi_{\Gamma} = \sum_{(x,y)\in\Gamma} a\id_{\{y-x \leq k\}}$ for any coupling multiset $\Gamma$. The objective is to minimize $\psi_{\Gamma}$, so that $\sum_{(x,y)\in\Gamma} a\id_{\{y-x > k\}}$ is maximized, as the number of couples is fixed for compared coupling multisets. 

We have, from the inductive assumption, that $\psi_{\Gamma_{i+1}}$ is smallest for coupling multisets of $\mathcal{X}_{[i+1]}$ and $\mathcal{Y}$; we need to prove that $\psi_{\Gamma_i}$ is smallest for coupling multisets of $\mathcal{X}_{[i]}$ and $\mathcal{Y}$. It holds that $\psi_{\Gamma}\geq \psi_{\Gamma_{i+1}}$ for all possible coupling multisets $\Gamma$ of $\mathcal{X}_{[i]}$ and $\mathcal{Y}$. For $\Gamma_i$, two situations may occur. First situation: the $d$ selected by the algorithm is such that $y_d - x_i > k$. As a result, $\psi_{\Gamma_{i}} = \psi_{\Gamma_{i+1}}$ and $\psi_{\Gamma_i}$ is smallest for coupling multisets of $\mathcal{X}_{[i]}$ and $\mathcal{Y}$. Second situation: the $d$ selected by the algorithm is such that $y_d - x_i \leq k$, so $\psi_{\Gamma_i} = \psi_{\Gamma_{i+1}} + a$. We need to argue that no coupling multiset $\Gamma$ of $\mathcal{X}_{[i]}$ and $\mathcal{Y}$ could produce $\psi_{\Gamma} = \psi_{\Gamma_{i+1}}$. Suppose $\Gamma^{\dagger}$ is such a coupling multiset, and we will argue that it cannot exist. By optimality of $\psi_{\Gamma_{i+1}}$, it must hold that $y_{d^{\dagger}} - x_i > k$ where $y_{d^{\dagger}}\in \mathcal{Y}$ and $(x_i, y_{d^{\dagger}})\in\Gamma^{\dagger}$, because otherwise 
$\psi_{\Gamma^{\dagger}\backslash\{(x_i,y_{d^{\dagger}})\}} = \psi_{\Gamma_{i+1}} - a$, contradicting this optimality. We thus deduce that $y_{d^{\dagger}} \in \{y:(x,y)\in\Gamma_{i+1} \}$, it is already coupled in $\Gamma_{i+1}$. Indeed, if there was $y\in\mathcal{Y}$ satisfying $y-x_i> k$, the algorithm would have selected it as $y_d$. Let $x^*$ be such that $(x^*, y_{d^{\dagger}})\in\Gamma_{i+1}$. In $\Gamma^{\dagger}$, the location $x_i$ steals $y_{d^{ \dagger}}$ from $x^*$, so $x^*$ must be coupled to another element of $\mathcal{Y}$, say $y^*$. Now, recall that $(x^*, y_{d^{\dagger}})\in \Gamma_{i+1}$ was coupled by Algorithm~\ref{algo:B}. Two situations may have occurred. First, if $y_{d^{\dagger}} - x^* \leq k$, then elements $y$ of $\mathcal{Y}$ in $\Gamma_{i+1}$ such that $y-x^*> k$ or $x^*\leq y\leq y_{d^\dagger}$ were already coupled. Therefore, $y^*$ must either be stolen from another couple in $\Gamma_{i+1}$ or be an uncoupled $y\in[y_{d^{\dagger}}, k+x^*)$, the latter contradicting that there were no uncoupled $y_d$ such that $y_d - x_i > k$, as we assumed that $y_{d^{\dagger}} - x_i > k$.   
Second, if $y_{d^{\dagger}} - x^* > k$, then it must be that $y^* - x^*> k$ as well, so that $\psi_{\Gamma^{\dagger}} = \psi_{\Gamma_{i+1}}$. From the algorithm, all $y\in (x^*+k, y_{d^{\dagger}})\cap \mathcal{Y}$ are coupled in $\Gamma_{i+1}$, so $y^*$ must either be stolen from another couple in $\Gamma_{i+1}$ or be an uncoupled $y\in[y_{d^{\dagger}}, \infty)$. The latter contradicts that there were no uncoupled $y_d$ such that $y_d - x_i > k$, because we assumed that $y_{d^{\dagger}} - x_i > k$ . Combining these arguments, we have a contradiction for the construction of $\Gamma^{\dagger}$, as it does not allow an additional $y\in\mathcal{Y}$, uncoupled in $\Gamma_{i+1}$, to be coupled, although requiring one more element. \hfill $\square$

% Consider the same notation as in the proof of Theorem~\ref{th:Q-inf}.

% We prove the optimality of the coupling inductively on $m$ in a similar fashion; we add atoms in $\mu$ from largest location to smallest.

\subsection{Proof of Proposition~\ref{prop:separable}}

We have $Q^{\inf}_k(\mu\vert_{\Gamma^{\leq}},\nu\vert_{\Gamma^{\leq}}) = 0$ and thus it cannot be smaller.  
    If it were possible to reduce $Q^{\inf}_k(\mu\vert_{\Gamma^{>}},\nu\vert_{\Gamma^{>}})$, say with a coupling multiset $\Gamma^{\dagger}$, then the coupling multiset $\Gamma^{\dagger}\cup \Gamma^{\leq}$ would yield $Q^{\dagger}_k(\mu,\nu) < Q^{\rm A}_k(\mu,\nu)$, and this would contradict Theorem~\ref{th:Q-inf}. Because all couples in $\Gamma^{>}$ were constructed with the same principle of choosing the largest remaining location in $\nu$ for each atom in $\mu$, running the algorithm again on these atoms will produce the same result. Similarly, the coupling instruction for all couples in $\Gamma^{\leq}$ would be the same, that of choosing the largest remaining atom location such that $y-x\leq k$, and thus Algorithm~A produces the same coupling multiset on this restricted set of atoms also. A similar argument supports the claim for Algorithm~\ref{algo:B}. 
\hfill $\square$

\subsection{Proof of Theorem~\ref{th:stosto}}
\label{sect:lem-stosto}

This proof relies on Lemma~\ref{lem:construction}, stated below. 
    We first prove the attainability of the solution. If $\mu(\mathbb{R}) = \nu(\mathbb{R})$, then by the optimal transport representation in \eqref{eq:nutz-inf}, the fact that the cost function $c$ therein is lower semicontinuous, and \citet[Theorem 5.10]{V09}, the solution is attained. We nonetheless need to argue for the general case $\mu(\mathbb{R})\leq \nu(\mathbb{R})$. First, let us remark that $\mathcal{H}(\mu,\nu)$ is tight because, for every of its elements, the first marginal is fixed and the second marginal is dominated by $\nu$, which has finite total mass. Also, $\mathcal{H}(\mu,\nu)$ is weakly closed since all three of its defining properties are preserved under weak convergence: Consider any weakly converging sequence $\{\pi_j:j\in\mathbb{N}\}\subseteq\mathcal{H}(\mu,\nu)$, converging to $\pi\in\Pi$. Since $\mathrm{supp}(\pi_j)\subseteq \mathbb{H}$  for every $j\in\mathbb{N}$ and $\mathbb{H}$ is closed, then $\mathrm{supp}(\pi)\subseteq\mathbb{H}$. Since $P_1(\pi_j) = \mu$ and $P_1(\pi_j)\to P_1(\pi)$ weakly, then $P_1(\pi) = \mu$. Since $P_2(\pi_2) \leq \nu$ and $P_1(\pi_j)\to P_1(\pi)$ weakly, then $P_1(\pi) \leq \nu$.\footnote{The order $\leq$ is preserved under weak convergence: by Portmanteau's theorem, $\pi(A) \leq \liminf_{j->\infty} \pi_j(A) \leq \nu(A)$ for all open Borel set $A\subseteq\mathbb{R}^2$. } Combining tightness and weak closure yields weak compactness of $\mathcal{H}(\mu,\nu)$. 
    Consider any sequence $\{\pi_j: j\in\mathbb{N}\}\subseteq\mathcal{H}(\mu,\nu)$ such that $\liminf_{j\to\infty} \pi_j(\{y-x>k\}) = Q^{\inf}_k(\mu,\nu)$. By weak compactness, there exists a weakly converging subsequence $\{\pi_{j_\ell} : \ell \in\mathbb{N}\}$, converging to some $\pi\in\mathcal{H}(\mu,\nu)$. Because $\{(x,y)\in\mathbb{R}^2 : y-x>k\}$ is open, by Portmanteau theorem we have 
        \begin{equation*}
            \pi(\{y-x>k\}) \leq \liminf_{j\to\infty} \pi_j(\{y-x>k\}) = Q^{\inf}_k(\mu,\nu).
        \end{equation*}
    Also, $Q_{k}^{\inf}(\mu,\nu) \leq \pi(\{y-x>k\})$ since $\pi\in\mathcal{H}(\mu,\nu)$. Thus, $Q_k^{\inf}(\mu,\nu)$ is attained.

    % For the case $\mu(\mathbb{R}) < \nu(\mathbb{R})$, we argue by contradiction. Suppose the infimum is unattainable: because $(\mathbb{R}, \leq)$ is totally ordered and dense, this means that there is a sequence $\{\pi_1,\pi_2,\ldots \in \mathcal{H}(\mu,\nu)\}
    % $ approaching the optimal value, but $\lim_{i\to\infty} \pi_i \not\in \mathcal{H}(\mu,\nu)$. By Helly's selection theorem and because $P_2(\pi_i)\leq \nu$ for all $i\in\mathbb{N}$, there exists $\nu^*\in\mathcal{M}$ and a subsequence $P_2(\pi_{i_1}), P_2(\pi_{i_2}), \ldots$ such that $P_2(\pi_{i_j})\uparrow\nu^*$ as $j\to\infty$. Note that $\nu^*\leq\nu$, and hence $\mathcal{H}(\mu,\nu^*) \subseteq \mathcal{H}(\mu,\nu)$, and $\nu^*(\mathbb{R}) = \mu(\mathbb{R})$. %Since $\lim_{j\to\infty} \pi_{i_j}$ solves \eqref{eq:submeasures-inf} for $\mathcal{H}(\mu,\nu)$ and $\pi_{i_1}, \pi_{i_2}, \ldots \in \mathcal{H}(\mu,\nu^*)$, it must be that $\lim_{j\to\infty} \pi_{i_j}$ also solves \eqref{eq:submeasures-inf} but for $\mathcal{H}(\mu,\nu^*)$. 
    % Because $\mu(\mathbb{R}) = \nu^*(\mathbb{R})$ and the solution to \eqref{eq:submeasures-inf} is attained in that case, as argued just above, there must be some $\pi^{\dagger}\in\mathcal{H}(\mu,\nu^*)$ yielding the same value as the one approached by the sequence $\{\pi_1,\pi_2,\ldots \in \mathcal{H}(\mu,\nu)\}
    % $. But then, $\pi^{\dagger}\in\mathcal{H}(\mu,\nu)$ as well, contradicting that the solution could not be attained. 

We prove the second part of the claim. Define $\mathcal{M}_{\mu} = \{\theta \in\mathcal{M}:\theta(\mathbb{R}) = \mu (\mathbb{R})\}$, and note that, from the definition of $\mathcal{H}(\mu,\nu)$ in \eqref{eq:semicoupling}, it holds that $P_2(\pi)\in\Theta_{\mu}$ for all $\pi\in\mathcal{H}(\mu,\nu)$. 
The space $(\mathcal{M}_{\mu}, \leq_{\rm st})$ is a lattice; see \citet[Remark 1.A.18]{SS07}. 
Let 
\begin{equation*}
    \nu^* = \bigwedge_{\rm st}\{P_2(\pi) : \pi\in\mathcal{H}(\mu,\nu) \},
\end{equation*}
and it is well-defined because $(\mathcal{M}_{\mu}, \leq_{\rm st})$ is a lattice, $\leq_{\rm st}$ is closed under weak convergence, and a lower bound for $\nu^*$ is $\mu$.
Let us remark that, for any $\boldsymbol{\theta}=\{\theta_i:i\in\mathbb{N}\}\subseteq\mathcal{M}_{\mu}$, it holds that $\bigwedge_{\rm st} \boldsymbol{\theta} \leq \bigvee \boldsymbol{\theta}$; therefore, 
\begin{equation*}
    \nu^* =  \bigwedge_{\rm st}\{P_2(\pi) : \pi\in\mathcal{H}(\mu,\nu) \} \leq   \bigvee \{P_2(\pi) : \pi\in\mathcal{H}(\mu,\nu) \} \leq \nu. 
\end{equation*}
Thus, $\mathcal{H}(\mu,\nu^*) \subseteq \mathcal{H}(\mu,\nu)$. Because $\mu\leq_{\rm st} \nu^*$, the set $\mathcal{H}(\mu,\nu^*)$ is non-empty. 
Choose $\pi_1$ as an element of $\mathcal{H}(\mu,\nu^*)$ and $\pi_2$ as a coupling solving \eqref{eq:submeasures-inf} (recall that the solution is attained). 
The construction in Lemma~\ref{lem:construction} then yields a coupling $\pi^*\in\mathcal{H}(\mu,\nu)$ that satisfies $P_2(\pi^*) = \nu^*$, by Lemma~\ref{lem:construction}\ref{item:lem1-i2}, and also solves \eqref{eq:submeasures-inf}, by Lemma~\ref{lem:construction}\ref{item:lem1-i3}. 
\hfill$\square$

 Next we state and prove Lemma~\ref{lem:construction}. For any $\mu,\nu\in\mathcal{M}$ such that $\mu\leq_{\rm st}\nu$, define the reverse-DL-coupling as the unique $\pi^{\rm rev\text{-}dl}_{\mu,\nu}\in\mathcal{H}(\mu,\nu)$ 
 coupling, for every $x\in\mathbb{R}$, $\mu\vert_{(-\infty,x]}$ to $\nu_x$, the latter described by its cdf as such:
\begin{equation}
	F_{\nu_x} = \sup\{ F_\theta : \theta\in \mathfrak{S}^{\rm rev}_x\} \quad \text{where } \mathfrak{S}^{\rm rev}_x = \{\theta: \mu\vert_{(-\infty,x]} \leq_{\rm st} \theta \leq \nu\}. \label{eq:rev-dl}
\end{equation}
The difference with the DL-coupling introduced in~\cite{NW22} is that it is constructed by coupling $\mu$ to $\nu$ from left-to-right instead of right-to-left. On the DL-coupling, see Appendix~\ref{sect:dl}.
In the case where $\mu(\mathbb{R}) = \nu(\mathbb{R})$, the reverse-DL-coupling coincides with the comonotonic coupling.

       Consider $\mu,\nu\in\mathcal{M}$ such that $\mu\leq_{\rm st}\nu$ and any $\pi_1,\pi_2\in\mathcal{H}(\mu,\nu)$. Define the coupling $\pi^*$ as follows: 
\begin{enumerate}
\item Separate $\pi_2$ into $\pi_2 = \theta_2 + \tau_2$  where $\theta_2,\tau_2\in\mathcal{M}$, $\mathrm{supp}(\theta_2) \subseteq \{y-x\leq k\}$ and $\tau_2(\{y-x\leq k\}) = 0$; 
\item Define $\theta^* = \pi^{\rm rev\text{-}dl}_{P_1(\theta_2), P_2(\pi_1)\vee P_2(\pi_2)}$, which is the reverse-DL-coupling between $P_1(\theta_2)$ and $P_2(\pi_1)\vee P_2(\pi_2)$; 
\item Define $\tau^* = \pi^{\rm rev\text{-}dl}_{P_1(\tau_2), P_2(\pi_1) - P_2(\theta^*)\wedge P_2(\pi_1)}$, which is the reverse-DL-coupling between $P_1(\tau_2)$ and $P_2(\pi_1) - P_2(\theta^*)\wedge P_2(\pi_1)$;
\item Let $\pi^* = \theta^* + \tau^*$. 
\end{enumerate}

\begin{lemma}
    \label{lem:construction}
For the coupling $\pi^*$ above with $\mu,\nu\in\mathcal{M}$ such that $\mu\leq_{\rm st}\nu$ and   $\pi_1,\pi_2\in\mathcal{H}(\mu,\nu)$, the following holds:
        \begin{enumerate}[label={\rm (\roman*)}, ref={\rm (\roman*)}]
        \item \label{item:lem1-i1} $\pi^*\in\mathcal{H}(\mu,\nu)$; 
        \item \label{item:lem1-i2} $P_2(\pi^*) \leq_{\rm st} P_2(\pi_1)$;
        \item \label{item:lem1-i3} $\int \id_{\{y-x>k \}} \mathrm{d}\pi^* \leq \int \id_{\{y-x>k \}} \mathrm{d}\pi_2$.
        \end{enumerate}
\end{lemma}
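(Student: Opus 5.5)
The plan is to verify the three items in the order (iii), (i), (ii), because (iii) is almost immediate once we know where $\theta^*$ is supported. Write $\nu_1=P_2(\pi_1)$, $\nu_2=P_2(\pi_2)$, $\mu_\theta=P_1(\theta_2)$ and $\mu_\tau=P_1(\tau_2)$, so that $\mu=\mu_\theta+\mu_\tau$. Also $\nu_1(\mathbb{R})=\mu(\mathbb{R})$, since $\pi_1$ has first marginal $\mu$.

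\emph{Key step: $\theta^*$ is supported on $\{x\le y\le x+k\}$.} I would first show that the reverse-DL coupling never leaves the band $\{y-x\le k\}$ whenever some coupling of $\mu_\theta$ into the target measure does. Here $\theta_2$ witnesses this, since it couples $\mu_\theta$ into $\nu_2\le\nu_1\vee\nu_2$ inside the band. The reverse-DL coupling processes $x$ from left to right and sends the mass at $x$ to the smallest still-available $y\ge x$. The windows $[x,x+k]$ all have the same length, so their left and right endpoints are ordered in the same way. For such windows a greedy earliest-available assignment succeeds whenever any feasible assignment exists. I would prove this by a Hall-type exchange argument: if greedy sends some $x$ beyond $x+k$, then all available mass of $\nu_1\vee\nu_2$ in $[x',x+k]$ for a suitable $x'\le x$ is exhausted by $\mu_\theta|_{[x',x]}$ together with the mass that must be placed in that interval. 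This contradicts the existence of $\theta_2$, via a counting inequality. With this in hand, (iii) follows. Indeed $\theta^*$ charges $\{y-x>k\}$ with mass $0$, and $\tau^*$ charges it with at most $\tau^*(\mathbb{R})=\mu_\tau(\mathbb{R})=\tau_2(\mathbb{R})=\int\id_{\{y-x>k\}}\,\mathrm{d}\pi_2$.

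\emph{Item (i).} The support condition $y\ge x$ and the first marginal $\mu_\theta+\mu_\tau=\mu$ hold by construction, provided both reverse-DL couplings exist. For the second marginal, observe that $P_2(\theta^*)+\bigl(\nu_1-P_2(\theta^*)\wedge\nu_1\bigr)=P_2(\theta^*)\vee\nu_1\le\nu_1\vee\nu_2\le\nu$. Since $P_2(\tau^*)\le\nu_1-P_2(\theta^*)\wedge\nu_1$, this gives $P_2(\pi^*)\le\nu$. The real content is existence of $\tau^*$, that is, $\mu_\tau\le_{\rm st}\nu_1-P_2(\theta^*)\wedge\nu_1$, and I expect this to be the main obstacle. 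I would check it survival-function-wise at every $t$. The tool is the minimality property of the reverse-DL coupling: $P_2(\theta^*)$ places as little mass as possible to the right of any $t$. Hence $S_{P_2(\theta^*)\wedge\nu_1}(t)$ can be bounded by the part of $\nu_1$ that $\pi_1$ must already devote to $\mu_\theta$ on $(t,\infty)$. Combining this with $S_\mu(t)\le S_{\nu_1}(t)$, which holds because $\pi_1$ exists, should leave enough room for $\mu_\tau$. If the direct survival bound fails, I would instead build the coupling of $\mu_\tau$ explicitly by modifying $\pi_1$: reroute each piece of $\nu_1$ used by $\theta^*$ along a chain of swaps, in the spirit of the proof of Theorem~\ref{th:Q-inf}.

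\emph{Item (ii).} Decompose $P_2(\theta^*)=\bigl(P_2(\theta^*)\wedge\nu_1\bigr)+e$, where the excess $e$ sits where $\nu_2$ exceeds $\nu_1$, and note $e(\mathbb{R})$ equals the mass of $\nu_1-P_2(\theta^*)\wedge\nu_1$ left unused by $\tau^*$. Then $P_2(\pi^*)=\bigl(\nu_1-u\bigr)+e$, where $u$ is the unused part of $\nu_1$. It therefore suffices to show $e\le_{\rm st}u$. I would derive this from greediness again. The reverse-DL coupling $\theta^*$ takes mass of $\nu_2$ outside $\nu_1$ only when it is the smallest available, and $\tau^*$ also fills $\nu_1$ from the left. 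Consequently every unused atom of $\nu_1$ lies to the right of the excess mass that displaced it. Any residual imbalance would be handled by the same exchange argument as in the key step.
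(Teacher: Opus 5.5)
Your reduction of (ii) is exact: since $P_2(\pi^*)=\nu_1-u+e$ and $e(\mathbb{R})=u(\mathbb{R})$, item (ii) is equivalent to $e\le_{\rm st}u$. The gap is that nothing actually proves $e\le_{\rm st}u$.

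The sentence ``every unused atom of $\nu_1$ lies to the right of the excess mass that displaced it'' presupposes a pairing you never define, and the natural pairing does not locate $u$. The $\nu_1$-mass that an excess atom replaces stays in $R:=\nu_1-P_2(\theta^*)\wedge\nu_1$, and $\tau^*$ may consume it. For instance, take $k=1$, $\mu=\delta_4+\delta_{4.5}$, $\nu=\delta_5+\delta_6+\delta_7+\delta_{20}$, $\pi_1=\delta_{(4,6)}+\delta_{(4.5,7)}$ and $\pi_2=\delta_{(4,5)}+\delta_{(4.5,20)}$. Then the excess at $5$ replaces $6$, $\tau^*$ takes $6$, and $u=\delta_7$. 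So the ``residual imbalance'' is the whole content of (ii).

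Your instinct that the blocking-interval (Hall) argument from your key step is the right tool is correct; here is how to use it. Any semi-coupling of $\mu_\tau$ into $R$ on $\mathbb{H}$ keeps the mass of $\mu_\tau$ on $(s,\infty)$ inside $(s,\infty)$, and puts at most $R((s,t])$ of it in $(s,t]$. The greedy (reverse-DL) coupling attains this bound by the same blocking-interval argument, so
\[
S_{P_2(\tau^*)}(t)=\sup_{s\le t}\bigl\{\mu_\tau((s,\infty))-R((s,t])\bigr\}.
\]
Substituting $R$, item (ii) reduces to
\[
S_{P_2(\theta^*)}(t)+\bigl(P_2(\theta^*)\wedge\nu_1\bigr)((s,t])+\mu_\tau((s,\infty))\le\nu_1((s,\infty))\quad\text{for all } s\le t.
\]
Split $\pi_1=\theta_1+\tau_1$ over $\mu_\theta$ and $\mu_\tau$. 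The reduced inequality then follows from two facts:
\[
S_{P_2(\theta^*)}(t)+\bigl(P_2(\theta^*)\wedge\nu_1\bigr)((s,t])\le S_{P_2(\theta^*)}(s)\le S_{P_2(\theta_1)}(s)
\quad\text{and}\quad
\mu_\tau((s,\infty))\le S_{P_2(\tau_1)}(s).
\]
Note that the paper's own argument at this point asserts $S_{P_2(\tau^*)}-S_{P_2(\tau_1)}\le S_{P_2(\theta_1)}-S_{P_2(\theta^*)}$. That is just (ii) rearranged, so this step is missing there too.

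Two smaller remarks. In (i), the closing inequality is not $S_\mu\le S_{\nu_1}$. That only gives $S_{\mu_\tau}\le S_{\nu_1}-S_{\mu_\theta}$, and $S_{\mu_\theta}\le S_{P_2(\theta_1)}$ points the wrong way. What you need is $S_{\mu_\tau}\le S_{\nu_1}-S_{P_2(\theta_1)}=S_{P_2(\tau_1)}$, i.e.\ that $\tau_1$ is a valid coupling, exactly as in the paper.

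In (iii), your route to the band property of $\theta^*$ differs from the paper's. The paper uses three facts: the reverse-DL coupling is comonotone, $P_2(\theta^*)\le_{\rm st}P_2(\theta_2)$, and the comonotone coupling of $\mu_\theta$ and $P_2(\theta_2)$ stays in $\{0\le y-x\le k\}$ by the convex order. Hence $F^{-1}_{P_2(\theta^*)}\le F^{-1}_{P_2(\theta_2)}\le F^{-1}_{\mu_\theta}+k$. Your earliest-deadline/Hall argument is more elementary and needs no convex-order result. For non-atomic measures, however, you must make the greedy step and the blocking interval precise, whereas the quantile comparison covers general measures at once.
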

\begin{proof}
 For item~\ref{item:lem1-i1}: 
 It is straightforward to see that $P_1(\theta_2) \leq_{\rm st} P_2(\pi_1)\vee P_2(\pi_2)$, and hence the coupling in Step 2 of the construction is well defined. 
 Because we take the supremum over all cdfs in \eqref{eq:rev-dl}, $\nu_x$ is the smallest element in terms of $\leq_{\rm st}$ over the set $\mathfrak{S}_x^{\rm rev}$ for every $x\in\mathbb{R}$. 
Hence, 
\begin{equation}
P_2(\theta^*)\leq_{\rm st} P_2(\theta) \quad \text{for all }\theta\in\mathcal{H}(P_1(\theta_2), P_2(\pi_1)\vee P_2(\pi_2)).  \label{eq:smallest-theta-star}
\end{equation}
Separate $\pi_1$ into $\pi_1 = \theta_1 + \tau_1$ where $P_1(\theta_1) = P_1(\theta_2)$ and $P_1(\tau_1) = P_1(\tau_2)$; note that, because $\pi_1$ is a valid coupling, we have
\begin{equation}
\label{eq:p1-tau1}
    P_1(\tau_1) \leq_{\rm st} P_2(\pi_1) - P_2(\theta_1)\wedge P_2(\pi_1).
\end{equation}
Because $P_2(\theta^*)\leq_{\rm st} P_2(\theta_1)$ by \eqref{eq:smallest-theta-star}, it follows from \eqref{eq:p1-tau1} that $P_1(\tau_2) \leq_{\rm st} P_2(\pi_1) - P_2(\theta^*)\wedge P_2(\pi_1)$. The coupling in Step~3 of the construction is therefore well-defined. By definition of the reverse-DL-coupling above, the supports of both $\theta^*$ and $\tau^*$ are comprised in $\mathbb{H}$; thus, $\mathrm{supp}(\pi^*)\subseteq \mathbb{H}$. It remains to verify the marginal constraints. For the first marginal, 
\begin{equation*}
    P_1(\pi^*) = P_1(\theta^*) + P_1(\tau^*) = P_1(\theta_2) + P_1(\tau_2) = P_1(\pi_2) = \mu.
\end{equation*}
For the second marginal, 
\begin{equation*}
    P_2(\pi^*) = P_2(\theta^*) + P_2(\tau^*) \leq P_2(\theta^*) + P_2(\pi_1) - P_2(\theta^*)\wedge P_2(\pi_1) \leq P_2(\pi_1)\vee P_2(\pi_2) \leq \nu.
\end{equation*}
 For item~\ref{item:lem1-i2}: By \eqref{eq:smallest-theta-star}, $P_2(\theta^*) \leq_{\rm st} P_2(\theta_1)$, and this means 
 \begin{equation}
 \label{eq:survival-theta}
     S_{P_2(\theta_1)}(y) - S_{P_2(\theta_*)}(y) \geq 0 \quad  \text{for all }y\in\mathbb{R}.
 \end{equation}
 If $P_2(\tau^*)\leq_{\rm st}P_2(\tau_1)$, the claim follows directly. Otherwise, we remark that, since $\nu_x$ is the smallest element in terms of $\leq_{\rm st}$ over the set $\mathfrak{S}_x^{\rm rev}$ for every $x\in\mathbb{R}$,
\begin{equation}
P_2(\tau^*)\leq_{\rm st} P_2(\tau) \quad \text{for all }\tau\in\mathcal{H}(P_1(\tau_1), P_2(\pi_1) - P_2(\theta^*)\wedge P_2(\pi_1)).  \label{eq:smallest-tau-star}
\end{equation}
As $\tau_1 \in\mathcal{H}(P_1(\tau_1), P_2(\pi_1) - P_2(\theta_1))$, by \eqref{eq:smallest-tau-star} all the possible stochastic improvement for $\tau^*$ over $\tau_1$ must come as a reflection of a stochastic improvement of $\theta_1$ over $\theta^*$, that is, 
\begin{equation*}
    S_{P_2(\tau^*)}(y) - S_{P_2(\tau_1)}(y) \leq S_{P_2(\theta_1)}(y) - S_{P_2(\theta^*)}(y), \quad \text{for all }y\in\mathbb{R}. 
\end{equation*}
We have, by this and \eqref{eq:survival-theta}, 
\begin{equation*}
    S_{P_2(\pi_1)}(y) - S_{P_2(\pi^*)}(y) \geq S_{P_2(\theta_1)}(y) - S_{P_2(\theta^*)}(y) - \left(S_{P_2(\tau)^*}(y) - S_{P_2(\tau_1)}(y) \right) \geq 0, \quad \text{for all }y\in\mathbb{R}.    
\end{equation*}
 For item~\ref{item:lem1-i3}: By \eqref{eq:smallest-theta-star}, $P_2(\theta^*)\leq_{\rm st}P_2(\theta_2)$; also recall that $P_1(\theta^*) = P_1(\theta_2)$.  If $\theta_2(\mathbb{R}^2) = 0$, then $\int \id_{\{y-x>k \}} \mathrm{d}\pi_2 = \mu(\mathbb{R})$ and the claim holds directly. Thus, assume $\theta_2(\mathbb{R}^2)>0$. Note that $\theta^*$ and $\theta_2$ have the same total mass, so we may examine them in terms of distributions of random vectors: let
 \begin{equation}
     (X^*,Y^*) \lawis \theta^*/\theta^*(\mathbb{R}^2) \text{ and } (X,Y) \lawis \theta_2/\theta_2(\mathbb{R}^2). 
 \end{equation}
Note that $X^*$ has the same distribution as $X$, and that $Y^*\leq_{\rm st}Y$.
Because $\theta^*$ is the reverse-DL-coupling, $(X^*,Y^*)$ are comonotonic. Let $(X^{\rm co}, Y^{\rm co})$ be a comonotonic version of $(X,Y)$, meaning components have the same respective distributions, but the dependence relation tying the vector is comonotonicity. By \citet[Corollary 3.28]{R13} (and \citet[Theorem~3]{CW26} if the means of $X$ and $Y$ are infinite),\footnote{Note that if $(X,Y)$ is comonotonic, $(-X,Y)$ is countermonotonic.} we have 
\begin{equation}
    Y^{\rm co}-X^{\rm co} \leq_{\rm cx} Y - X, 
\label{eq:cx-lemma1}
\end{equation}
where $\leq_{\rm cx}$ is the convex order.
Since we have $\mathbb{P}(Y-X \leq k) = 1$ by definition of $\theta_2$, and $\mathbb{P}(Y-X\geq 0)$ from the monotone-response support constraint,
\begin{equation*}
    \essinf(Y^{\rm co} - X^{\rm co}) \geq  \essinf(Y - X) \geq 0\quad \text{and} \quad \esssup(Y^{\rm co} - X^{\rm co}) \leq  \esssup(Y - X) \leq k,
\end{equation*}
where the relations between $\essinf$'s and $\esssup$'s follow from \eqref{eq:cx-lemma1}; see \citet[p.~111]{SS07}. Therefore, $\mathbb{P}(Y^{\rm co}-X^{\rm co}\leq k) = 1$. By comonotonicity and since $Y^*\leq_{\rm st} Y^{\rm co}$ and $X^*\laweq X^{\rm co}$ (meaning $F^{-1}_{Y^*}(u) - F^{-1}_{X^*} (u) \leq F^{-1}_{Y^{\rm co}}(u) - F^{-1}_{X^*} (u)$ for every $u\in(0,1)$), we have $Y^* - X^* \leq_{\rm st} Y^{\rm co} - X^{\rm co}$. Hence, 
\begin{equation*}
    \mathbb{P}(Y^* - X^*\leq k) \geq \mathbb{P}(Y^{\rm co} - X^{\rm co}\leq k) = 1.  
\end{equation*}
This translates to $\theta^*(\{y-x\leq k\}) = \theta_2(\{y-x\leq k\})$. Finally, 
\begin{align*}
 \int \id_{\{y-x \leq k \}} \mathrm{d}\pi^* &- \int \id_{\{y-x\leq k \}} \mathrm{d}\pi_2 \\ &= \theta^*(\{y-x\leq k\}) - \theta_2(\{y-x\leq k\}) + \tau^*(\{y-x\leq k\}) - \tau_2(\{y-x\leq k\}) \\ &= \tau^*(\{y-x\leq k\}) \geq 0,
\end{align*}
as $\tau_2(\{y-x\leq k\}) = 0$ by definition. The desired result follows directly given $\pi^*(\mathbb{R}^2) = \pi_2(\mathbb{R}^2)$. 
\end{proof}

\subsection{Proof of Proposition~\ref{prop:comono}}
	Consider $X\lawis \mu$ and $Y\lawis \nu$, and let $Z=-X$. If $(Y,Z)$ is counter-monotonic, then $Y+Z$ is the smallest element with respect to the convex order $\leq_{\rm cx}$; see \citet[Corollary 3.28]{R13}, and \citet[Theorem~3]{CW26} if the means of $X$ and $Y$ are infinite. %\textcolor{magenta}{this holds true even if means are undefined, see Theorem~3 of \cite{CW25}, so we do not have to worry about that.} 
    This signifies that $Y-X$ is smallest with respect to $\leq_{\rm cx}$ when $(X,Y)$ is comonotonic. Note that comonotonicity will always satisfy the constraint $Y\geq X$, since 
    $\mu \leq_{\rm st} \nu$.  The function $x\mapsto\id_{\{x=0\}}$ is convex on $[0,\infty)$, the support of $Y-X$.\footnote{The fact that it is not convex for a larger interval has no incidence in the case of $\leq_{\rm cx}$, see \cite{WW25}.} Therefore, $\mathbb{E}[\id_{\{Y-X=0\}}] = \mathbb{P}(Y-X = 0)$ is smallest when $(X,Y)$ is comonotonic. Comonotonic coupling attributes all its mass to the set $\{(F^{-1}_{\mu}(u),F^{-1}_{\nu}(u)): u\in(0,1)\}$. The integral in the proposition's statement thus corresponds to the portion of mass obligated to be on $\{(x,y): x=y\}$ because of the monotone-response requirement.   \hfill $\square$

\subsection{Proof of Theorem~\ref{th:tv}} 
This proof requires Lemmas~\ref{lem:inf-equality}--\ref{lem:decompo} below. 
Since $\mu\geq \mu^{\prime}$, it is straightforward to see that every element $\pi$ of $\mathcal{H}(\mu,\nu)$ is of the form 
\begin{equation}
    \pi = \pi^* + \pi^{\dagger} ~~\text{for some }\pi^*\in\mathcal{H}(\mu^{\prime},\nu)\text{ and }\pi^{\dagger}\in\mathcal{H}(\mu - \mu^{\prime}, \nu). \label{eq:decompo-mu}
\end{equation}
 Hence, 
 \begin{align}
    &Q^{\inf}_k(\mu ,\nu)\notag \\
    &= \inf\left\{\int \id_{\{y-x>k\}} \,\mathrm{d}(\pi^* +  \pi^{\dagger}): \pi^* + \pi^{\dagger} = \pi,~ \pi\in\mathcal{H}(\mu,\nu),~ \pi^*\in\mathcal{H}(\mu^{\prime},\nu),~ \pi^{\dagger}\in\mathcal{H}(\mu - \mu^{\prime}, \nu) \right\}\notag\\
    &\geq \inf\left\{\int \id_{\{y-x>k\}} \,\mathrm{d}\pi^* : \pi^* \leq \pi,~ \pi\in\mathcal{H}(\mu,\nu),~ \pi^*\in\mathcal{H}(\mu^{\prime},\nu)  \right\}\notag\\
    &= Q^{\inf}_k(\mu^{\prime}, \nu),  \label{eq:TV-Q-inf-sideA}
\end{align}
where the second equality follows from Lemma~\ref{lem:inf-equality}.  Also, 
\begin{align}
    &Q^{\inf}_k(\mu ,\nu)\notag \\
    &= \inf\left\{\int \id_{\{y-x>k\}} \,\mathrm{d}(\pi^* +  \pi^{\dagger}): \pi^* + \pi^{\dagger} = \pi,~ \pi\in\mathcal{H}(\mu,\nu),~ \pi^*\in\mathcal{H}(\mu^{\prime},\nu),~ \pi^{\dagger}\in\mathcal{H}(\mu - \mu^{\prime}, \nu) \right\}\notag\\
    &\leq \inf\left\{\int \id_{\{y-x>k\}} \,\mathrm{d}\pi^* : \pi^* \leq \pi,~ \pi\in\mathcal{H}(\mu,\nu),~ \pi^*\in\mathcal{H}(\mu^{\prime},\nu)  \right\} + \mu(\mathbb{R}) - \mu^{\prime}(\mathbb{R})\notag\\
    &= Q^{\inf}_k(\mu^{\prime}, \nu) + \mu(\mathbb{R}) - \mu^{\prime}(\mathbb{R}), \label{eq:TV-Q-inf-sideB}
\end{align}
where the inequality comes from noting that $\pi^{\dagger} \in \mathcal{H}(\mu-\mu^{\prime},\nu)$ implies  $\pi^{\dagger}(\mathbb{R}) = \mu(\mathbb{R}) - \mu^{\prime}(\mathbb{R})$, 
and the second equality follows from Lemma~\ref{lem:inf-equality}.  
Combining \eqref{eq:TV-Q-inf-sideA} and \eqref{eq:TV-Q-inf-sideB} yields 
\begin{equation}
0\leq Q^{\inf}_k(\mu ,\nu) - Q^{\inf}_k(\mu^{\prime} ,\nu) \leq \mu(\mathbb{R}) - \mu^{\prime}(\mathbb{R}).
    \label{eq:TV-Q-inf-goalA}
\end{equation}
Next, because $\nu \leq \nu^{\prime}$, we have $\mathcal{H}(\mu^{\prime},\nu)\subseteq \mathcal{H}(\mu^{\prime},\nu^{\prime})$, and so 
\begin{equation}
Q^{\inf}_k(\mu^{\prime},\nu) \geq Q^{\inf}_k(\mu^{\prime}, \nu^{\prime}).
\label{eq:TV-Q-inf-sideC}
\end{equation}
% By Lemma~\ref{lem:decompo}, 
% \begin{equation}
%     \mathcal{H}(\mu^{\prime},\nu^{\prime}) = \bigcup_{\substack{ \mu_1,\mu_2 \in \mathcal{M}\\ \mu_1 + \mu_2 = \mu^{\prime}}} \{\pi_1 + \pi_2 : \pi_1 \in \mathcal{H}(\mu_1, \nu),~\pi_2 \in \mathcal{H}(\mu_2,\nu^{\prime} - \nu)  \}.\label{eq:decompo-nu}
% \end{equation} 
% We require $\mu_2 \leq_{\rm st} \nu^{\prime} - \nu$ for $\mathcal{H}(\mu_2, \nu^{\prime}-\nu)$ to be nonempty; this implies 
% \begin{equation}
% \mu^{\prime}(\mathbb{R}) - \mu_1(\mathbb{R}) \leq \nu^{\prime}(\mathbb{R}) - \nu(\mathbb{R}).\label{eq:mu-mu-nu-nu}
% \end{equation}
Moreover, 
\begin{align}
    &Q^{\inf}(\mu^{\prime}, \nu^{\prime})\notag\\
    &= \inf\left\{ \int \id_{\{y-x>k\}}\,\mathrm{d}(\pi_1 + \pi_2): \pi_1 \in \mathcal{H}(\mu_1,\nu),~ \pi_2 \in \mathcal{H}(\mu_2,\nu^{\prime} - \nu),~\mu_1+\mu_2=\mu^{\prime},~\mu_1,\mu_2\in\mathcal{M} \right\}\notag\\
    &\geq \inf\left\{ \int \id_{\{y-x>k\}}\,\mathrm{d}\pi_1 : \pi_1 \in \mathcal{H}(\mu_1,\nu),\mu_2\leq_{\rm st} \nu^{\prime}- \nu,~\mu_1+\mu_2=\mu^{\prime},~\mu_1,\mu_2\in\mathcal{M} \right\}\notag\\
    &= \inf\left\{Q^{\inf}_k(\mu_1,\nu) : \mu_2\leq_{\rm st} \nu^{\prime}- \nu,~\mu_1+\mu_2=\mu^{\prime},~\mu_1,\mu_2\in\mathcal{M} \right\}\notag\\
    &\geq Q^{\inf}_k(\mu^{\prime},\nu) -\nu^{\prime}(\mathbb{R}) + \nu(\mathbb{R}),\label{eq:TV-Q-inf-sideD}
\end{align}
where the first equality follows by Lemma~\ref{lem:decompo} and  the last inequality follows from \eqref{eq:TV-Q-inf-sideB} and that  $\mu_2 \leq_{\rm st} \nu^{\prime} - \nu$ implies $\mu^{\prime}(\mathbb{R}) - \mu_1(\mathbb{R}) \leq \nu^{\prime}(\mathbb{R}) - \nu(\mathbb{R})$. By \eqref{eq:TV-Q-inf-sideC} and \eqref{eq:TV-Q-inf-sideD}, 
\begin{equation}
    0 \leq Q^{\inf}_k(\mu^{\prime},\nu) - Q^{\inf}_k(\mu^{\prime},\nu^{\prime}) \leq \nu^{\prime}(\mathbb{R}) - \nu(\mathbb{R}).  
    \label{eq:TV-Q-inf-goalB}
\end{equation}
Combining \eqref{eq:TV-Q-inf-goalA} and \eqref{eq:TV-Q-inf-goalB} yields the desired result.

\begin{lemma}
\label{lem:inf-equality}
Consider $\mu,\nu,\mu^{\prime}\in\mathcal{M}$ such that $\mu\leq_{\rm st}\nu$ and $\mu \geq \mu^{\prime}$. It holds that
\begin{equation}
    \label{eq:inf-equality}
Q^{\rm inf}_k(\mu^{\prime}, \nu) = \inf\left\{\int \id_{\{y-x>k \}}\,\mathrm{d}\pi^{\prime} : \pi^{\prime}\leq \pi,~\pi\in\mathcal{H}(\mu,\nu),~\pi^{\prime}\in\mathcal{H}(\mu^{\prime},\nu)  \right\}.
\end{equation}
\end{lemma}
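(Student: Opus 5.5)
Write $R$ for the right-hand side of \eqref{eq:inf-equality}. The plan is to prove $R\ge Q^{\inf}_k(\mu',\nu)$ and $R\le Q^{\inf}_k(\mu',\nu)$ separately.

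The inequality $R\geq Q^{\inf}_k(\mu^{\prime},\nu)$ is immediate. Every $\pi'$ admissible in $R$ lies in $\mathcal{H}(\mu^{\prime},\nu)$, so the infimum defining $R$ runs over a subset of the set defining $Q^{\inf}_k(\mu^{\prime},\nu)$.

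For the reverse inequality, take an optimal $\pi^{\prime}\in\mathcal{H}(\mu^{\prime},\nu)$; it exists because the infimum is attained by Theorem~\ref{th:stosto}. It then suffices to extend $\pi'$, that is, to find $\pi^{\dagger}\in\mathcal{H}(\mu-\mu^{\prime},\nu-P_2(\pi^{\prime}))$. With such a $\pi^\dagger$, the measure $\pi=\pi^{\prime}+\pi^{\dagger}$ has first marginal $\mu$, second marginal at most $\nu$, and support in $\mathbb{H}$. Hence $\pi\in\mathcal{H}(\mu,\nu)$ and $\pi'\le\pi$, so $\pi'$ is admissible in $R$ and $R\le Q^{\inf}_k(\mu',\nu)$. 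Since $\mathcal{H}(\cdot,\cdot)$ is nonempty exactly under stochastic order \citep[Theorem 2.6.3]{MS02}, the extension exists if and only if $\mu-\mu^{\prime}\leq_{\rm st}\nu-P_2(\pi^{\prime})$. This condition can fail for an arbitrary optimal $\pi'$, for instance if $\pi'$ uses up large atoms of $\nu$ that $\mu-\mu'$ needs.

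The main step is therefore to modify $\pi'$ without raising its cost so that enough room remains in $\nu$ for $\mu-\mu'$. I would first try choosing $\pi^{\prime}$ so that $P_2(\pi^{\prime})$ is $\leq_{\rm st}$-minimal, as Theorem~\ref{th:stosto} allows. The catch is that a stochastically smaller second marginal does not by itself make the residual $\nu-P_2(\pi')$ large in the upper tail, since mass could still be removed from high locations.

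The more robust route is an exchange argument built on Lemma~\ref{lem:construction}. Fix any $\pi_0\in\mathcal{H}(\mu,\nu)$, which is nonempty since $\mu\leq_{\rm st}\nu$, and split it as $\pi_0=\pi_0^{\prime}+\pi_0^{\dagger}$ with first marginals $\mu^{\prime}$ and $\mu-\mu^{\prime}$, as in \eqref{eq:decompo-mu}. Then $\pi_0^{\dagger}$ witnesses $\mu-\mu^{\prime}\leq_{\rm st}\nu-P_2(\pi_0^{\prime})$. Next apply the construction preceding Lemma~\ref{lem:construction} to $\mu^{\prime}$ and $\nu$, taking $\pi_1=\pi_0^{\prime}$ and $\pi_2=\pi^{\prime}$ optimal. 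This gives $\pi^*\in\mathcal{H}(\mu^{\prime},\nu)$ that is still optimal by Lemma~\ref{lem:construction}(iii) and satisfies $P_2(\pi^*)\leq_{\rm st}P_2(\pi_0^{\prime})$ by Lemma~\ref{lem:construction}(ii).

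It then remains to show $\mu-\mu^{\prime}\leq_{\rm st}\nu-P_2(\pi^*)$, i.e.\ $S_{\mu-\mu'}(t)\le S_\nu(t)-S_{P_2(\pi^*)}(t)$ for all $t$. We know $S_{\mu-\mu'}\le S_\nu-S_{P_2(\pi_0')}$, so the inequality follows at once wherever $S_{P_2(\pi^*)}\le S_{P_2(\pi_0')}$. That is exactly the stochastic ordering from Lemma~\ref{lem:construction}(ii), provided both measures have total mass $\mu'(\mathbb{R})$ and the survival functions are taken on $(t,\infty)$. Both conditions hold, so the extension exists.

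The obstacle I expect is precisely this last step, because of strict versus non-strict tails and the fact that $\nu-P_2(\pi^*)$ must be a nonnegative measure. The latter holds because $P_2(\pi^*)\le\nu$ as measures by Lemma~\ref{lem:construction}(i), so the argument should close.
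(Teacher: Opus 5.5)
Your argument is correct and is essentially the paper's. The paper uses Theorem~\ref{th:stosto} to pick an optimal $\pi^*$ whose second marginal is $\leq_{\rm st}$-minimal over all of $\mathcal{H}(\mu',\nu)$, and then makes exactly your comparison $S_{\nu-P_2(\pi^*)}=S_\nu-S_{P_2(\pi^*)}\geq S_\nu-S_{P_2(\pi_0')}\geq S_{\mu-\mu'}$ against the $\mu'$-part of a feasible coupling. So the route you set aside as having a ``catch'' is the paper's route, and it works for the same reason your final step does; applying Lemma~\ref{lem:construction} directly with $\pi_1=\pi_0'$ is just the pairwise version of that selection, which lets you skip the lattice-infimum step in the proof of Theorem~\ref{th:stosto}.
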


\begin{proof}
By Theorem~\ref{th:stosto}, among the couplings solving $Q^{\inf}_k(\mu^{\prime}, \nu)$, one (say $\pi^*$) is such that $P_2(\pi^*) \leq_{\rm st} P_2(\pi^{\prime})$ for all $\pi^{\prime}\in\mathcal{H}(\mu^{\prime},\nu)$. Therefore, 
\begin{equation}
\nu - P_2(\pi^{\prime})\leq_{\rm st} \nu - P_2(\pi^*)~~\text{for all }\pi^{\prime}\in \mathcal{H}(\mu^{\prime},\nu).
\label{eq:nu-minus-p2pi}
\end{equation}
Note that the requirement in \eqref{eq:inf-equality} that $\pi^{\prime} \leq \pi$ for some $\pi\in\mathcal{H}(\mu,\nu)$ is the same as requiring $\pi-\pi^{\prime}\in \mathcal{H}(\mu - \mu^{\prime}, \nu - P_2(\pi^{\prime}))$ for some $\pi\in\mathcal{H}(\mu,\nu)$. 
Since $\mathcal{H}(\mu,\nu)$ is non-empty, $\mathcal{H}(\mu - \mu^{\prime}, \nu - P_2(\pi^{\prime}))$ must be non-empty for at least one $ \pi^{\prime}\in \mathcal{H}(\mu^{\prime},\nu)$. The necessary requirement for that is $\mu - \mu^{\prime} \leq_{\rm st} \nu - P_2(\pi^{\prime})$. By \eqref{eq:nu-minus-p2pi} and the transitivity of $\leq_{\rm st}$, it follows that $\mathcal{H}(\mu - \mu^{\prime}, \nu - P_2(\pi^*))$ is non-empty, and thus
 $\pi^*\in\{\pi^{\prime}\in\mathcal{H}(\mu^{\prime},\nu):\pi^{\prime}\leq \pi,~\pi\in\mathcal{H}(\mu,\nu)\}$. 
\end{proof}

\begin{lemma}\label{lem:decompo}
Consider $\mu^{\prime},\nu,\nu^{\prime}\in\mathcal{M}$ such that $\mu^{\prime}\leq_{\rm st}\nu$ and $\nu\leq \nu^{\prime}$. It holds that
\begin{equation}
  \mathcal{H}(\mu^{\prime},\nu^{\prime}) = \bigcup_{\substack{ \mu_1,\mu_2 \in \mathcal{M}\\ \mu_1 + \mu_2 = \mu^{\prime}}} \{\pi_1 + \pi_2 : \pi_1 \in \mathcal{H}(\mu_1, \nu),~\pi_2 \in \mathcal{H}(\mu_2,\nu^{\prime} - \nu)  \}.
    \label{eq:lem3-decompo}
\end{equation}
\end{lemma}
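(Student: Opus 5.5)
The plan is to prove the two inclusions in \eqref{eq:lem3-decompo} separately. The inclusion ``$\supseteq$'' is immediate. The inclusion ``$\subseteq$'' comes from splitting the second marginal of a given coupling and then disintegrating along $y$. The hypothesis $\mu'\le_{\rm st}\nu$ should only serve to make the sets nonempty; it plays no role in the argument itself.

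For ``$\supseteq$'', take $\mu_1+\mu_2=\mu'$, $\pi_1\in\mathcal H(\mu_1,\nu)$ and $\pi_2\in\mathcal H(\mu_2,\nu'-\nu)$, and set $\pi=\pi_1+\pi_2$. The support of $\pi$ is the union of the two supports, so it lies in $\mathbb H$. Marginals are additive, so $P_1(\pi)=\mu_1+\mu_2=\mu'$. Likewise $P_2(\pi)=P_2(\pi_1)+P_2(\pi_2)\le \nu+(\nu'-\nu)=\nu'$. Hence $\pi\in\mathcal H(\mu',\nu')$.

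For ``$\subseteq$'', fix $\pi\in\mathcal H(\mu',\nu')$ and write $\rho=P_2(\pi)\le\nu'$. I would split $\rho=\rho_1+\rho_2$ with $\rho_1=\rho\wedge\nu$ and $\rho_2=\rho-\rho\wedge\nu$, using the lattice operations on $\mathcal M$.

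The step needing care is showing $\rho_2\le\nu'-\nu$. I would verify it with Radon--Nikodym densities with respect to $\nu'$. Both $\rho$ and $\nu$ are dominated by $\nu'$, so they have densities $r,n\in[0,1]$. Then $\rho\wedge\nu$ has density $r\wedge n$, and $\rho_2$ has density $r-r\wedge n=(r-n)^+$. Since $r\le 1$, we get $(r-n)^+\le 1-n$, and $1-n$ is the density of $\nu'-\nu$. Trivially $\rho_1\le\nu$.

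Next, I would disintegrate $\pi$ with respect to its second marginal. This gives a probability kernel $K$ with $\pi(\mathrm dx,\mathrm dy)=K(y,\mathrm dx)\,\rho(\mathrm dy)$, which exists since $\pi$ is a finite Borel measure on $\R^2$. Define $\pi_i(\mathrm dx,\mathrm dy)=K(y,\mathrm dx)\,\rho_i(\mathrm dy)$ for $i=1,2$, and $\mu_i=P_1(\pi_i)$. Then $\pi_1+\pi_2=\pi$ and $\mu_1+\mu_2=\mu'$. Each $\pi_i\le\pi$, so $\mathrm{supp}(\pi_i)\subseteq\mathrm{supp}(\pi)\subseteq\mathbb H$. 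Finally $P_2(\pi_1)=\rho_1\le\nu$ and $P_2(\pi_2)=\rho_2\le\nu'-\nu$. Therefore $\pi_1\in\mathcal H(\mu_1,\nu)$ and $\pi_2\in\mathcal H(\mu_2,\nu'-\nu)$, which completes the inclusion.

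The main obstacle is the splitting of $\rho$ so that both pieces fit under $\nu$ and $\nu'-\nu$; the density argument above settles it. The disintegration step is standard.
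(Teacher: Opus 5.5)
Your proof is correct and rests on the same idea as the paper's: split $\pi$ according to its $y$-coordinate, using Radon--Nikodym densities with respect to $\nu^{\prime}$.

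The difference lies only in how the second marginal is divided. The paper uses the proportional split $\pi_1(\mathrm{d}x,\mathrm{d}y)=\frac{\mathrm{d}\nu}{\mathrm{d}\nu^{\prime}}(y)\,\pi(\mathrm{d}x,\mathrm{d}y)$ and $\pi_2=\pi-\pi_1$. The resulting second marginals have densities $nr\le n$ and $(1-n)r\le 1-n$, so neither the lattice minimum $\rho\wedge\nu$ nor a disintegration is needed.

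Your minimum-based split works equally well. Note also that your $\pi_i$ are simply $\pi$ reweighted by $y\mapsto\frac{\mathrm{d}\rho_i}{\mathrm{d}\rho}(y)$, so the disintegration step can be dropped.
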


\begin{proof}
Let $\mathcal{H}^*$ denote the right-hand side of \eqref{eq:lem3-decompo}. It is straightforward to see that $\mathcal{H}^*\subseteq \mathcal{H}(\mu^{\prime}, \nu^{\prime})$. We argue that $\mathcal{H}(\mu^{\prime},  \nu^{\prime}) \subseteq \mathcal{H}^*$ by showing that each of its components can be expressed as a member of $\mathcal{H}^*$. 
    Because $\nu \leq \nu^{\prime}$, then $\nu \ll \nu^{\prime}$ and the Radon-Nikodym derivative $\mathrm{d}\nu/\mathrm{d}\nu^{\prime}$ is well-defined. Consider any $\pi\in\mathcal{H}(\mu^{\prime},  \nu^{\prime})$. Define $\pi_1,\pi_2\in\Pi$ as follows
    \begin{equation*}
        \pi_1(A) = \int_A \frac{\mathrm{d}\nu}{\mathrm{d}\nu^{\prime}}(y) \pi(\mathrm{d}x,\mathrm{d}y); \quad  \pi_2(A) = \int_A \frac{\mathrm{d}(\nu^{\prime}-\nu)}{\mathrm{d}\nu^{\prime}}(y)\pi(\mathrm{d}x,\mathrm{d}y), \quad \text{for every Borel set }A\subseteq\mathbb{R}^2.
    \end{equation*}
     Letting $\mu_1 = P_1(\pi_1)$ and $\mu_2 = P_1(\pi_2)$, we have $\pi = \pi_1 + \pi_2$ and $\pi_1 \in \mathcal{H}(\mu_1, \nu)$, $\pi_2 \in \mathcal{H}(\mu_2,\nu^{\prime} - \nu)$. Note also that $\mu_1 + \mu_2 = \mu^{\prime}$. Hence, $\pi\in\mathcal{H}^*$.
\end{proof}

\subsection{Proof of Proposition~\ref{prop:atomization}}

From the construction of $\widetilde{\mu}_{r}$ and $\widetilde{\nu}_r$ in \eqref{eq:atomization-mu}--\eqref{eq:atomization-nu}, it is clear that $ \widetilde{\mu}_{r} \leq \widetilde{\mu}_{r+1} \leq \mu$ and $\widetilde{\nu}_r \geq \widetilde{\nu}_{r+1} \geq \nu$ for every $r\in\mathbb{N}$, and $\widetilde{\mu}_r \leq_{\rm st} \widetilde{\nu}_r$ as we stated in the main text. 
We have $\widetilde{\mu}_r(\mathbb{R}) \uparrow \mu(\mathbb{R})$ and, because $|\mathrm{supp}(\nu)|<\infty$, also $\widetilde{\nu}_r(\mathbb{R}) \downarrow \nu(\mathbb{R})$ as $r\to\infty$; the claim follows directly by Theorem~\ref{th:tv}. 
\hfill $\square$

% As the supports $\mu$ and $\nu$ are subsets of $\mathbb{Z}$, there are at most $2n+1$ atoms' locations $x\in\mathbb{Z}$ in each such that $|x|\leq n$. For all such locations $x$, we have
% \begin{equation*}
%     \mu(\{x\}) - \widetilde{\mu}^{\dagger}_n(\{x\}) \leq \frac{1}{2^n}\quad\quad\text{and}\quad\quad \nu(\{x\}) - \widetilde{\nu}_n(\{x\}) \leq \frac{1}{2^n}. 
% \end{equation*}
% As such, 
% \begin{equation*}
%     \mu(\{x\}) - \widetilde{\mu}_n(\{x\}) \leq 2\frac{1}{2^n} = \frac{1}{2^{n-1}},\quad \text{for all }|x|\leq n. 
% \end{equation*}
% It is straightforward to see that the difference in optimal probabilities of worthwhile effect is smaller than the total variation distance between the two pairs of measures. As a result, 
% \begin{equation*}
%     \left| Q^{\inf}_k(\widetilde{\mu}_n,\widetilde{\nu}_n) - Q^{\inf}_k(\mu,\nu) \right| \leq (2n+1) \frac{1}{2^{n-1}} + (2n+1) \frac{1}{2^{n}} + \max\left(\int\id_{\{|x|\geq n\}}\,\mathrm{d}\mu,  \int\id_{\{|x|\geq n\}}\,\mathrm{d}\nu\right).
% \end{equation*}
% which tends to 0 as $n$ grows to infinity. 
% By Theorem~\ref{th:Q-inf}, for any $n\in\mathbb{N}$, we have $Q^{\rm A}_k(\widetilde{\mu}_n,\widetilde{\nu}_n) = Q^{\inf}_k(\widetilde{\mu}_n,\widetilde{\nu}_n)$; hence, $Q^{\rm A}_k(\widetilde{\mu}_n,\widetilde{\nu}_n) \to Q^{\inf}_k(\mu,\nu)$.
% A similar argument holds for $Q^{\sup}_k(\mu,\nu)$. 

\subsection{Proof of Theorem~\ref{th:discrete}}

Because $\widetilde{\mu}_1 \leq \widetilde{\mu}_2 \leq \dots \leq \mu$, measures $\widetilde{\mu}_r$, $r\in\mathbb{N}$, are all absolutely continuous with respect to $\mu$; their Radon-Nikodym derivatives $g_r := \mathrm{d}\widetilde{\mu}_r/\mathrm{d}\mu$, $r\in\mathbb{N}$, are thus well defined. For every $r\in\mathbb{N}$, we can write $\widetilde{\mu}_r = \int g_r \mathrm{d}\mu$, and we note that $g_r\uparrow 1$. 

For notational convenience, let us write $\widetilde{\gamma}_r := \gamma^{\rm A}_{k,\widetilde{\mu}_r, \widetilde{\nu}_r}$. We want to prove $\widetilde{\gamma}_r \to \gamma$ weakly as $r\to\infty$, for some $\gamma\in\mathcal{H}(\mu,\nu)$. Note that $\widetilde{\gamma}_r$ is well-defined for every $r\in\mathbb{N}$, because $\widetilde{\nu}_r(\mathbb{R})<\infty$ for every $r$ since $|\mathrm{supp}(\nu)|<\infty$. 
For every coupling $\widetilde{\gamma}_r$, $r\in\mathbb{N}$, let $K_r$ be the associated kernel, that is, $K_r$ is the probability measure such that  $\widetilde{\gamma}_r(\mathrm{d}x,\mathrm{d}y) = \mu_r(\mathrm{d}x) K_r(x,\mathrm{d}y)$. 

We will prove that, for $\widetilde{\mu}_r$-almost every $x\in\mathbb{R}$, 
\begin{equation}
K_r(x,\cdot) \longrightarrow K(x,\cdot) \quad \text{weakly},
    \label{eq:kernel-conv}
\end{equation}
for some kernel $K$. At atomization level $r\in\mathbb{N}$, for every $x\in\mathrm{supp}(\widetilde{\mu}_r) \subseteq \mathrm{supp}(\mu)$, Algorithm~\ref{algo:A} commands to couple atoms at $x$ to the largest remaining locations $y$ such that $y-x\leq k$, and if impossible, to the largest remaining locations $y$ without constraint. The measure $K_r(x,\cdot)\in\mathcal{M}$ is therefore of the form 
\begin{equation*}
    K_r(x,\cdot) = \theta_r(x,\cdot) + \tau_r(x,\cdot),
\end{equation*}
for $\theta_r(x,\cdot), \tau_r(x,\cdot)\in\mathcal{M}$ satisfying 
\begin{align}
    \theta_r(x,\cdot) &\leq \frac{1}{\widetilde{\mu}_r(\{x\})}\left(\widetilde{\nu}_r - \int_{(x,\infty)}K_r(x,\cdot)\widetilde{\mu}_r(\mathrm{d}x)\right) \quad \text{and} \quad \mathrm{supp}(\theta_r(x,\cdot))\subseteq\{y-x\leq k\};    
\label{eq:th4-constr-theta}\\
    \tau_r(x,\cdot) &\leq \frac{1}{\widetilde{\mu}_r(\{x\})}\left(\widetilde{\nu}_r - \int_{(x,\infty)}K_r(x,\cdot)\widetilde{\mu}_r(\mathrm{d}x)\right) \quad \text{and} \quad \tau_r(x,\{y-x\leq k\}) = 0,
\label{eq:th4-constr-tau}
\end{align}
and $\theta_r(x,\cdot)$ is the largest measure, with respect to $\leq_{\rm st}$, and with largest total mass, satisfying \eqref{eq:th4-constr-theta}, and $\tau_r(x,\cdot)$ is the largest measure, also with respect to $\leq_{\rm st}$, satisfying \eqref{eq:th4-constr-tau} and $\tau(x,\mathbb{R}) = 1- \theta_r(x,\mathbb{R})$. This stochastic-optimality requirement of each, for every $\mathrm{supp}(\mu)$, (and $\widetilde{\nu}_r\geq\widetilde{\nu}_{r+1}$ and $\widetilde{\mu}_r\leq\widetilde{\mu}_{r+1}$, 
\begin{equation}
    \frac{1}{\widetilde{\mu}_{r+1}(\{x\})}\left(\widetilde{\nu}_{r+1} - \int_{(x,\infty)}K_{r+1}(x,\cdot)\widetilde{\mu}_{r+1}(\mathrm{d}x)\right) \leq_{\rm st} \frac{1}{\widetilde{\mu}_r(\{x\})}\left(\widetilde{\nu}_r - \int_{(x,\infty)}K_r(x,\cdot)\widetilde{\mu}_r(\mathrm{d}x)\right).
    \label{eq:th4-improv-integr}
\end{equation}
Relation \eqref{eq:th4-improv-integr} directly implies, from the total-mass optimality requirement for $\theta_r(x,\cdot)$, 
\begin{equation}
\theta_{r+1}(x,\mathbb{R}) \leq \theta_{r}(x,\mathbb{R}), \quad \text{for every }r\in\mathbb{N},\text{for every }x\in\mathrm{supp}(\widetilde{\mu}_r). 
\label{eq:th4-total-theta}
\end{equation}
From this, we have $\theta_{r}(x,\mathbb{R})\downarrow a_x$ for some $a_x\geq 0$. Combining \eqref{eq:th4-constr-theta}, \eqref{eq:th4-improv-integr}, \eqref{eq:th4-total-theta} and the stochastic optimality requirement for $\theta_r(x,\cdot)$, we have
\begin{equation}
    \theta_{r+1}(x, \cdot) \leq_{\rm st} \theta_r(x,\cdot), \quad \text{for every }r\in\mathbb{N},\;x\in\mathrm{supp}(\widetilde{\mu}_r). \label{eq:stodom-theta}
\end{equation}
Then, for each $x\in\mathrm{supp}(\mu)$, because the sequence $\{\theta_r(x,\cdot):r\in\mathbb{N}\}$ is uniformly lower bounded, in the sense of $\leq_{\rm st}$, by $a_x\delta_x$ due to the monotone-response constraint, it must converge weakly:
    $\theta_r(x,\cdot) \to \theta(x,\cdot)$ weakly as $r\to\infty$,
for some $\theta(x,\cdot)\in\mathcal{M}$. 
The parallel argument for $\tau_r(x,\cdot)$ requires to employ an alternative definition of stochastic order: Define $\leq_{\rm st}^*$, a relation on $\mathcal{M}$,  by letting $\mu_1 \leq_{\rm st}^* \mu_2$ when $F_{\mu_1} \leq F_{\mu_2}$ for $\mu_1,\mu_2\in\mathcal{M}$, with $F_{\mu}$ being the cumulative distribution of $\mu\in\mathcal{M}$: $F_{\mu}(x) = \mu((-\infty,x])$. By \eqref{eq:th4-total-theta}, 
$\tau_{r+1}(x,\mathbb{R}) \geq \tau_{r}(x,\mathbb{R})$ for every $x\in\mathrm{supp}(\mu)$ and $r\in\mathbb{N}$. Combining this with \eqref{eq:th4-constr-theta} and \eqref{eq:th4-improv-integr} and the stochastic optimality requirement for $\tau_r(x,\cdot)$, we have 
\begin{equation}
    \tau_{r+1}(x, \cdot) \leq_{\rm st}^* \tau_r(x,\cdot), \quad \text{for every }r\in\mathbb{N},\;x\in\mathrm{supp}(\widetilde{\mu}_r). 
    \label{eq:stodom-tau}
\end{equation}
For each $x\in\mathrm{supp}(\mu)$, the sequence $\{\tau_r(x,\cdot):r\in\mathbb{N}\}$ is uniformly lower bounded, with respect to $\leq_{\rm st}^*$ by $\delta_{x+k}$. Hence it must converge weakly, meaning $\tau_r(x,\cdot) \to \tau(x,\cdot)$ weakly as $r\to\infty$, for some $\tau(x,\cdot)\in\mathcal{M}$. 
Letting $K(x,\cdot) = \theta(x,\cdot) + \tau(x,\cdot)$, we have 
$K_r(x,\cdot) \to K(x,\cdot)$ weakly, for every $x\in\mathrm{supp}(\widetilde{\mu}_r)$. 

We can finally argue for the weak convergence of $\{\widetilde{\gamma}_r$: $r\in\mathbb{N}\}$. For every bounded continuous function $\varphi:\mathbb{R}^2\to\mathbb{R}$, relation~\eqref{eq:kernel-conv} implies that 
\begin{equation*}
    \int \varphi(x,y) K_r(x,\mathrm{d}y) \longrightarrow \int \varphi(x,y) K(x,\mathrm{d}y), \quad \text{as }r\to\infty, \text{ for $\mu$-a.e. $x\in\mathbb{R}$.}
\end{equation*}
Therefore,
\begin{align}
\int \varphi(x,y)\mathrm{d}\widetilde{\gamma}_r &= \int \left( \int \varphi(x,y)K_r(x,\mathrm{d}y)\right) \mathrm{d}\widetilde{\mu}_r \notag\\
&= \int \left( \int \varphi(x,y)K_r(x,\mathrm{d}y)\right) g_r \mathrm{d}\mu
\longrightarrow  \int \left( \int \varphi(x,y)K(x,\mathrm{d}y)\right)  \mathrm{d}\mu \quad \text{as }r\to\infty,
\label{eq:weak-conv-GAMMA}
\end{align}
by dominated convergence since $g_r\uparrow 1$. Define $\gamma$ from the kernel $K$: let $\gamma(\mathrm{d}x,\mathrm{d}y) = \mu(\mathrm{d}x)K(x,\mathrm{d}y)$. This directly means that $P_1(\gamma) = \mu$, and it is straightforward to see that $P_2(\gamma_r)$ converges weakly to a measure dominated by $\nu$. Moreover, because $\mathrm{supp}(\widetilde{\gamma}_r)\subseteq \mathbb{H}$ for all $r\in\mathbb{N}$ and $\mathbb{H}$ is closed, we have $\mathrm{supp}(\gamma)\subseteq \mathbb{H}$; hence, $\gamma\in\mathcal{H}(\mu,\nu)$. Relation \eqref{eq:weak-conv-GAMMA} indicates that $\widetilde{\gamma}_r \to \gamma$ weakly as $r\to\infty$. \hfill$\square$

\subsection{Proof of Proposition~\ref{prop:convergence}}

As the discretization level changes, the difference between $\overline{\mu}_s$ and $\overline{\mu}_{s+1}$ or between $\overline{\nu}_{s}$ and $\overline{\nu}_{s+1}$ is twofold: the support changes and the amount of mass captured at each point varies. We want to account for each of these two sources separately. To this end, for every $t\leq s$, define $\widehat{\mu}_s^{(t)}$ and $\widehat{\nu}_s^{(t)}$ as follows:
\begin{align}
\widehat{\mu}_s^{(t)} = \sum_{i\in\mathbb{Z}} \frac{2^t}{2^s} \overline{\mu}_t(\{2^{-t}(i+1/2)\}) \sum_{x \in  \mathrm{supp}(\overline{\mu}_s) \cap  [2^{-t} i, 2^{-t} (i+1))} \delta_x; \label{eq:hat-mu}\\
\widehat{\nu}_s^{(t)} = \sum_{i\in\mathbb{Z}} \frac{2^t}{2^s} \overline{\nu}_t(\{2^{-t}(i+1/2)\}) \sum_{y \in  \mathrm{supp}(\overline{\nu}_s) \cap  [2^{-t} i, 2^{-t} (i+1))} \delta_y. \label{eq:hat-nu}
\end{align}
The measure $\widehat{\mu}_s^{(t)}$ has the same support as $\overline{\mu}_s$, but the mass captured is the same as for $\overline{\mu}_t$ and it is spread out evenly on the finer support; a similar interpretation holds for $\widehat{\nu}_s^{(t)}$. Uniformly for every $t\in\mathbb{N}$, as $s\to \infty$, we have $\widehat{\mu}_{s}^{(t)}\to \widehat{\mu}^{(t)}$ and $\widehat{\nu}_{s}^{(t)}\to \widehat{\nu}^{(t)}$ weakly where $\widehat{\mu}^{(t)},\widehat{\nu}^{(t)}\in\mathcal{M}$ have respective density functions
\begin{align}
   f_{\widehat{\mu}^{(t)}}(x) = \sum_{i\in\mathbb{Z}} 2^{t} \overline{\mu}_{t}\left(\left\{2^{-t}\left(i + 1/2\right)\right\} \right)\id_{[2^{-t} i, 2^{-t} (i+1))}(x), \quad \text{for all }x\in\mathbb{R}; \label{eq:hat-mu-dens}\\ 
    f_{\widehat{\nu}^{(t)}}(x) = \sum_{i\in\mathbb{Z}} 2^{t} \overline{\nu}_{t}\left(\left\{2^{-t}\left(i + 1/2\right)\right\} \right)\id_{[2^{-t} i, 2^{-t} (i+1))}(x), \quad \text{for all }x\in\mathbb{R}. \label{eq:hat-nu-dens}
\end{align}
First, we argue that, for every $s,t\in\mathbb{N}$, $t\leq s$, 
\begin{equation}
    Q_k^{\inf}\left(\widehat{\mu}_s^{(t)}, \widehat{\nu}_s^{(t)}\right) \geq Q^{\inf}_k\left(\widehat{\mu}^{(t)}, \widehat{\nu}^{(t)}\right). 
    \label{eq:prop4-goal1}
\end{equation}
For any $\pi^*_s$ solving $Q_k^{\inf}(\widehat{\mu}_s^{(t)}, \widehat{\nu}_s^{(t)})$ (recall that the solution is attained by Theorem~\ref{th:stosto}), let $K^*_s$ be the kernel describing $\pi^*_s$, that is, it satisfies $\pi^*_s(\mathrm{d}x,\mathrm{d}y) = K^*_s(x,\mathrm{d}y)\widehat{\mu}_s^{(t)}(\mathrm{d}x)$. Define the kernel $K^{*}$, by letting, for every $x\in\mathrm{supp}(\widehat{\mu}_{s})$ and every $\varepsilon \in [-2^{-(s+1)}, 2^{-(s+1)})$, 
\begin{align*}
    K^{*}(x + \varepsilon, \cdot) = \delta_{v_{x,\varepsilon}} \quad  \text{where } 
    v_{x,\varepsilon} &= q_{x,\varepsilon} - 2^{-(s+1)} + \frac{2^{s} (\varepsilon + 2^{-(s+1)}) - K^{*}_s(x,(-\infty,v)))}{K^*_s(x,\{v\})},\\
    \text{and }
    q_{x,\varepsilon} &= F^{-1}_{K^*_s(x, \cdot)}( 2^{s} (\varepsilon + 2^{-(s+1)})).
\end{align*}
Then, $K^{*}$ produces a valid coupling $\pi^{*}\in\mathcal{H}(\widehat{\mu}^{(t)}, \widehat{\nu}^{(t)})$ and $\int \id_{\{y-x>k\}}\mathrm{d}\pi^{\dagger} = \int \id_{\{y-x>k\}}\mathrm{d}\pi^{*} =  Q_k^{\inf}(\widehat{\mu}_s^{(t)}, \widehat{\nu}_s^{(t)})$; thus, \eqref{eq:prop4-goal1} holds. 

Next, for any $\pi^{\dagger}$ solving $Q_k^{\inf}(\widehat{\mu}^{(t)},\widehat{\nu}^{(t)})$ (again, the solution is attained, by Theorem~\ref{th:stosto}), let $K^{\dagger}$ be the kernel characterizing $\pi^{\dagger}$. For every $s\in\mathbb{N}$, define the kernel $K_s^{\dagger}$ by letting, for every $x\in\mathrm{supp}(\overline{\mu}_s)$,
\begin{equation}
K^{\dagger}_s(x, \{y\}) = 2^{s}\int_{x-2^{-(s+1)}}^{x+2^{-(s+1)}} K^{\dagger}( u , [y-2^{-(s+1)}, y+2^{-(s+1)}))  \mathrm{d}u
\label{eq:kernel-dagger}
\end{equation}
Notably because, for each $x\in\mathrm{supp}(\overline{\mu}_s)$, 
\begin{equation}
    \widehat{\mu}^{(t)}_s(\{x\}) = \mu^{(t)}([x-2^{-(s+1)}, x+2^{-(s+1)})),
    \label{eq:mu-parenthese-t-relation}
\end{equation}
we can see that $K^{\dagger}_s$ yields a coupling $\pi^{\dagger}_s\in \mathcal{H}(\widehat{\mu}^{(t)}_s, \widehat{\nu}^{(t)}_s)$. 
Let us remark that, for any $x,y\in\mathbb{R}$, if $y-x > k+ 2^{-s}$, then
\begin{equation*}
    y^* - x^* > k \text{ for all }x^*\in[x-2^{-(s+1)}, x+2^{-(s+1)}),\;y^*\in[y-2^{-(s+1)}, y+2^{-(s+1)}).
\end{equation*}
Combining this observation with the construction in \eqref{eq:kernel-dagger}, and by  \eqref{eq:prop4-goal1} and \eqref{eq:mu-parenthese-t-relation}, we have
\begin{align*}
    \int\id_{\{y-x>k\}} K_s^{\dagger}(x, \mathrm{d}y) \mu^{(t)}_s(\mathrm{d}x) - \int\id_{\{y-x>k\}} K^{\dagger}(x, \mathrm{d}y) \mu^{(t)}(\mathrm{d}x)\leq \int_{A_s} K^{\dagger}(x, \mathrm{d}y) \mu^{(t)}(\mathrm{d}x)=\pi^{\dagger}(A_s).
\end{align*}
where $A_s = \{(x,y)\in\mathbb{R}^2 : k < y-x \leq k + 2^{-s} \}$, $s\in\mathbb{N}$. 
Because $\pi^{\dagger}$ was a solution to $Q^{\inf}_k(\widehat{\mu}^{(t)},\widehat{\nu}^{(t)})$, it follows that, for every $s\in\mathbb{N}$,   
\begin{equation}
    0 \leq Q_k^{\inf}(\widehat{\mu}_s^{(t)}, \widehat{\nu}_s^{(t)}) - Q_k^{\inf}(\widehat{\mu}^{(t)}, \widehat{\nu}^{(t)})  \leq \pi^{\dagger}(A_s), \label{eq:prop4-goal-15}
\end{equation}
where the lower bound is \eqref{eq:prop4-goal1}.
Write $\Pi^{(t)}_{\rm sol}$ for the set of couplings solving $Q^{\inf}_k(\widehat{\mu}^{(t)}, \widehat{\nu}^{(t)})$. Since the choice of $\pi^{\dagger}$ was arbitrary among $\Pi^{(t)}_{\rm sol}$, bounds in \eqref{eq:prop4-goal-15} become
\begin{equation}
    0 \leq Q_k^{\inf}(\widehat{\mu}_s^{(t)}, \widehat{\nu}_s^{(t)}) - Q_k^{\inf}(\widehat{\mu}^{(t)}, \widehat{\nu}^{(t)})  \leq \inf\left\{ \pi(A_s) : \pi\in\Pi^{(t)}_{\rm sol}\right\}. 
    \label{eq:prop4-goal-16}
\end{equation}
For every $s,t\in\mathbb{N}$, $t\leq s$, we have 
\begin{align}
    \inf\left\{ \pi(A_s) : \pi\in\Pi^{(t)}_{\rm sol}\right\} &= \inf\left\{\int \id_{\{y-x>k\}}\mathrm{d}\pi - \int \id_{\{y-x>k+2^{-s}\}}\mathrm{d}\pi : \pi\in\Pi_{\rm sol}^{(t)} \right\} \notag \\
    &= Q_k^{\inf}\left(\widehat{\mu}^{(t)}, \widehat{\nu}^{(t)}\right) - \sup\left\{\int \id_{\{y-x>k+2^{-s}\}}\mathrm{d}\pi : \pi\in\Pi_{\rm sol}^{(t)} \right\} \notag \\
    &\leq Q_k^{\inf}\left(\widehat{\mu}^{(t)}, \widehat{\nu}^{(t)}\right) - Q_{k+2^{-s}}^{\inf}\left(\widehat{\mu}^{(t)}, \widehat{\nu}^{(t)}\right).
    \label{eq:prop4-goal-18}
\end{align}
It is straightforward to see that, for any $r\in\mathbb{R}$, and measures $\theta_1,\theta_2\in\mathcal{M}$, with $\theta_1\leq_{\rm st}\theta_2$, we have $Q_{k+r}^{\inf}(\theta_1, \theta_2^{[+r]}) \leq Q_{k}^{\inf}(\theta_1, \theta_2)$ where $\theta_2^{[+k]}(A)=\theta(A-k)$ for every Borel set $A$ on $\mathbb{R}$ (this is the same definition as in the main text), because the translation makes the monotone-response constraint less stringent. From this observation, we have
\begin{align}
    Q_k^{\inf}\left(\widehat{\mu}^{(t)}, \widehat{\nu}^{(t)}\right) - Q_{k+2^{-s}}^{\inf}\left(\widehat{\mu}^{(t)}, \widehat{\nu}^{(t)}\right) &\le  Q_{k+2^{-s}}^{\inf}\left(\widehat{\mu}^{(t)}, \widehat{\nu}^{(t)[+2^{-s}]}\right) - Q_{k}^{\inf}\left(\widehat{\mu}^{(t)}, \widehat{\nu}^{(t)}\right)\notag\\
    &\leq Q_k^{\inf}\left(\widehat{\mu}^{(t)}, \widehat{\nu}^{(t)[+2^{-s}]}\right) - Q_{k}^{\inf}\left(\widehat{\mu}^{(t)},  \widehat{\nu}^{(t)}\vee\widehat{\nu}^{(t)[+2^{-s}]}\right) \notag\\
    &\leq \widehat{\nu}^{(t)}\vee\widehat{\nu}^{(t)[+2^{-s}]}(\mathbb{R}) - \widehat{\nu}^{(t)}(\mathbb{R}) \label{eq:Q-Q-Q-Q} 
\end{align}
where the first inequality is because $\mathcal{H}(\widehat{\mu}^{(t)}, \widehat{\nu}^{(t)[+2^{-s}]}) \subseteq \mathcal{H}(\widehat{\mu}^{(t)}, \widehat{\nu}^{(t)}\vee\widehat{\nu}^{(t)[+2^{-s}]})$ since $\widehat{\nu}^{(t)[+2^{-s}]}\leq\widehat{\nu}^{(t)}\vee\widehat{\nu}^{(t)[+2^{-s}]}$, and the second inequality is by Theorem~\ref{th:tv} (and $\widehat{\nu}^{(t)}(\mathbb{R}) = \widehat{\nu}^{(t)[+2^{-s}]}(\mathbb{R})$).
Moreover, note that the discretization construction makes it so that $\widehat{\nu}^{(t)}\vee\widehat{\nu}^{(t)[+2^{-s}]}(\mathbb{R}) - \widehat{\nu}^{(t)}(\mathbb{R}) \leq \nu\vee\nu^{[+2^{-s}]}(\mathbb{R}) - \nu(\mathbb{R})$, for all $s,t\in\mathbb{N}$, $t<s$. Combining this with \eqref{eq:prop4-goal-16}, \eqref{eq:prop4-goal-18} and \eqref{eq:Q-Q-Q-Q} yields 
\begin{equation}
      0 \leq Q_k^{\inf}(\widehat{\mu}_s^{(t)}, \widehat{\nu}_s^{(t)}) - Q_k^{\inf}(\widehat{\mu}^{(t)}, \widehat{\nu}^{(t)})  \leq \nu\vee\nu^{[+2^{-s}]}(\mathbb{R}) - \nu(\mathbb{R}). 
    \label{eq:st-graal-uniform}
\end{equation}
Because $f_{\nu}$ is piecewise Lipschitz continuous, $ \nu\vee\nu^{[+2^{-s}]}(\mathbb{R}) - \nu(\mathbb{R})  \to 0$ as $s\to\infty$. 
Bounds in \eqref{eq:st-graal-uniform} thus imply
\begin{equation}
Q_k^{\inf}\left(\widehat{\mu}_s^{(t)}, \widehat{\nu}_s^{(t)}\right) \longrightarrow Q_k^{\inf}\left(\widehat{\mu}^{(t)},\widehat{\nu}^{(t)}\right) \quad \text{as }s\to\infty, \text{uniformly for all $t\in\mathbb{N}$}; 
    \label{eq:prop4-goal2}
\end{equation}
and the uniform convergence is seen from the fact that the upper bound in \eqref{eq:st-graal-uniform} does not depend on $t$. 
Next, we remark that $\widehat{\mu}^{(t)}\leq \mu$ and $\widehat{\nu}^{(t)}\geq \nu$ for every $t\in\mathbb{N}$; this follows from the infimum and supremum in \eqref{eq:discretization-mu} and \eqref{eq:discretization-nu}, and can be seen directly from the density functions: $f_{\widehat{\mu}^{(t)}}\leq f_{\mu}$ and $f_{\widehat{\nu}^{(t)}}\geq f_{\nu}$. Because $f_{\mu}$ and $f_{\nu}$ are piecewise Lipschitz continuous, $\widehat{\mu}^{(t)}(\mathbb{R}) \to \mu(\mathbb{R})$ and $\widehat{\nu}^{(t)}(\mathbb{R}) \to \nu(\mathbb{R})$ as $t\to \infty$. Thus, by Theorem~\ref{th:tv}, 
\begin{equation}
      Q_{k}^{\inf}(\widehat{\mu}^{(t)}, \widehat{\nu}^{(t)}) \longrightarrow Q_k^{\inf}(\mu,\nu) \quad \text{as }t\to\infty. 
      \label{eq:dist-hat-blank}
\end{equation}
 We conclude
\begin{equation*}
Q_k^{\inf}\left(\overline{\mu}_s,\overline{\nu}_s\right) =  Q_k^{\inf}\left(\widehat{\mu}^{(s)}_s,\widehat{\nu}^{(s)}_s\right)  \longrightarrow Q_k^{\inf}(\mu,\nu), \quad \text{as }s\longrightarrow\infty,   
\end{equation*}
where the convergence follows by combining \eqref{eq:prop4-goal2} and \eqref{eq:dist-hat-blank}. 
\hfill $\square$

\subsection{Proof of Theorem~\ref{th:continuous}}

Varying the discretization level $s$ induces changes in both the support and the mass captured at each point of the support. As in the proof of Proposition~\ref{prop:convergence}, we want to account for each of these two sources separately: for every $s,t\in\mathbb{N}$, $t\leq s$, define $\widehat{\mu}_s^{(t)}$ and $\widehat{\nu}_s^{(t)}$ as in \eqref{eq:hat-mu}--\eqref{eq:hat-nu}. Note that $\overline{\mu}_s = \widehat{\mu}^{(s)}_s$ and $\overline{\nu}_s = \widehat{\nu}^{(s)}_{s}$ for every $s\in\mathbb{N}$;  and that $\widehat{\mu}_s^{(t)}\to \widehat{\mu}^{(t)}$ and $\widehat{\nu}_s^{(t)}\to \widehat{\nu}^{(t)}$ weakly as $s\to\infty$, where $\widehat{\mu}^{(t)}$ and $\widehat{\nu}^{(t)}$ are defined in \eqref{eq:hat-mu-dens}--\eqref{eq:hat-nu-dens}. For notational convenience, let us write $\overline{\gamma}_{s} := \gamma_{k,\overline{\mu}_{s}, \overline{\nu}_{s}}^{\rm A}$ and   $\widehat{\gamma}^{(t)}_{s} := \gamma_{k,\widehat{\mu}_{s}^{(t)}, \widehat{\nu}_{s}^{(t)}}^{\rm A}$ for any $s,t\in\mathbb{N}$, $t\leq s$, and $\widehat{\gamma}^{(t)} := \gamma_{k,\widehat{\mu}^{(t)}, \widehat{\nu}^{(t)}}^{\rm A}$ for any $t\in\mathbb{N}$.  We aim to prove that $\overline{\gamma}_s \to \gamma$ as $s\to\infty$ for some $\gamma\in\mathcal{H}(\mu,\nu)$. 

Consider the Prokhorov distance $d_{\mathrm{P}}$, defined by 
\begin{equation*}
    d_{\mathrm{P}}(\pi,\pi^{\prime}) = \inf\{\varepsilon>0:\pi(A) \leq \pi^{\prime}(A^{\varepsilon}) + \varepsilon \quad \text{for every Borel set }A\subseteq\mathbb{R}^2\},\quad \pi,\pi^{\prime}\in\Pi 
\end{equation*}
where $A^{\varepsilon}  = \{z\in\mathbb{R}^2 : \text{there exists }z^*\in A \text{ with }d(z,z^*)<\varepsilon\}$. 

For any $s,t\in\mathbb{N}$, $t\leq s$, define 
\begin{equation*}
    \breve{\gamma}^{(t)}_s = \sum_{(i,j) \in\mathrm{supp}(\widehat{\gamma}^{(t)}_s)} \widehat{\gamma}^{(t)}_s(\{(i,j)\})\left(\frac{1}{2} \delta_{(i-2^{-(s+2)}, j - 2^{-(s+2)})} + \frac{1}{2} \delta_{(i+2^{-(s+2)}, j + 2^{-(s+2)})} \right).
\end{equation*}
The purpose of $\breve{\gamma}^{(t)}_s$ is to replicate the coupling $\widehat{\gamma}_s^{(t)}$ on the finer support of $\widehat{\gamma}^{(t)}_{s+1}$. The difference between $\widehat{\gamma}_s^{(t)}$ and $\breve{\gamma}_s^{(t)}$ is therefore just due to the change of support; it is straightforward to see that 
\begin{equation}
    d_{\mathrm{P}}(\widehat{\gamma}_s^{(t)},\breve{\gamma}_s^{(t)}) \leq \sqrt{\left(2^{-(s+2)}\right)^2 + \left(2^{-(s+2)}\right)^2} \leq 2^{-(s+1)},\quad s,t\in\mathbb{N}, t\leq s. \label{eq:th5-prokhorov-2s}
\end{equation}
Meanwhile, the difference between $\breve{\gamma}^{(t)}_s$ and $\widehat{\gamma}_{s+1}^{(t)}$ comes solely from the changes in coupling, dictated by the algorithm, given the finer support.

From Algorithm~\ref{algo:A}'s commands (and its limiting behaviour, from Theorem~\ref{th:discrete} if $\widehat{\mu}^{(t)}_s$ and $\widehat{\nu}^{(t)}_s$, are not atomic with the same size), we have the following observations concerning $\widehat{\gamma}_s^{(t)}$: If a portion of mass in $\widehat{\gamma}_s^{(t)}$ is coupling location $x$ in $\widehat{\mu}_s^{(t)}$ to location $y$ in $\widehat{\gamma}_s^{(t)}$ such that $y-x\leq k$, then the algorithm, for $\widehat{\gamma}_{s+1}^{(t)}$, will couple first the mass at $x+2^{-(s+2)}$ to the available mass at the largest location satisfying the objective, and such mass is available at either $y-2^{-(s+2)}$ or $y+2^{-(s+2)}$ (since $\widehat{\gamma}^{(t)}_s$'s construction was predicated on Algorithm~\ref{algo:A}'s commands), and then couple the mass at $x-2^{-(s+2)}$ also to the largest available mass (again, either at $y-2^{-(s+2)}$ or $y+2^{-(s+2)}$). Whether the algorithm will couple each actually at $y-2^{-(s+2)}$ or $y+2^{-(s+2)}$ depends on whether other portions of mass coupled to $y$ in $\widehat{\gamma}_s^{(t)}$ is at location larger or smaller than $x$, but this does not matter for the sake of this argument: what is important to note is that all the mass at $(x,y)$ remains in the $2^{-(s+1)}$ vicinity of that point in $\widehat{\gamma}_{s+1}^{(t)}$. Therefore, $d_P(\breve{\gamma}_{s}^{(t)}|_{(x\pm2^{-(s+2)},y\pm2^{-(s+2)})},\widehat{\gamma}_{s+1}^{(t)}|_{(x\pm2^{-(s+2)},y\pm2^{-(s+2)})} \leq \widehat{\gamma}_s^{(t)}(\{(x,y)\})2^{-s}$. This would also hold true when $y-x\geq k$ unless there is available mass at some other location, say $y^*<y$, such that $k+2^{-(s+2)}<y^*-x\leq k+2^{-(s+1)}$. Indeed, for $\widehat{\gamma}_{s+1}^{(t)}$, the algorithm would instead couple the mass at $x+2^{-(s+2)}$ to $y^*-2^{-(s+2)}$ since it would now satisfy the objective. The mass at $x-2^{-(s+2)}$ would now be coupled to $y+2^{-(s+2)}$, and thus continue to not be favourable toward the objective. Note that $y^*$ may have been coupled to some $x^*\leq x$ in $\widehat{\gamma}^{(t)}_s$. In that case, for $\gamma^{(t)}_{s+1}$, because $y^*-x^*> k$, mass at $x^*+2^{-(s+2)}$ is now coupled to $y-2^{-(s+2)}$, and mass at $x^*-2^{-(s+2)}$ is now coupled to $y+2^{-(s+2)}$ (since it must be the mass at the largest available location otherwise $x^*$ would not have been coupled to $y^*$ in $\widehat{\gamma}^{(t)}_s$).  As a result, half of the portion of mass originally coupled at $(x,y)$ now satisfies the objective $y-x\leq k$, and $(x^*,y^*)$ did not, and still does not, satisfy the objective. Thus, of all the mass which moves from $\breve{\gamma}_{s}^{(t)}$ to $\widehat{\gamma}_{s+1}^{(t)}$ by more than $2^{-(s+1)}$ in terms of locations, no less than a fourth of which goes in reducing the value of $\int \id_{\{y-x>k\}} \mathrm{d}\widehat{\gamma}^{(t)}_s$ (which is optimal for its discretization level, by Theorems~\ref{th:Q-inf} and \ref{th:discrete}). Hence, 
\begin{align}
\sum_{j=s}^{\infty} d_{\mathrm{P}}\left(\breve{\gamma}_{j}^{(t)}, \widehat{\gamma}_{j+1}^{(t)}\right) &\leq  \left(\sum_{j=s}^{\infty} 2^{-j}\right) + 4\left(Q_k^{\inf}\left(\widehat{\mu}_s^{(t)},\widehat{\nu}^{(t)}_s\right)- Q_k^{\inf}\left(\widehat{\mu}^{(t)}, \widehat{\nu}^{(t)}\right)\right),\notag\\
&\leq 2^{1-s} + 4\left(Q_k^{\inf}\left(\widehat{\mu}_s^{(t)},\widehat{\nu}^{(t)}_s\right)- Q_k^{\inf}\left(\widehat{\mu}^{(t)}, \widehat{\nu}^{(t)}\right)\right)\quad s,t\in\mathbb{N},\,t\leq s. \label{eq:4-diff}
\end{align}
 For every $s,t\in\mathbb{N}$, $t\leq s$, let $K_s^{(t)}$ be the kernel describing $\widehat{\gamma}_{s}^{(t)}$. The explanation above also indicates that, for $\widehat{\mu}_s^{(t)}$-almost every $x$, $K_s^{(t)}(x,\cdot)$ converges, since the kernel at $x$ will only change finitely many times from $\breve{\gamma}_s^{(t)}$ to $\widehat{\gamma}_{s+1}^{(t)}$; and the difference in kernels from $\widehat{\gamma}_{s}^{(t)}$ to $\breve{\gamma}_s^{(t)}$ is a simple translation of the order $2^{-(s+2)}$.  
For any $s,t\in\mathbb{N}$, $t\leq s$, we have
\begin{align}
    \sum_{j=s}^{\infty} d_{\mathrm{P}}\left(\widehat{\gamma}_{j}^{(t)}, \widehat{\gamma}_{j+1}^{(t)}\right) &\leq  \sum_{j=s}^{\infty} d_{\mathrm{P}}\left(\widehat{\gamma}_{j}^{(t)}, \breve{\gamma}_{j}^{(t)}\right) + \sum_{j=s}^{\infty} d_{\mathrm{P}}\left(\breve{\gamma}_{j}^{(t)}, \widehat{\gamma}_{j+1}^{(t)}\right)\notag\\
    &\leq \sum_{j=s}^{\infty} 2^{-(j+1)} + 2^{1-s} + 4\left(Q_k^{\inf}\left(\widehat{\mu}_s^{(t)},\widehat{\nu}^{(t)}_s\right) - Q_k^{\inf}\left(\widehat{\mu}^{(t)}, \widehat{\nu}^{(t)}\right) \right)\notag\\
    &\leq (3)2^{-s} + \nu\wedge\nu^{[+2^{-s}]}(\mathbb{R}) - \nu(\mathbb{R}), \label{eq:prokhorov-total}
\end{align}
where the second inequality follows from \eqref{eq:th5-prokhorov-2s} and \eqref{eq:4-diff}, while the third inequality is due to \eqref{eq:st-graal-uniform}.  As $s\to\infty$, the upper bound in \eqref{eq:prokhorov-total} vanishes, and thus, $\sum_{j=s}^{\infty} d_{\mathrm{P}}\left(\widehat{\gamma}_{j}^{(t)}, \widehat{\gamma}_{j+1}^{(t)}\right)\to 0$. Because the Prokhorov distance metrizes weak convergence for measures of same total mass, this means there exists $\{\widehat{\gamma}^{(t)}: t\in\mathbb{N}\}$ such that 
\begin{equation}
    \widehat{\gamma}_s^{(t)} \to \widehat{\gamma}^{(t)}\quad \text{weakly}, \quad \text{as }s\to\infty, \quad \text{uniformly for all }t\in\mathbb{N}, \label{eq:th5-goal-s}
\end{equation}
where the convergence holds uniformly for all $t$ because the upper bound in \eqref{eq:prokhorov-total} is independent of~$t$.

  %Because $\widehat{\gamma}_{s}^{(t)} \to \widehat{\gamma}^{(t)}$ weakly, it must be that $K_s^{(t)}(x,\cdot) \to K^{(t)}(x,\cdot)$ for $\widehat{\mu}^{(t)}$-almost every $x\in\mathbb{R}$. 
 As argued in the proof of Theorem~\ref{th:discrete}, the algorithms commands make it so that, for every $x\in\mathrm{supp}(\mu_{s}^{(t)})$, the measure $K_s^{(t)}(x,\cdot)\in\mathcal{M}$ is of the form
\begin{equation*}
    K_s^{(t)}(x,\cdot ) =  \theta_s^{(t)}(x,\cdot) + \tau_s^{(t)}(x,\cdot)
\end{equation*}
for $\theta_s^{(t)}(x,\cdot)$, $\tau_s^{(t)}(x,\cdot)\in\mathcal{M}$ satisfying \eqref{eq:th4-constr-theta}--\eqref{eq:th4-constr-tau} and the conditions thereafter. For every $t\in\mathbb{N}$, let $K^{(t)}$ be the kernel describing $\widehat{\gamma}^{(t)}$.
Because, from the explanation above, for each $t\in\mathbb{N}$, $K_s^{(t)}(x,\cdot)\to K^{(t)}(x,\cdot)$ weakly, and $\theta_s^{(t)}$'s are all required to have the largest possible total mass, there must exist $\theta^{(t)}(x,\cdot),\tau^{(t)}(x,\cdot)\in\mathcal{M}$ such that
\begin{align}
   K^{(t)}(x,\cdot) &= \theta^{(t)}(x,\cdot) + \tau^{(t)}(x,\cdot), \notag\\
    \theta_s^{(t)}(x,\cdot) &\longrightarrow \theta^{(t)}(x,\cdot) \quad \text{and} \quad \tau_s^{(t)}(x,\cdot) \longrightarrow \tau^{(t)}(x,\cdot)\quad \text{weakly}\quad \text{as }s\to\infty, \label{eq:conv-theta-tau}
\end{align}
for $\mu$-almost every $x\in\mathbb{R}$.
For every $s,t\in\mathbb{N}$, $t<s$, we have $\theta_s^{(t+1)}(x,\cdot)\leq_{\rm st}\theta_s^{(t)}(x,\cdot)$ and $\tau_s^{(t+1)}(x,\cdot)\leq_{\rm st}^*\tau_s^{(t)}(x,\cdot)$ as per \eqref{eq:stodom-theta} and \eqref{eq:stodom-tau}. By the closure of $\leq_{\rm st}$ on convergence in distribution,  
this combined with \eqref{eq:conv-theta-tau} implies that 
\begin{equation}
    \theta^{(t+1)}(x,\cdot) \leq_{\rm st} \theta^{(t)}(x,\cdot) \quad \text{and} \quad \tau^{(t+1)}(x,\cdot) \leq_{\rm st}^* \tau^{(t)}(x,\cdot),\label{eq:sto-sto-theta-tau}
\end{equation}
for $\mu$-almost every $x\in\mathbb{R}$. Let $b_x = \lim_{t\to\infty}\theta^{(t)}(x,\mathbb{R})$, which is well defined (as per \eqref{eq:th4-total-theta}). Because the sequence $\{\theta^{(t)}(x,\cdot): t\in\mathbb{N}\}$ is uniformly lower bounded in terms of $\leq_{\rm st}$ by $b_x\delta_x$ given the monotone-response constraint, by \eqref{eq:sto-sto-theta-tau} it must converge weakly to some $\theta(x,\cdot)\in\mathcal{M}$. Similarly, the sequence 
$\{\tau^{(t)}(x,\cdot): t\in\mathbb{N}\}$ is uniformly lower bounded, in terms of $\leq_{\rm st}$ by $\delta_x$, so by \eqref{eq:sto-sto-theta-tau} it must converge weakly to some $\tau(x, \cdot)$. Let $K(x,\cdot) = \theta(x,\cdot) + \tau(x,\cdot)$ for every $x\in\mathrm{supp}(\mu)$. 
This means that, for any bounded continuous function $\varphi:\mathbb{R}^2\to \mathbb{R}$, we have
\begin{equation}
   \int \varphi(x,y) K^{(t)}(x,\mathrm{d}y) \longrightarrow   \int \varphi(x,y) K(x,\mathrm{d}y), \quad \text{as }t\to\infty, \text{for $\mu$-a.e.~} x\in\mathbb{R}.\label{eq:th5-conv-int}  
\end{equation}
The specific discretization construction \eqref{eq:discretization-mu} yields that $\overline{\mu}_1 \leq \overline{\mu}_{2}\leq \dots \leq \mu$; thus $\overline{\mu}_s$ is absolutely continuous with respect to $\mu$ for every $s\in\mathbb{N}$ and its Radon-Nikodym derivative $g_s:=\mathrm{d}\overline{\mu}_s/\mathrm{d}\overline{\mu}$ is well-defined. We have $g_s\uparrow 1$. Then, for every bounded continuous function $\varphi:\mathbb{R}^2\to\mathbb{R}$, 
\begin{align}
&\int \varphi(x,y)\mathrm{d}\widehat{\gamma}^{(t)} = \int \left( \int \varphi(x,y)K_s(x,\mathrm{d}y)\right) \mathrm{d}\overline{\mu}_s \notag\\
&\quad= \int \left( \int \varphi(x,y)K_r(x,\mathrm{d}y)\right) g_s \mathrm{d}\mu
\longrightarrow  \int \left( \int \varphi(x,y)K(x,\mathrm{d}y)\right)  \mathrm{d}\mu \quad \text{as }r\to\infty,
\label{eq:weak-conv-GAMMA-th5}
\end{align}
by dominated convergence, given \eqref{eq:th5-conv-int} and since $g_s\uparrow 1$.
Let $\gamma$ be characterized by the kernel $K$, and \eqref{eq:weak-conv-GAMMA-th5} then implies
\begin{equation}
    \widehat{\gamma}^{(t)} \longrightarrow \gamma \quad \text{weakly}, \quad \text{as }s\longrightarrow\infty. 
    \label{eq:th5-goal-t}
\end{equation}
Since $\mathrm{supp}(\overline{\gamma}_s) \subseteq \mathbb{H}$ for all $s\in\mathbb{N}$ and $\mathbb{H}$ is closed, we have $\mathrm{supp}(\gamma)\subseteq\mathbb{H}$. Also, $P_1(\overline{\gamma}_s) = \overline{\mu}_s \to \mu$ and $P_2(\overline{\gamma}_s) \leq \overline{\nu}_{s} \to \nu$ weakly as $s \to \infty$, and therefore $\gamma\in\mathcal{H}(\mu,\nu)$.
We conclude 
\begin{equation*}
    \overline{\gamma}_s = \widehat{\gamma}^{(s)}_s \longrightarrow \gamma \quad \text{weakly}, \quad \text{as }s\longrightarrow\infty,
\end{equation*}
by combining \eqref{eq:th5-goal-s} and \eqref{eq:th5-goal-t}.\hfill$\square$

\subsection{Proof of Theorem~\ref{th:Q-inf-eta}}

Define $\mathcal{Z}_\ell = \{z_{\ell,1} \leq \cdots \leq z_{\ell, s_l} \}$ as the collection of the locations of atoms in $\eta_{\ell}$, for every $\ell\in\{1,\dots,h\}$. 
Let us reuse the same set-up and notation as in the proof of Theorem~\ref{th:Q-inf}, but let each element of coupling multisets include atoms in $\mathcal{Z}_1,\ldots, \mathcal{Z}_h$. In other words, for a coupling multiset $\Gamma$ of $\mathcal{X}$, $\mathcal{Z}_1,\ldots, \mathcal{Z}_h$ and $\mathcal{Y}$, its elements take the form $(x,z_1,\ldots, z_h,y)$. 

The argument letting the discussion on optimality reduce to atomic couplings is very similar to the one in the proof of Theorem~\ref{th:Q-inf}. For brevity, we will not expose it to its entirety: adapting the argument to accomodate the further constraint given by the partial treatement simply requires, both for items \ref{item:A-eta} and \eqref{item:B-eta}, to fully commit the largest sequence of partial treatment locations $z_1 \leq \dots \leq z_\ell$ partially committed to $(x_i,y^*)$ by the kernel $K^{[i-1]}_0$ when constructing $K^{[i]}_0$, and proportionnally reattribute the other partially committed partial treatment locations in the same fashion as in \eqref{eq:reattrib-inf}--\eqref{eq:reattrib-sup}. The same arguments in the validity of the reattribution in \eqref{eq:reattrib-inf}--\eqref{eq:reattrib-sup} hold for the reattribution of the partial treatment locations. 

The remaining part of the proof for each of items~\ref{item:A-eta} and~\ref{item:B-eta} is predicated on the same idea as in the one of Theorem~\ref{th:Q-inf}: proving inductively on $i$ that the algorithm yields a valid coupling for $\mu\vert_{\mathcal{X}_{[i]}}$, $i\in\{1,\ldots,m\}$. However, we have to account for the additional constraint given by $\eta_1,\ldots,\eta_h$. 

For item~\ref{item:A-eta}: For the base step: 
Letting $i=m$, we have $\mathcal{X}_{[i]} = \{x_m\}$; existence and optimality are straightforward to see. For the inductive step: Letting $i\in\{1,\ldots,m-1\}$, we assume that the coupling multiset for $\mathcal{X}_{[i+1]}, \mathcal{Z}_1,\ldots, \mathcal{Z}_{h},\mathcal{Y}$ produced by the algorithm, denoted by $\Gamma_{i+1}$, is valid and optimal. We need to prove that it is the case for $\Gamma_i$. 
For validity: Because the atoms in $\mu\vert_{\mathcal{X}_{[i]}}$ are processed from right to left, adding an atom leftmost does not alter the tuple of other atoms, processed before. 
Therefore, $\Gamma_i = \Gamma_{i+1}\cup \{(x_i, z_{1,c_1},\ldots,z_{h, c_h}, y_d)\}$ where $c_1,\ldots, c_h$ and $d$ are selected by the algorithm. 
Note that because $\mu((x,\infty)) \leq \eta_1((x,\infty))$, for all $x\in\mathbb{R}$, there is necessarily an uncoupled $z_{1,c_1}\geq x_i$. Next, $\eta_1((z_{1,c_1},\infty)) \leq \eta_2((z_{1,c_1},\infty))$ so there is at least one more uncoupled atom in $\eta_2$; it must be uncoupled because, otherwise, this would mean that one of the atoms in $\eta_2\vert_{(z_{1,c_1},\infty)}$ is coupled to atom left of $z_{1,c_1}$, contradicting that the algorithm selected the rightmost admissible atom in $\eta_1$ at that step. Repeat the preceding argument for $\eta_2$ and $\eta_3$, $\eta_3$ and $\eta_4$, up to $\eta_h$ and $\nu$. 

For optimality: Let $\varphi_{\Gamma} = \sum_{(x,z_1,\ldots,z_h, y) \in \Gamma} a \id_{\{y-x >k\}}$ for any coupling multiset $\Gamma$, where we recall $a$ is the mass of an atom; we know that $\varphi_{\Gamma_{i+1}}$ is smallest for coupling multisets of $\mathcal{X}_{[i+1]}$, $\mathcal{Z}_1,\ldots,\mathcal{Z}_h$ and $\mathcal{Y}$. We need to prove that $\varphi_{\Gamma_{i}}$ is smallest for coupling multisets of $\mathcal{X}_{[i]}$ with the others. Note that any such coupling multiset $\Gamma$ contains one more tuple than $\Gamma_{i+1}$, so $\varphi_{\Gamma}\geq \varphi_{\Gamma_{i+1}}$ by the latter's optimality. For $\Gamma_i$, two situations may occur. First possibility: $y_d - x_i \leq k$. Then $\varphi_{\Gamma_i} = \varphi_{\Gamma_{i+1}}$, and it is thus smallest. Second possibility: $d$ is such that $y_d - x_i > k $. Then, $\varphi_{\Gamma_i} = \varphi_{\Gamma_{i+1}} + a$.  We need to argue that no coupling multiset $\Gamma$ of $\mathcal{X}_{[i]}$ with the others could produce $\varphi_{\Gamma} = \varphi_{\Gamma_{i+1}}$. Let us suppose that such a coupling multiset exists, denoted $\Gamma^{\dagger}$. We will show that this is absurd by revealing a contradiction. 

There are $c_1^{\dagger},\ldots,c_h^{\dagger}, d^{\dagger}$ such that $(x_i,z_{1,c_1^{\dagger}}, \ldots,z_{h,c_h^{\dagger}}, y_{d^{\dagger}})\in \Gamma^{\dagger}$. It must hold that $y_{d^{\dagger}} - x_i \leq k$; otherwise, we would have $\varphi_{\Gamma^{\dagger}\backslash \{(x_i,z_{1,c_1^{\dagger}}, \ldots,z_{h,c_h^{\dagger}}, y_{d^{\dagger}})\}} = \varphi_{\Gamma_{i+1}} - a$, contradicting the optimality of $\varphi_{\Gamma_{i+1}}$. Therefore, at least one of $z_{1,c_1^{\dagger}}, \ldots, z_{h,c_h^{\dagger}}, y_{d^{\dagger}}$ must be coupled in $\Gamma_{i+1}$, or else the algorithm would have selected that tuple for $\Gamma_i$.

Suppose that any of $z_{1,c_1^{\dagger}}, \ldots, z_{h,c_{h}^{\dagger}}$, say $z_{\ell,c_{\ell}^{\dagger}}$, was already coupled in $\Gamma_{i+1}$, then there is $x^*\geq x_i$, and $z_1^*,\ldots,z_h^*, y^*$ such that $(x^*, \ldots, z_{\ell,c_{\ell}^{\dagger}}, \ldots, y^*)\in \Gamma_{i+1}$. It must be that $y^*-x^*\leq k$ since otherwise the algorithm would have coupled $y_{d^{\dagger}}$ with $x^*$ in $\Gamma_{i+1}$. As $z_{\ell,c_{\ell}^{\dagger}}$ got stolen by another tuple in $\Gamma^{\dagger}$, one possibility is for $x^*$ to couple with another $y^{**}\in\mathcal{Y}$ and some partial treatment atom locations such that $y^{**}-x^* \leq k$, but then one of them must already be coupled in $\Gamma_{i+1}$ from the same explanation as above. 
%From this reasoning, it is impossible to create an augmenting path with respect to satisfying the objective $y-x\leq k$: no uncoupled element of $\Gamma_{i+1}$ can close the path. 
Another possibility is for $x^*$ and $y^*$ to find an admissible replacement $z^*_{\ell}\in\mathcal{Z}_{\ell}$, $z^*\geq z_{\ell-1}$, for $z_{\ell,c_{\ell}^{\dagger}}$. It must be that $z^*_{\ell}\leq z_{\ell,c_{\ell}^{\dagger}}$ because the algorithm picked the largest admissible element of $\mathcal{Z}_{\ell}$ for $\Gamma_{i+1}$. However, $z^*_{\ell}\leq z_{\ell,c_{\ell}^{\dagger}}$ means that $(x_i,\ldots,z_{\ell}^*, \ldots, y_{d^{\dagger}})$ would be a valid tuple and the algorithm would have chosen it for $\Gamma_i$; this is a contradiction (as $y_{d^{\dagger}}\neq y_d$).

If all of $z_{1,c_1^{\dagger}}, \ldots, z_{h,c_{h}^{\dagger}}$, say $z_{\ell,c_{\ell}^{\dagger}}$ were not coupled in $\Gamma_{i+1}$ but $y_{d^{\dagger}}$ was, let $x^*, z_{1}^*, \ldots, z_h^*$ be such that $(x^*, z_1^*,\ldots,z_h^*, y_{d^{\dagger}})\in \Gamma_{i+1}$. Because $x^*\geq x_i$, it must hold that $y_{d^{\dagger}} - x^*\leq k$ as well. Hence, for $\Gamma^{\dagger}$,  the atom at $x^*$ must find some $y^*\in\mathcal{Y}$ and partial-treatment atom locations to be coupled to such that $y^* - x^* \leq k$, for $\varphi_{\Gamma^{\dagger}} = \varphi_{\Gamma_{i+1}}$ to hold. By the same explanation as above, at least one of $y^*$ or these partial-treatment atom locations must be already coupled. %Hence, no possibility of creating an augmenting path with regards to satisfying the objective $y-x\leq k$. 
In other words, $\Gamma^{\dagger}$ does not allow for another element satisfying $y-x\leq k$, but it would require one more so that $\varphi_{\Gamma^{\dagger}} = \varphi_{\Gamma_{i+1}}$; hence, a contradiction.

For item \ref{item:B-eta}: 
 For the base step: Let $i=m$; thus $\mathcal{X}_{[i]} = \{x_m\}$, and existence and optimality are trivial. For the inductive step: For any $i\in\{1, \ldots, m-1\}$, we assume that $\Gamma_{i+1}$, the coupling multiset of $\mathcal{X}_{[i+1]}$, $\mathcal{Z}_1,\ldots,\mathcal{Z}_h$ and $\mathcal{Y}$ produced by the algorithm, is valid and optimal. We need to prove it for $\Gamma_i$, the coupling multiset of $\mathcal{X}_{[i]}$ and the others produced by the algorithm. The argument for validity is the same as for item~\ref{item:A-eta} and we omit it. For optimality: For any coupling multiset $\Gamma$,  define $\psi_{\Gamma} = \sum_{(x,z_1,\ldots,z_h, y)\in \Gamma} a \id_{\{y-x\leq k\}}$ where $a$ is the mass of an atom. The maximal value for $\sum_{(x,z_1,\ldots,z_h, y)\in \Gamma} a \id_{\{y-x > k\}}$ is attained if $\psi_{\Gamma}$ is smallest. Therefore, we have that $\psi_{\Gamma_{i+1}}$ is smallest for coupling multisets of $\mathcal{X}_{[i+1]}$, $\mathcal{Z}_1,\ldots,\mathcal{Z}_h$ and $\mathcal{Y}$, and we need to prove that $\psi_{\Gamma_i}$ is smallest for coupling multisets of $\mathcal{X}_{[i]}$ with the others. Because these coupling multisets require one more element than $\Gamma_{i+1}$, we have $\psi_{\Gamma} \geq \psi_{\Gamma_{i+1}}$ for any $\Gamma$ of those. Let $c_1, \ldots, c_h, d$ be the indices chosen by Algorithm~\ref{algo:B-eta} so $(x_i, z_{1,c_1}, \ldots, z_{h,c_h}, y_d)\in \Gamma_i$. There are two possibilities regarding $y_d$. Either $y_d-x_i> k$, in which case $\psi_{\Gamma_i} = \psi_{\Gamma_{i+1}}$ and $\psi_{\Gamma_i}$ is therefore smallest, or $y_d - x_i \leq k$, for which we need to further argue optimality. In that latter case, $\psi_{\Gamma_i} = \psi_{\Gamma_{i+1}} + a$. Define $\Gamma^{\dagger}$, a coupling multiset such that $\psi_{\Gamma^{\dagger}} = \psi_{\Gamma_{i+1}}$; we argue by contradiction that $\Gamma^{\dagger}$ cannot exist.  

Let $c_1^{\dagger}, \ldots, c_h^{\dagger}, d^{\dagger}$ be the indices such that $(x_i, z_{1,c_1^{\dagger}}, \ldots, z_{h,c_{h}^{\dagger}}, y_{d^{\dagger}})\in \Gamma^{\dagger}$. It must hold that $y_{d^{\dagger}} - x_i > k$; else, we would have $\psi_{\Gamma^{\dagger}\backslash \{(x_i,z_{1,c_1^{\dagger}}, \ldots,z_{h,c_h^{\dagger}}, y_{d^{\dagger}})\}} = \psi_{\Gamma_{i+1}} - a$, contradicting the optimality of $\psi_{\Gamma_{i+1}}$.
At least one of $z_{1,c_1^{\dagger}}, \ldots, z_{h,c_h^{\dagger}}, y_{d^{\dagger}}$ must be coupled in $\Gamma_{i+1}$ because otherwise the algorithm would have selected that tuple for $\Gamma_{i}$. Since $y_d - x_i \leq k$ and $y_{d^{\dagger}} - x_i > k$, we have $y_{d}< y_{d^{\dagger}}$. Therefore, the reason for which the algorithm could not select $y_{d^{\dagger}}$ must be because it was the one that was already coupled --- otherwise $z_{1,c_1}, \ldots, z_{h, c_h}$ would be admissible to be coupled with $x_i$ and $y_{d^{\dagger}}$ in $\Gamma_i$,  and the algorithm would have selected that. 

Let $x^*, z_1^*, \ldots, z_h^*$ be such that $(x^*, z_1^*, \ldots, z_h^*, y_{d^{\dagger}})\in \Gamma_{i+1}$. In $\Gamma^{\dagger}$, the atom at $x^*$ is coupled to another atom location in $\mathcal{Y}$, say $y^*$. Recall that $\Gamma_{i+1}$ was constructed by Algorithm~\ref{algo:B-eta}, and there are two possibilities regarding $x^*$ and $y_{d^{\dagger}}$: either $y_{d^{\dagger}} - x^* \leq k$ or $y_{d^{\dagger}} - x^* > k$. If $y_{d^{\dagger}} - x^* \leq k$, this means that all $y\in\mathcal{Y}$ such that $y- x^* > k$ were already coupled, otherwise the algorithm could have selected $y$ since $(x^*, z_1^*, \ldots, z_h^*, y)$ would be valid, as $y>y_{d^{\dagger}}$. 
Also, it means that all $y\in\mathcal{Y}$ such that $x^*\leq y\leq y_{d^{\dagger}}$ were already coupled or could not satisfy the partial treatment constraint. Hence, this leaves that $y^*$ must be coupled in $\Gamma_{i+1}$, or that $y^*$ is uncoupled and $y^*\in[y_{d^{\dagger}}, k+x^*]\cap \mathcal{Y}$. This latter option contradicts that there were no uncoupled admissible $y_d$ such that $y_d - x_i > k$; indeed $y^* - x_i > k$, and $z_{1}^*,\ldots z_h^*$ could have committed to that coupling. So $y^*$ must be coupled in $\Gamma_{i+1}$ in that case. 
If $y_{d^{\dagger}} - x^* > k$, then it must be that $y^* - x^*> k$ for $\psi_{\Gamma^{\dagger}} = \psi_{\Gamma_{i+1}}$ to hold. This means, from the same explanation as above, that $y^*$ was already coupled to another location in $\mathcal{X}$. Combining the arguments for both the cases $y_{d^{\dagger}} - x^* \leq k$ and $y_{d^{\dagger}}- x^* > k$, we have a contradiction to the existence of $\Gamma^{\dagger}$, as no elements of $\mathcal{Y}$ uncoupled in $\Gamma_{i+1}$ may be coupled in $\Gamma^{\dagger}$, but it would require one more element. \hfill $\square$

\section{Directional optimal transport}
\label{sect:dl}

Directional optimal transport is a notion closely related to the set $\mathcal{H}(\mu,\nu)$ in the case where $\mu$ and $\nu$ are probability measures; it allows a better understanding of the feasible dependence schemes on that set. For a measure $\pi$ on $\mathbb{R}^2$, its joint cdf is given by $F_\pi(x,y)= \pi((-\infty,x], (-\infty,y])$. 

\citet[Theorem~6]{AMZ20} state that the joint cdf of any element $\pi$ of $\mathcal{H}(\mu,\nu)$, where $\mu,\nu\in\mathcal{M}$ are probability measures, is bounded as such:
\begin{align}
	L_{\mu,\nu}(x,y) \leq F_{\pi}(x,y)&\leq \min(F_\mu(x), F_{\nu}(y)), 
	\label{eq:arnold}\\
	\text{where}\quad L_{\mu,\nu}(x,y) &= \begin{cases}
		F_{\nu}(y), & x\geq y,\\
		F_{\mu}(x) - \inf_{x\leq s \leq y}\{F_{\mu}(s) - F_{\nu}(s)\} &x<y, 
	\end{cases} \quad \text{for all } x,y\in\mathbb{R}.\notag
\end{align}
Thus, the dependence schemes in $\mathcal{H}(\mu,\nu)$ are bounded with respect to the concordance order. (On the concordance order, see for example \citet[Chapter 9]{SS07}.) 
The upper-bound in \eqref{eq:arnold} corresponds to comonotonicity.   
Both bounds are attained.

\begin{figure}[tb]
		\centering
		\begin{tikzpicture}
			\draw[Black!40,  ->, thick] (-0.5,0) -- (7.5, 0);
			  \node at (-1.25, 0.75) [rectangle] {$X\lawis \mu$};
          \node at (-1.25, 3.25) [rectangle] {$Y \lawis \nu$};
			\foreach \x in {0,1,...,7}
			\draw[Black!40,  thick] (\x cm,-2pt) node[below]{\x} -- (\x cm,2pt);
			
			\filldraw (3.5,3.25) circle (0.1 cm);
			\filldraw (4,3.25) circle (0.1 cm);
			\filldraw (5.5,3.25) circle (0.1 cm);
			\filldraw (6,3.25) circle (0.1 cm);
			\filldraw (7,3.25) circle (0.1 cm);
			\filldraw (3,3.25) circle (0.1 cm);
			
			\draw[CornflowerBlue, very thick, ->] (4.5, 0.75) -- (5.5, 3.25);
			\draw[CornflowerBlue, very thick, ->] (4, 0.75) -- (4, 3.25);
			\draw[CornflowerBlue, very thick, ->] (2, 0.75) -- (3.5, 3.25);
			\draw[CornflowerBlue, very thick, ->] (0.5, 0.75) -- (6, 3.25);
			\draw[CornflowerBlue, very thick, ->] (0,0.75) -- (7,3.25); 
			\draw[CornflowerBlue, very thick, ->] (3,0.75) -- (3,3.25); 
			
			\filldraw (3,0.75) circle (0.1 cm);
			\filldraw (0,0.75) circle (0.1 cm);
			\filldraw (0.5,0.75) circle (0.1 cm);
			\filldraw (2,0.75) circle (0.1 cm);
			\filldraw (4,0.75) circle (0.1 cm);
			\filldraw (4.5,0.75) circle (0.1 cm);
		\end{tikzpicture}	
	\caption{Illustration of DL-coupling for two discrete distributions with atoms of the same size.}
	\label{fig:dl}
\end{figure}

In \cite{NW22}, the authors study $\mathcal{H}(\mu,\nu)$, and particularly $L_{\mu,\nu}$, from the perspective of optimal transport. They introduce directional optimal transport to describe the class of transports from $\mu$ to $\nu$ yielding elements of $\mathcal{H}(\mu,\nu)$. They call the specific coupling rendering $L_{\mu,\nu}$ the DL-coupling and provide a construction for it, transcribed in the following proposition.

\begin{proposition}[Theorems 2.1--2.2(a) of \cite{NW22}, reformulated]
Consider probability measures $\mu,\nu\in\mathcal{M}$ such that $\mu\leq_{\rm st}\nu$. There exists a unique coupling $\pi^{\rm dl}_{\mu,\nu}$ supported on $\mathbb{H}$  
 defined by coupling, for every $x\in\mathbb{R}$, $\mu\vert_{(x,\infty)}$ to $\nu_x$, the latter described by its cdf as such:
\begin{equation*}
	F_{\nu_x} = \sup\{ F_\theta : \theta\in \mathfrak{S}_x\} \quad \text{where } \mathfrak{S}_x = \{\theta: \mu\vert_{(x,\infty)} \leq_{\rm st} \theta \leq \nu\}.
\end{equation*}
Then, the joint cdf of $\pi^{\mathrm{dl}}_{\mu,\nu}$ is $L_{\mu,\nu}$. Moreover, for every $x\in\mathbb{R}$, $\nu_x$ is the unique minimal element of $S_x$ with respect to the order $\leq_{\rm st}$. 
\end{proposition}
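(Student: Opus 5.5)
The plan is to make $\nu_x$ explicit, check that it is the least element of $\mathfrak{S}_x$, and then assemble $\pi^{\rm dl}_{\mu,\nu}$ from the family $\{\nu_x\}$ by a layer-cake construction in $x$. Throughout, write $\mu_x:=\mu\vert_{(x,\infty)}$ and $m_x:=\mu((x,\infty))$.

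\textbf{Step 1: explicit form and minimality of $\nu_x$.} I read $\mu_x\le_{\rm st}\theta$ as requiring $\theta(\R)=m_x$, since otherwise the supremum of cdfs would send mass to zero. For $\theta\in\mathfrak{S}_x$ and any $s\le y$, three facts bound $F_\theta(y)$.
\begin{itemize}
\item[(a)] Stochastic dominance gives $F_\theta(s)\le m_x-S_{\mu_x}(s)$.
\item[(b)] The domination $\theta\le\nu$ gives $F_\theta(y)-F_\theta(s)\le\nu((s,y])$.
\item[(c)] Taking $s=-\infty$ in (b) gives $F_\theta(y)\le F_\nu(y)$.
\end{itemize}
Hence
\[
F_\theta(y)\le G_x(y):=\min\Big\{F_\nu(y),\ \inf_{s\le y}\big[m_x-S_{\mu_x}(s)+\nu((s,y])\big]\Big\}.
\]
I will then verify three properties of $G_x$.
\begin{itemize}
\item[(i)] $G_x$ is right-continuous and increasing, and its increments satisfy $G_x(y')-G_x(y)\le\nu((y,y'])$. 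This holds because each function in the infimum has exactly the $\nu$-increments in $y$, and enlarging the range of $s$ only lowers the infimum. So $G_x$ is the cdf of a measure $\nu_x\le\nu$.
\item[(ii)] Taking $s=y$ gives $G_x\le m_x-S_{\mu_x}$, that is $\mu_x\le_{\rm st}\nu_x$.
\item[(iii)] $\lim_{y\to\infty}G_x(y)=m_x$. This uses $\mu\le_{\rm st}\nu$, so that $S_\nu\ge S_\mu$ keeps the infimum from dropping below $m_x$ at infinity.
\end{itemize}
Thus $\nu_x\in\mathfrak{S}_x$ and $F_{\nu_x}=\sup\{F_\theta:\theta\in\mathfrak{S}_x\}$. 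Consequently $\nu_x$ is the unique $\le_{\rm st}$-minimal element, since a measure is determined by its cdf and mass.

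\textbf{Step 2: nesting and construction of the coupling.} For $x'<x$ I need $\nu_x\le\nu_{x'}$ as measures, so that $\nu_{x'}-\nu_x$ can receive $\mu\vert_{(x',x]}$. The approach I will try first is to show $G_{x'}-G_x$ is increasing. The difference $m_{x'}-S_{\mu_{x'}}(s)-(m_x-S_{\mu_x}(s))=\mu((x',x]\cap(-\infty,s])$ is increasing in $s$, and I will combine this with the minimizing $s$ in the formula for $G_x$.

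Given nesting, define $\pi$ by $\pi((x,\infty)\times B)=\nu_x(B)$ for all $x$ and Borel $B$. The family is increasing as $x\downarrow$ and has $x$-marginal mass $m_x$, so it determines a unique measure on $\R^2$ with first marginal $\mu$ and second marginal $\nu_{-\infty}\le\nu$; equal to $\nu$ when both are probability measures. Uniqueness of $\pi^{\rm dl}_{\mu,\nu}$ is immediate, because these half-plane values determine $\pi$.

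For the support, $\mathrm{supp}(\pi)\subseteq\mathbb{H}$ amounts to $\pi((x,\infty)\times(-\infty,x])=0$. This equals $F_{\nu_x}(x)=G_x(x)$, which vanishes by the choice $s=x$, since $S_{\mu_x}(x)=m_x$.

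\textbf{Step 3: the joint cdf.} Compute $F_\pi(x,y)=F_\nu(y)-G_x(y)$ and substitute the formula for $G_x$.
\begin{itemize}
\item For $x\ge y$, the choice $s=y$ gives $G_x(y)=0$, so $F_\pi(x,y)=F_\nu(y)$.
\item For $x<y$, the terms with $s<x$ are dominated by the one with $s=x$. For $x\le s\le y$ the term inside the infimum is $\mu((x,s])+F_\nu(y)-F_\nu(s)$. This yields $F_\pi(x,y)=F_\mu(x)-\inf_{x\le s\le y}\{F_\mu(s)-F_\nu(s)\}=L_{\mu,\nu}(x,y)$.
\end{itemize}

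I expect the \textbf{main obstacle} to be the measure-wise nesting $\nu_x\le\nu_{x'}$ in Step 2. It is a monotonicity property of an infimum-convolution formula rather than a pointwise bound, and if the direct computation resists, I will instead argue greedily on atomic approximations, in the spirit of Algorithm~\ref{algo:A}, and pass to the limit.
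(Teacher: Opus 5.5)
The paper does not prove this proposition: it quotes it from \cite{NW22}, and $L_{\mu,\nu}$ is the lower bound of \cite{AMZ20}. Your argument is therefore an independent, self-contained route, and its architecture is sound. A single explicit formula for $G_x=F_{\nu_x}$ gives the least-element property, the support condition $G_x(x)=0$, the second marginal, and the cdf $L_{\mu,\nu}$ by direct substitution; Step~3 is correct. The citation buys brevity and the further structure of \cite{NW22}, which this statement does not need.

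Your equal-mass reading of $\mu\vert_{(x,\infty)}\le_{\rm st}\theta$ is the right one; it is the convention of \cite{NW22}. The reason, however, is not the one you give. Under the paper's survival-function definition, $\nu$ itself lies in $\mathfrak{S}_x$, so the supremum would simply be $F_\nu$.

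One small point in (i): your justification does not cover the monotonicity of $G_x$. It follows because the terms added for $s\in(y,y']$ are at least $(F_\mu(y)-F_\mu(x))^+\ge G_x(y)$.

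As written, the proof is incomplete at Step~2. The nesting $\nu_x\le\nu_{x'}$ for $x'<x$ is only announced. Without it, the prescription $\pi((x,\infty)\times B)=\nu_x(B)$ defines no measure, so the whole existence claim rests on this step. It does close along the line you suggest, and no minimizer is needed.

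Put $\phi_x(s)=(F_\mu(s)-F_\mu(x))^+-F_\nu(s)$ and $I_x(y)=\inf_{s\le y}\phi_x(s)$. Since $\phi_x(s)\to0$ as $s\to-\infty$, the $F_\nu(y)$ branch of your minimum is absorbed, and $G_x(y)=F_\nu(y)+I_x(y)$. For $x'<x$ one has $\phi_{x'}=\phi_x+D$, where $D(s)=\mu((x',x]\cap(-\infty,s])$ is nondecreasing. Nesting is then equivalent to $I_{x'}(y)-I_{x'}(y')\le I_x(y)-I_x(y')$ for all $y<y'$.

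Because $I(y')=\min\{I(y),\inf_{(y,y']}\phi\}$, it suffices to show $I_{x'}(y)-\phi_{x'}(s)\le I_x(y)-\phi_x(s)$ for each $s\in(y,y']$. This follows from two facts:
\begin{itemize}
\item $I_{x'}(y)\le I_x(y)+D(y)$, since $D(t)\le D(y)$ for $t\le y$;
\item $\phi_{x'}(s)\ge\phi_x(s)+D(y)$, since $D(s)\ge D(y)$.
\end{itemize}

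Your half-plane construction also needs right-continuity of $x\mapsto\nu_x$, and this is then automatic. For $x_n\downarrow x$, the increasing limit of $\nu_{x_n}$ is dominated by $\nu_x$ and has mass $\lim_n m_{x_n}=m_x=\nu_x(\R)$, so it equals $\nu_x$. With this inserted, no atomic approximation is required. Your argument also shows directly that $L_{\mu,\nu}$ is $2$-increasing, i.e.\ that the lower bound of \cite{AMZ20} is attained.
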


For atomic distributions, the transport may be interpreted as follows: going iteratively from largest to smallest atomic location in $\mu$, one couples each atom in $\mu$, say at location $x$, to the atom in $\nu$ with the smallest location $y$ such that $y\geq x$. Figure~\ref{fig:dl} provides an example of the DL-coupling for such distributions.

\section{Risk aggregation under DU and an order constraint}
\label{sect:CHEN}

% \citet[Section 5.3]{CLW22} examine the best- and worst-case probabilities of excess for an aggregate risk position with an order constraint: for $k\in\mathbb{R}$, 
%     \begin{align}
% 	M_k^{\inf}(\mu,\nu) &= \inf\{\mathbb{P}(X+Y > k): X\lawis \mu, Y\lawis\nu, Y\geq X  \};
% 		\label{eq:chenprob-inf}\\
% 	M_k^{\sup}(\mu,\nu) &=	\sup\{\mathbb{P}(X+Y \geq k): X\lawis \mu, Y\lawis\nu, Y \geq X \}.
% 		\label{eq:chenprob-sup}
% \end{align}

% Problems \eqref{eq:chenprob-inf}--\eqref{eq:chenprob-sup} are very similar to \eqref{eq:chenminus-inf}--\eqref{eq:chenminus-sup}; the only difference is a sign flip in the ordering constraint. This sign flip, however, makes it so that we cannot apply the solution from \cite{CLW22} to solve \eqref{eq:chenminus-inf}--\eqref{eq:chenminus-sup}.

In \cite[Section 5.3]{CLW22}, the authors solve \eqref{eq:chenminus-inf}--\eqref{eq:chenminus-sup} by obtaining the best-case and worst-case $\mathrm{VaR}_p$ with an explicit stochastic construction, for every value of $p$, and then inverting. For that matter, they rely on the fact that worst-case $\mathrm{VaR}_p(X)$ is solved by $p$-concentrated random vectors, see e.g.~\cite{LW21}.\footnote{A random vector is $p$-concentrated if its components share a common $p$-tail event; an event $A\in\mathcal{F}$ is a $p$-tail event of a random variable $Z$ if $\mathbb{P}(A) = 1-p$ and, for every $\omega\in A$ and $\omega^{\prime}\in\Omega\backslash A$, we have $Z(\omega)\geq Z(\omega^{\prime})$. See \cite{WZ21} on both of these concepts.  } 
This is made possible by the fact that $p$-concentration of $(X,Y)$ is always possible while satisfying the constraint $X\leq Y$, unlike $(-X, Y)$.

While the solution from \cite{CLW22} does not apply to \eqref{eq:monotoneeffectiveness-inf}--\eqref{eq:monotoneeffectiveness-sup}, the coupling heuristic from Algorithms~\ref{algo:A} and~\ref{algo:B}, on the other hand, allows to solve \eqref{eq:chenminus-inf}--\eqref{eq:chenminus-sup}. A small adjustment, however, is required to account for the sign flip. This adjustment is presented in Algorithms~\ref{algo:A-star} and~\ref{algo:B-star} below. Note that the algorithms also solve analogues to \eqref{eq:chenminus-inf}--\eqref{eq:chenminus-sup} for non-probability measures.

%as we explained in Section~\ref{sect:problems34}. 

\renewcommand{\thealgofloat}{$\mathrm{A}^*$}

\begin{algofloat}
\centering

   \caption{}
    \fbox{
    \begin{minipage}{0.9\textwidth}
        %\textbf{Algorithm A*}\\
        \onehalfspacing
        Execute Algorithm~\ref{algo:A}, but change the first repeated substep to 
        \begin{itemize}
                \item[$\triangleright$] Check whether $\mathcal{J}_i := \{j\in \mathcal{I}_i:  x_i\leq y_j\leq k - x_i\}$ is empty.
            \end{itemize}
        and the last step to
        \begin{itemize}
            \item[$\blacktriangleright$] Return $M^{\mathrm{A}^*}_k(\mu,\nu) := a(\sum_{i=1}^m \id_{\{y_{d_i} + x_i > k\}})$. 
        \end{itemize}
    \end{minipage}
    }
    \label{algo:A-star}
\end{algofloat}

\renewcommand{\thealgofloat}{$\mathrm{B}^*$}

\begin{algofloat}
\centering

   \caption{}
    \fbox{
    \begin{minipage}{0.9\textwidth}
        %\textbf{Algorithm B*}\\
        \onehalfspacing
        Execute Algorithm~\ref{algo:B}, but change the first repeated substep to 
        \begin{itemize}
                \item[$\triangleright$] Check whether $\mathcal{J}_i := \{j\in \mathcal{I}_i:   y_j\geq k - x_i, \, y_j \geq x_i\}$ is empty.
            \end{itemize}
        and the last step to
        \begin{itemize}
            \item[$\blacktriangleright$] Return $M^{\mathrm{B}^*}_k(\mu,\nu) := a(\sum_{i=1}^m \id_{\{y_{d_i} + x_i \geq k\}})$. 
        \end{itemize}
    \end{minipage}
    }
    \label{algo:B-star}
\end{algofloat}

%The outcomes of Algorithms~\ref{algo:A-star} and \ref{algo:B-star} are the solutions to problems in \eqref{eq:chenprob-inf} and \eqref{eq:chenprob-sup} for atomic $\mu$ and $\nu$, as we indicate in the next result. 

\begin{proposition}
    Let $\mu,\nu\in\mathcal{M}$ be atomic with the same size, and suppose that $\mu\leq_{\rm st}\nu$. Then, $M^{\inf}_k(\mu,\nu) = M^{\mathrm{A}^*}_k(\mu,\nu)$ and $M^{\sup}_k(\mu,\nu) = M^{\mathrm{B}^*}_k(\mu,\nu)$, for any $k\geq 0$. 
\end{proposition}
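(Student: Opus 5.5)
We mirror the proof of Theorem~\ref{th:Q-inf}, with the constraint $y-x\le k$ replaced by $x+y\le k$ (and $y-x>k$ by $x+y\ge k$).

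\textbf{Reduction to atomic couplings.} First we reduce to couplings that are atomic with size $a$, using the same kernel-reattribution scheme as in \eqref{eq:reattrib-inf}--\eqref{eq:reattrib-sup}. We process $x_m,\dots,x_1$ in descending order.

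For $M^{\inf}_k$, at step $i$ we pin $x_i$ to the largest $y$ in the support of $K(x_i,\cdot)$ satisfying $x_i+y\le k$. If none exists, we pin it to the largest $y$ in that support. Any other $x\le x_i$ that shared this $y$ inherits proportionally the remaining targets of $x_i$.

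The key observation is that the objective is still monotone in the right direction. If $x_i+y^*\le k$, then $y^*$ is the largest target with $x_i+y\le k$. Mass newly sent from some $x\le x_i$ to a former target $y$ of $x_i$ with $x+y>k$ forces $x_i+y>k$ as well, so that mass was already counted. If instead $x_i+y^*>k$, all of $x_i$'s mass was already counted, and the other altered $x$ now receive smaller targets.

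For $M^{\sup}_k$ the argument is symmetric: we pin to the smallest target with $x+y\ge k$, and otherwise to the smallest target overall. The inequality direction follows the same way.

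\textbf{Induction on atoms of $\mu$.} Next we run the induction over $\mathcal{X}_{[i]}$. Because atoms are processed from right to left, $\Gamma_i=\Gamma_{i+1}\cup\{(x_i,y_d)\}$. Validity (existence of an uncoupled $y\ge x_i$) follows from $\mu\le_{\rm st}\nu$ exactly as before.

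For the infimum, suppose the new pair has $x_i+y_d>k$ but some $\Gamma^\dagger$ achieves $\varphi_{\Gamma^\dagger}=\varphi_{\Gamma_{i+1}}$. Then $x_i$ is paired in $\Gamma^\dagger$ with some $y^\dagger\le k-x_i$, $y^\dagger\ge x_i$, and $y^\dagger$ must already be used in $\Gamma_{i+1}$ by some $x^*\ge x_i$. The displaced $x^*$ needs a replacement $y'$ with $y'\le k-x^*$ and $y'\ge x^*$.

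This is the main obstacle, and where the sign flip matters. Since $x^*\ge x_i$, we have $k-x^*\le k-x_i$, so the window for $x^*$ sits to the left of, and is narrower than, the window for $x_i$. The algorithm gave $x^*$ the largest available $y$ in its window at its turn, and that $y$ is $y^\dagger$. Hence every admissible replacement satisfies $y'\le y^\dagger$ and was occupied at $x^*$'s turn, or was taken later by some atom $\le x^*$. Chasing this chain shows that no atom free in $\Gamma_{i+1}$ can ever be reached, which contradicts the fact that $\Gamma^\dagger$ needs one additional $y$. I would organize this as an augmenting-path argument: every $y$ in the chain lies in a window $[x,k-x]$ whose upper end only decreases along the chain, so all free atoms stay inaccessible. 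The monotonicity of upper endpoints must be checked carefully, because the lower endpoints $x$ also move.

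For the supremum, the set is $\{y\ge\max(x_i,k-x_i)\}$ and the algorithm picks its smallest available element. Suppose $x_i+y_d<k$ but $\Gamma^\dagger$ pairs $x_i$ with some $y^\dagger\ge k-x_i$ that is used in $\Gamma_{i+1}$ by $x^*\ge x_i$. We split into the cases $x^*+y^\dagger<k$ and $x^*+y^\dagger\ge k$, as in the proof of item~\ref{item:B}. In each case the required replacement is either already used or is a free $y\ge y^\dagger$, which the algorithm would have offered to $x_i$ since then $x_i+y\ge k$. This gives a contradiction.
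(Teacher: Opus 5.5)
Your plan is the one the paper intends: it omits this proof as ``very similar'' to Theorem~\ref{th:Q-inf}. The reduction to size-$a$ couplings and the validity step do transfer. The optimality step for $\mathrm{A}^*$ does not transfer as written.

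\textbf{The gap in the $\mathrm{A}^*$ step.} In Theorem~\ref{th:Q-inf}\ref{item:A} the displaced atom is automatically good, because $x^*\ge x_i$ gives $y^\dagger-x^*\le y^\dagger-x_i\le k$. With $x+y$ the inequality runs the other way: $x^*\ge x_i$ gives $x^*+y^\dagger\ge x_i+y^\dagger$. So nothing you write rules out that $x^*$ was a fallback atom of $\Gamma_{i+1}$, in which case it only needs some $y'\ge x^*$. Your claims that $x^*$ ``needs a replacement $y'\le k-x^*$'' and that ``the algorithm gave $x^*$ the largest available $y$ in its window'' are therefore unjustified.

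The invariant you propose for the augmenting path, upper ends $k-x$ decreasing along the chain, is also false in general; you leave it unchecked yourself. If $x^*$'s replacement $y'$ was still free at $x^*$'s turn, then $y'\le y^\dagger$, and it was later taken by some $x^{**}\in[x_i,x^*]$. That atom's window $[x^{**},k-x^{**}]$ is larger, so the upper end increases. The key step of the $\mathrm{A}^*$ part is therefore not established.

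\textbf{A chain-free fix for $\mathrm{A}^*$.} The sign flip actually allows a simpler argument, because the windows are nested: $[x,k-x]\subseteq W:=[x_i,k-x_i]$ for every $x\ge x_i$. Let $N_W$ be the number of atoms of $\nu$ in $W$. Then any coupling of $\mathcal{X}_{[i]}$ has at most $N_W$ pairs with $x+y\le k$.

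Now suppose $\mathcal{J}_i=\emptyset$. Then all atoms of $\nu$ in $W$ are used in $\Gamma_{i+1}$, and none of them by a fallback atom. Indeed, suppose a fallback atom $x'$ had taken $y\in W$ as the largest available atom. Then at $x_i$'s turn every available atom would be $\le y\le k-x_i$. Your validity argument supplies an available atom $\ge x_i$, which would then lie in $W$, contradicting $\mathcal{J}_i=\emptyset$. Hence $\Gamma_{i+1}$ already has $N_W$ good pairs, so $\Gamma_i$ is optimal and no chain is needed. This also repairs your first link.

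\textbf{The $\mathrm{B}^*$ part.} Your first case is vacuous, since $x^*\ge x_i$ and $y^\dagger\ge k-x_i$ force $x^*+y^\dagger\ge k$. You close only the branch where the replacement is free. The ``already used'' branch is left open, as it is in the paper's proof of item~\ref{item:B}.

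One way to close it uses a vertex cover. The good sets are rays $[t(x),\infty)$ with $t(x)=\max(x,k-x)$. Once some atom $x_b$ falls back, every later atom falls back too, because for $x\le x_b<k/2$ these rays shrink. Let $g$ be the largest atom available below $t(x_b)$ at $x_b$'s turn (or $g=-\infty$ if there is none). Then $\{y>g\}\cup\{x:t(x)\le g\}$ is a vertex cover of the good-pair graph. Its size equals the number of good pairs the algorithm produces, so that number is maximal by K\"onig's theorem.
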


\begin{proof}
    The proof is very similar to that  of Theorem~\ref{th:Q-inf} and thus it is omitted. 
\end{proof}

Algorithms \ref{algo:A-star} and \ref{algo:B-star} are more efficient than the method of \cite{CLW22}, as they do not involve   numerical inversion. The results of Propositions~\ref{prop:atomization} and~\ref{prop:convergence} can straightforwardly be extended to this case to accommodate non-atomic distributions.

 We present in Figure~\ref{fig:D-chen} two illustrations of the transport plan produced by Algorithm~\ref{algo:A-star}. The coupling principle is the same as for Algorithm~\ref{algo:A}: going right to left through atoms of $\mu$, we couple each to the yet-uncoupled atom of $\nu$ at the largest location such that $x+y\leq k$, if possible, and otherwise to the yet-uncoupled atom of $\nu$ at the largest location of all. In a similar vein, 
Figure~\ref{fig:E-chen} provides two illustrations of the transport plan produced by Algorithm~\ref{algo:B-star}. The coupling principle is the same as for Algorithm~\ref{algo:B}: going right to left, each atom of $\mu$ was coupled to the yet-uncoupled atom of $\nu$ at the smallest location such that $x+y\geq k$ and $y\geq x$, and, if not possible, to the one at the smallest location such that $y\geq x$. 
Colours in both figures make explicit that indeed $\mu$ and $\nu$ are $p$-concentrated in these examples.

 \begin{figure}[tb]
 	\begin{minipage}{0.49\textwidth}
 		\centering
 		\begin{tikzpicture}[xscale=1.8]
 			\draw[Black!40,  ->, thick] (-0.25,0) -- (4.25, 0);
 			
 			\foreach \x in {0,1,...,4}
 			\draw[Black!40,  thick] (\x cm,-2pt) node[below]{\x} -- (\x cm,2pt);
 			
 			\node  at (1,3.25) [circle, fill = Black, minimum size=0.2 cm, inner sep=0pt, draw=none] {};
 			\node  at (1.5,3.25) [circle, fill = Black, minimum size=0.2 cm, inner sep=0pt, draw=none] {};
 			\node  at (3,3.25) [circle, fill = Black, minimum size=0.2 cm, inner sep=0pt, draw=none] {};
 			\node  at (3.5,3.25) [circle, fill = Black, minimum size=0.2 cm, inner sep=0pt, draw=none] {};
 			\node  at (4,3.25) [circle, fill = Black, minimum size=0.2 cm, inner sep=0pt, draw=none] {};
 			
 			\draw[LimeGreen, very thick, ->] (3, 0.75) -- (4, 3.25);
 			\draw[LimeGreen, very thick, ->] (2, 0.75) -- (3.5, 3.25);
 			\draw[Red, very thick, ->] (1.5, 0.75) -- (1.5, 3.25);
 			\draw[Red, very thick, ->] (1, 0.75) -- (1, 3.25);
 			\draw[Red, very thick, ->] (0,0.75) -- (3,3.25); 
 			
 			\node  at (0,0.75) [circle, fill = Black, minimum size=0.2 cm, inner sep=0pt, draw=none] {};
 			\node  at (1,0.75) [circle, fill = Black, minimum size=0.2 cm, inner sep=0pt, draw=none] {};
 			\node  at (1.5,0.75) [circle, fill = Black, minimum size=0.2 cm, inner sep=0pt, draw=none] {};
 			\node  at (2,0.75) [circle, fill = Black, minimum size=0.2 cm, inner sep=0pt, draw=none] {};
 			\node  at (3,0.75) [circle, fill = Black, minimum size=0.2 cm, inner sep=0pt, draw=none] {};
 		\end{tikzpicture}	
 	\end{minipage}
 	\begin{minipage}{0.49\textwidth}
 		\centering
 		\begin{tikzpicture}[xscale=1.8]
 			\draw[Black!40,  ->, thick] (-0.25,0) -- (4.25, 0);
 			
 			\foreach \x in {0,1,...,4}
 			\draw[Black!40, thick] (\x cm,-2pt) node[below]{\x} -- (\x cm,2pt);
 			
 			\filldraw[LimeGreen] (3,3) -- (3,3.25) -- (4, 3.25) -- (4,3) ;
 			\filldraw[Red] (0,3) -- (0,3.25) -- (3, 3.25) -- (3,3) ;
 			\draw[Black, very thick] (0,3) -- (0,3.25) -- (4, 3.25) -- (4,3) -- (0,3) ;
 			
 			\draw[LimeGreen, very thick, ->] (1.75,1) -- (3.5,3); 
 			\draw[Red, very thick, ->] (1.25, 1) -- (1.25, 3);
 			\draw[Red, very thick, ->] (1.25, 1) -- (1.75, 3);
 			\draw[Red, very thick, ->] (0.75, 1) -- (0.75, 3);
 			\draw[Red, very thick, ->] (0.75, 1) -- (2.25, 3);
 			\draw[Red, very thick, ->] (0.25, 1) -- (0.25, 3);
 			\draw[Red, very thick, ->] (0.25, 1) -- (2.75, 3);

 			\filldraw[LimeGreen] (1.5,0.5) -- (1.5,1) -- (2, 1) -- (2,0.5) ;
 			\filldraw[Red] (0,0.5) -- (0,1) -- (1.5, 1) -- (1.5,0.5) ;
 			\draw[Black, very thick] (0,0.5) -- (0,1) -- (2, 1) -- (2,0.5) -- (0,0.5) ;
 		\end{tikzpicture}	
 	\end{minipage}
 	\caption{Illustration of the transport plan produced by Algorithm \ref{algo:A-star}, for $k=3$: (Left.) for discrete distributions with equally weighted masses; and (Right.) for $\mu = $ Uniform$(0,2)$ and $\nu = $ Uniform$(0,4)$. In green are the transport portions such that $x+y > 3$; in red are the ones such that $x+y \leq 3$.}
 	\label{fig:D-chen}
 \end{figure}

\begin{figure}[tb]
	\begin{minipage}{0.49\textwidth}
		\centering
		\begin{tikzpicture}[xscale=1.8]
			\draw[Black!40,  ->, thick] (-0.25,0) -- (4.25, 0);
			
			\foreach \x in {0,1,...,4}
			\draw[Black!40,  thick] (\x cm,-2pt) node[below]{\x} -- (\x cm,2pt);

			\node  at (0.5,3.25) [circle, fill = Black, minimum size=0.2 cm, inner sep=0pt, draw=none] {};
			\node  at (1,3.25) [circle, fill = Black, minimum size=0.2 cm, inner sep=0pt, draw=none] {};
			\node  at (1.5,3.25) [circle, fill = Black, minimum size=0.2 cm, inner sep=0pt, draw=none] {};
			\node  at (2,3.25) [circle, fill = Black, minimum size=0.2 cm, inner sep=0pt, draw=none] {};
			\node  at (3,3.25) [circle, fill = Black, minimum size=0.2 cm, inner sep=0pt, draw=none] {};
			\node  at (4,3.25) [circle, fill = Black, minimum size=0.2 cm, inner sep=0pt, draw=none] {};
			
			\draw[LimeGreen, very thick, ->] (3.5, 0.75) -- (4, 3.25);
			\draw[LimeGreen, very thick, ->] (3, 0.75) -- (3, 3.25);
			\draw[LimeGreen, very thick, ->] (1.5, 0.75) -- (1.5, 3.25);
			\draw[LimeGreen, very thick, ->] (1, 0.75) -- (2, 3.25);
			\draw[Red, very thick, ->] (0.5,0.75) -- (0.5,3.25); 
			\draw[Red, very thick, ->] (0,0.75) -- (1,3.25); 
			
			\node  at (0,0.75) [circle, fill = Black, minimum size=0.2 cm, inner sep=0pt, draw=none] {};
			\node  at (0.5,0.75) [circle, fill = Black, minimum size=0.2 cm, inner sep=0pt, draw=none] {};
			\node  at (1,0.75) [circle, fill = Black, minimum size=0.2 cm, inner sep=0pt, draw=none] {};
			\node  at (1.5,0.75) [circle, fill = Black, minimum size=0.2 cm, inner sep=0pt, draw=none] {};
			\node  at (3,0.75) [circle, fill = Black, minimum size=0.2 cm, inner sep=0pt, draw=none] {};
			\node  at (3.5,0.75) [circle, fill = Black, minimum size=0.2 cm, inner sep=0pt, draw=none] {};
		\end{tikzpicture}	
	\end{minipage}
	\begin{minipage}{0.49\textwidth}
		\centering
		\begin{tikzpicture}[xscale=1.8]
			\draw[Black!40,  ->, thick] (-0.25,0) -- (4.25, 0);
			
			\foreach \x in {0,1,...,4}
			\draw[Black!40, thick] (\x cm,-2pt) node[below]{\x} -- (\x cm,2pt);
			
			\filldraw[LimeGreen] (1.5,3) -- (1.5,3.25) -- (4, 3.25) -- (4,3) ;
			\filldraw[Red] (0,3) -- (0,3.25) -- (1.5, 3.25) -- (1.5,3) ;
			\draw[Black, very thick] (0,3) -- (0,3.25) -- (4, 3.25) -- (4,3) -- (0,3) ;
			
			\draw[LimeGreen, very thick, ->] (1.666, 1) -- (1.666, 3);
			\draw[LimeGreen, very thick, ->] (1.666, 1) -- (2.333, 3);
			\draw[LimeGreen, very thick, ->] (1.333, 1) -- (2.833, 3);
			\draw[LimeGreen, very thick, ->] (1, 1) -- (3.5, 3);
			\draw[Red, very thick, ->] (0.666, 1) -- (0.666, 3);
			\draw[Red, very thick, ->] (0.666, 1) -- (0.8333, 3);
			\draw[Red, very thick, ->] (0.333, 1) -- (1.166, 3);
			\draw[Red, very thick, ->] (0.333, 1) -- (0.333, 3);

			\filldraw[LimeGreen] (0.75,0.5) -- (0.75,1) -- (2, 1) -- (2,0.5) ;
			\filldraw[Red] (0,0.5) -- (0,1) -- (0.75, 1) -- (0.75,0.5) ;
			\draw[Black, very thick] (0,0.5) -- (0,1) -- (2, 1) -- (2,0.5) -- (0,0.5) ;
			
		\end{tikzpicture}	
	\end{minipage}
	\caption{Illustration of the transport plan produced by Algorithm \ref{algo:B-star}, for $k=3$: (left)~for discrete distributions with equally weighted masses; and (right)~for $\mu =$ Uniform$(0,2)$ and $\nu = $ Uniform$(0,4)$. In green are the transport portions such that $x+y\geq 3$; in red are the ones such that $x+y < 3$.}
	\label{fig:E-chen}
\end{figure}

\section{On the order of coupling in the algorithms}
\label{sect:coupling-order}

Algorithms~\ref{algo:A} and~\ref{algo:B} (Section~\ref{sect:EFFECTIVENESS}), as well as \ref{algo:A-eta} and \ref{algo:B-eta} (Section~\ref{sect:NONBINARY}), and~\ref{algo:A-star} and~\ref{algo:B-star} (Appendix~\ref{sect:CHEN}), all proceed by coupling atoms from $\mu$ to $\nu$. %This is an arbitrary choice for their design; for each, there is a mirrored version wherein atoms are coupled from $\mu$ to $\nu$. 
The algorithms iteratively process the atoms in $\mu$ from right to left. 
For each algorithm, there is an alternate version where coupling is from $\nu$ to $\mu$, iteratively processing the atoms in $\nu$ from left to right instead, when $\nu(\R)=\mu(\R)$. The preference order for selecting atoms in $\mu$ is also reversed compared to the ones selecting atoms in $\nu$ in the original algorithms. For Algorithms~\ref{algo:A-eta} and~\ref{algo:B-eta}, the more restrictive atoms in $\eta_1,\ldots, \eta_h$ are now the ones at the smaller locations.
We hereby summarize the heuristics for these alternative algorithms:
\begin{itemize}
    \item[]Algorithm~\ref{algo:A} (resp. \ref{algo:A-star}): Iteratively from left to right for locations of atoms in $\nu$, one couples each to the atom in $\mu$ at the smallest uncoupled location satisfying $y-x\leq k$ (resp. $y+x\leq k$), and otherwise at the completely smallest uncoupled location in $\mu$.  
    \item[]Algorithm~\ref{algo:B} (resp. \ref{algo:B-star}): Iteratively from left to right for locations of atoms in $\nu$, one couples each to the atom in $\mu$ at the largest uncoupled location such that $y-x > k$ (resp. $y+x\geq k$), and otherwise at the largest uncoupled location such that $y\geq x$.  
    \item[]Algorithm \ref{algo:A-eta}: Iteratively from left to right for locations of atoms in $\nu$, one verifies whether the atom in $\mu$ at the smallest uncoupled location such that $y-x\leq k$ may satisfy the partial-treatment constraint. To do so, starting with that atom in $\mu$, we commit atoms from $\eta_1$ to $\eta_h$ at smallest admissible location and check whether that atom in $\eta_h$ is smaller or equal to the atom in $\nu$.  If not, we otherwise couple the atom in $\nu$ to the smallest uncoupled location in $\mu$ and commit atoms at the smallest locations in each of $\eta_1,\ldots,\eta_h$ to it.
    \item[]Algorithm \ref{algo:B-eta}: Iteratively from left to right for locations of atoms in $\nu$, for each uncoupled atom in $\mu$ satisfying $y-x> k$, we verify whether they may satisfy the partial-treatment constraint (using the same criterion as in Algorithm~\ref{algo:A-eta} just above). Select the atom in $\mu$ at the largest location among those able to satisfy the constraints. If there were none, instead for uncoupled atom in $\mu$ satisfying $y\geq x$, we verify whether they may satisfy the partial-treatment constraint, and choose the largest among these.
\end{itemize}
It is straightforward to see that the couplings produced yield optimal probabilities of worthwhile effect, in the sense of Theorems~\ref{th:Q-inf} and~\ref{th:Q-inf-eta}: the alternate algorithms correspond to flipping the atoms' locations' signs and inverting the roles of $\mu$ and $\nu$ before applying the original algorithms. Visually, this transformation corresponds to turning, for example, Figure~\ref{fig:D} upside down and executing the original algorithm on these new distributions. One may easily verify that the executed commands match the heuristics above after reverting the flipping. 

Although also optimal, the couplings produced by the alternate algorithms are not necessarily the same as for the original algorithms. For instance, in Example~\ref{ex:atomic-inf}, the coupling multiset hence produced would be 
\begin{equation*}
    \Gamma = \{(0.5,3.5), (2,4),  (4,5.5), ( 4.5,6 ), (0,7)\},
\end{equation*}
which also yields $Q^{\inf}_3(\mu,\nu) = 1/5$.

\section{Approximation arguments for the sup problem}
\label{sect:approx-sup}

We examine the atomization and discretization approximations for the problem in \eqref{eq:submeasures-sup}. As we stated, applying exactly the same approach as for \eqref{eq:submeasures-inf} would prevent one from obtaining a guarantee of convergence. This is because we cannot constrain the variation of $Q_k^{\sup}$ for very small variations in measures; an analogous statement for $Q_k^{\sup}$ to Theorem~\ref{th:tv} would not hold. In particular, in contrast to $Q_k^{\inf}(\mu,\nu)$, removing mass in $\mu$ may lead to a larger value of $Q_k^{\sup}(\mu,\nu)$, as it may remove a bottleneck in satisfying the objective $y-x>k$.  We give an example to illustrate this. 

\begin{example}
\label{ex:domino}
    Consider $\mu$ and $\nu$ comprising same-size atoms at locations $\mathbf{x} = \mathbf{y} = \{1,2,3,4,5\}$. Suppose $k=0.5$. Obviously, the only coupling multiset satisfying the monotone-response constraint $y\geq x$ is $x_i = y_i$ for every $i\in\{1,\ldots,5\}$. Therefore, $Q_{0.5}^{\sup}(\mu,\nu) = 0$. 
    The mandatory coupling of $x_5$ with $y_5$ is really what prevents any other coupling. 
    Define $\mu^{\prime}$ by removing that atom from $\mu$, leaving only atoms at locations $\mathbf{x}^{\prime} = \{1,2,3,4\}$. Then, the coupling multiset defined by $y_i = x_i +1$ is now valid, as $y_5$ was left vacant from the absence of $x_5$. Now, every atom satisfies the objective $y - x > 0.5$, and thus $Q_{0.5}^{\sup}(\mu^{\prime},\nu) = 4/5 > Q_{0.5}^{\sup}(\mu,\nu)$. The two coupling multisets are illustrated in Figure~\ref{fig:domino}. 
\end{example}

\begin{figure}[tb]
		\centering
		\begin{tikzpicture}
			\draw[Black!40,  ->, thick] (-0.5,0) -- (6.5, 0);
			  \node at (-0.75, 0.75) [rectangle] {$\mu$};
          \node at (-0.75, 3.25) [rectangle] {$\nu$};
			\foreach \x in {0,1,...,6}
			\draw[Black!40,  thick] (\x cm,-2pt) node[below]{\x} -- (\x cm,2pt);
			
			\filldraw (1,3.25) circle (0.1 cm);
			\filldraw (2,3.25) circle (0.1 cm);
			\filldraw (3,3.25) circle (0.1 cm);
			\filldraw (4,3.25) circle (0.1 cm);
			\filldraw (5,3.25) circle (0.1 cm);
			
			\draw[Red, very thick, ->] (4, 0.75) -- (4, 3.25);
			\draw[Red, very thick, ->] (3, 0.75) -- (3, 3.25);
			\draw[Red, very thick, ->] (2, 0.75) -- (2, 3.25);
            \draw[Red, very thick, ->] (1, 0.75) -- (1, 3.25);
            \draw[Red, very thick, ->] (5, 0.75) -- (5, 3.25);
            
			\filldraw (1,0.75) circle (0.1 cm);
			\filldraw (2,0.75) circle (0.1 cm);
			\filldraw (3,0.75) circle (0.1 cm);
			\filldraw (4,0.75) circle (0.1 cm);
			\filldraw (5,0.75) circle (0.1 cm);
		\end{tikzpicture}
        \hfill
        \begin{tikzpicture}
			\draw[Black!40,  ->, thick] (-0.5,0) -- (6.5, 0);
			  \node at (-0.75, 0.75) [rectangle] {$\mu^{\prime}$};
          \node at (-0.75, 3.25) [rectangle] {$\nu$};
			\foreach \x in {0,1,...,6}
			\draw[Black!40,  thick] (\x cm,-2pt) node[below]{\x} -- (\x cm,2pt);
			
			\filldraw (1,3.25) circle (0.1 cm);
			\filldraw (2,3.25) circle (0.1 cm);
			\filldraw (3,3.25) circle (0.1 cm);
			\filldraw (4,3.25) circle (0.1 cm);
			\filldraw (5,3.25) circle (0.1 cm);
			
			\draw[LimeGreen, very thick, ->] (4, 0.75) -- (5, 3.25);
			\draw[LimeGreen, very thick, ->] (3, 0.75) -- (4, 3.25);
			\draw[LimeGreen, very thick, ->] (2, 0.75) -- (3, 3.25);
            \draw[LimeGreen, very thick, ->] (1, 0.75) -- (2, 3.25);
            
			\filldraw (1,0.75) circle (0.1 cm);
			\filldraw (2,0.75) circle (0.1 cm);
			\filldraw (3,0.75) circle (0.1 cm);
			\filldraw (4,0.75) circle (0.1 cm);
		\end{tikzpicture}
	\caption{Illustration of the coupling multisets in Example~\ref{ex:domino}. (Left panel) The only admissible coupling is such that no part satisfies $y-x > 0.5$. (Right panel) With $\mu^{\prime}$ having one less atom to the right, a domino effect takes place and all can satisfy $y-x > 0.5$.}
	\label{fig:domino}
\end{figure}

That such a domino effect can occur for $Q_k^{\sup}(\mu,\nu)$ but not for $Q_k^{\inf}(\mu,\nu)$, may seem surprising, but it actually stems from the way we defined the semi-couplings in \eqref{eq:semicoupling}: we require all mass in $\mu$ to be coupled while some mass in $\nu$ may remain uncoupled, this was to permit cases where $\mu(\mathbb{R}) < \nu(\mathbb{R})$. We chose this semi-coupling direction so that everything agrees nicely for \eqref{eq:submeasures-inf}. However, the forced coupling of all mass in $\mu$ is what causes bothersome portion of mass jeopardizing the satisfaction of the objective $y-x > k$ for remaining masses to be coupled, as in Example~\ref{ex:domino}. We could instead define the semi-couplings of interest in the opposite way, letting
\begin{equation*}
    \mathcal{H}^*(\mu,\nu) = \{ \pi \in \Pi : \mathrm{supp}(\pi)\subseteq \mathbb{H}, P_1(\pi)\leq \mu, P_2(\pi) = \nu\},  
\end{equation*}
that is, we couple all the mass from $\nu$ but some mass in $\mu$ would remain uncoupled. Thus, we have $\mu(\mathbb{R})\geq \nu(\mathbb{R})$ for $\mathcal{H}^*(\mu,\nu)$. The general requirement for $\mathcal{H}^*(\mu,\nu)$ to be non-empty involves a notion of stochastic order symmetric to $\leq_{\rm st}$ defined above: define $\leq_{\rm st}^*$ such that $\mu\leq_{\rm st}^* \nu$ if $F_{\mu} \geq F_{\nu}$, where $F_{\mu}, F_{\nu}$ are the cumulative distribution functions of $\mu$ and $\nu$. Then $\mu\leq_{\rm st}^*\nu$ is necessary and sufficient for $\mathcal{H}^*(\mu,\nu)$ to be non-empty, and it implies $\mu(\mathbb{R})\geq \nu(\mathbb{R})$.     

With these changes, 
similar statements and proofs to Theorem~\ref{th:tv}, Theorems~\ref{th:discrete}--\ref{th:continuous} and Proposition~\ref{prop:atomization}--\ref{prop:convergence} can be developed for $Q_k^{\sup}(\mu,\nu)$ with fairly similar arguments. For that matter,  
atomization and discretization definitions in \eqref{eq:atomization-mu}--\eqref{eq:atomization-nu} and \eqref{eq:discretization-mu}--\eqref{eq:discretization-nu} should be inverted, overestimating mass in $\mu$ and underestimating mass in $\nu$. This would ensure $\widetilde{\mu}_{r} \leq_{\rm st}^* \widetilde{\nu}_r$ and $\overline{\mu}_{\varepsilon} \leq_{\rm st}^* \overline{\nu}_{\varepsilon}$  holds for all $r\in\mathbb{N}$ and $\varepsilon>0$.  
Also, for cases where $\mu(\mathbb{R})> \nu(\mathbb{R})$, the order of coupling discussed in Appendix~\ref{sect:coupling-order}, rather coupling atoms from $\nu$ to $\mu$ left to right, would be required as otherwise one cannot determine which atoms in $\mu$ ought not to be coupled.

% \textcolor{magenta}{\begin{remark}
%     The relation in \eqref{eq:inf-equality} does not hold when replacing ``inf'' by ``sup'' therein, and this is a main reason for an analogous relation to Theorem~\ref{th:tv} for the ``sup'' not holding.
% \end{remark}

Define $\gamma_{k,\mu,\nu}^{\rm B} = \lim_{s\to\infty} \gamma_{k,\overline{\mu}_{s}, \overline{\nu}_{s}}^{\rm B}$ when $\mu$ and $\nu$ are absolutely continuous; it is well-defined, as the pendant argument to Theorem~\ref{th:continuous} does not involve lower semi-continuity. 
Another distinction between \eqref{eq:submeasures-inf} and \eqref{eq:submeasures-sup} arising for the continuous case is the attainability of the solution. 
The first term of $c^*$ in \eqref{eq:nutz-sup} is not lower semicontinuous, and hence $\gamma_{k,\mu,\nu}^{\rm B}$ itself may yield a value inferior to $Q^{\sup}_k(\mu,\nu)$. We illustrate this 
in the following example; see also Example~2.4 of \cite{JR25}.  

\begin{example}
\label{ex:unattained}
    Let $\mu =\mathrm{Uniform}(0,2)$ and $\nu = \mathrm{Uniform}(0,4)$, and thus $\mu\leq_{\rm st}\nu$. We examine $\gamma_{k,\mu,\nu}^{\rm B}$ for $k=3$. We take a very large discretization level $s$ and atomize; we notice that Algorithm~\ref{algo:B} couples, for a given location $1<x\leq 2$, half of the atoms at $x=y$ and the other half at the smallest location with uncoupled atoms, $y=4-x$. This is because none of the atoms at locations $1<x\leq 2$ can possibly satisfy $y-x> 3$. Then, for atoms at locations between 0 and 1, at each location $x$, Algorithm~\ref{algo:B} transports half of them to $y=x+3+2^{-s}$, and the other half at $y=x$. As $s\to\infty$, probability mass belonging to this first half will be transported to exactly $y=x+3$ in $\gamma^{\rm B}_k(\mu,\nu)$.  For any discretization step, this portion of mass consistently meets $y-x>3$, but in the limiting transport plan $\gamma^{\rm B}_k(\mu,\nu)$, by being placed exactly on the boundary $y=x+3$, does not satisfy $y-x>3$; thus, the supremum is not attained. We coloured that portion of mass in purple in Figure~\ref{fig:E-Unif} depicting the transport plan $\gamma_{3,\mu,\nu}^{\rm B}$. We have $Q_{3}^{\sup}(\mu,\nu) = 0.25$ since $Q^{\sup}_{3}(\mu,\nu) = \lim_{s\to \infty}Q^{\sup}_3(\overline{\mu}_{s}, \overline{\nu}_{s})$, although $\gamma_{3,\mu,\nu}^{\rm B}(\{(x,y)\in\mathbb{R}^2: y>x+3\}) = 0$. 
\end{example}

\begin{figure}[tb]
    \centering
 \begin{minipage}{0.75\textwidth}
		\centering
		\begin{tikzpicture}[xscale=1.8]
        \node at (-1.1, 0.75) [rectangle] {$\mu = \mathrm{Uniform}(0,2)$};
          \node at (-1.1, 3.15) [rectangle] {$\nu = \mathrm{Uniform}(0,4)$};
			\draw[Black!40,  ->, thick] (-0.25,0) -- (4.25, 0);
			
			\foreach \x in {0,1,...,4}
			\draw[Black!40, thick] (\x cm,-2pt) node[below]{\x} -- (\x cm,2pt);
			
			\filldraw[Violet!80] (3,3) -- (3,3.25) -- (4, 3.25) -- (4,3) ;
			\filldraw[Red] (0,3) -- (0,3.25) -- (3, 3.25) -- (3,3) ;
			\draw[Black, very thick] (0,3) -- (0,3.25) -- (4, 3.25) -- (4,3) -- (0,3) ;
			
			\draw[Red, very thick, ->] (1.666, 1) -- (1.666, 3);
			\draw[Red, very thick, ->] (1.666, 1) -- (2.333, 3);
			\draw[Red, very thick, ->] (1.333, 1) -- (1.333, 3);
			\draw[Red, very thick, ->] (1.333, 1) -- (2.666, 3);
			\draw[Red, very thick, ->] (0.666, 1) -- (0.666, 3);
			\draw[Violet!80, very thick, ->] (0.666, 1) -- (3.666, 3);
			\draw[Red, very thick, ->] (0.333, 1) -- (0.333, 3);
			\draw[Violet!80, very thick, ->] (0.333,1) -- (3.333,3); 
			
			\filldraw[Violet!80] (0,0.5) -- (0,0.75) -- (1, 0.75) -- (1,0.5) ;
			\filldraw[Red] (1,0.5) -- (1,1) -- (2, 1) -- (2,0.5) ;
			\filldraw[Red] (0,0.75) -- (0,1) -- (1, 1) -- (1,0.75) ;
			\draw[Black, very thick] (0,0.5) -- (0,1) -- (2, 1) -- (2,0.5) -- (0,0.5) ;
			
		\end{tikzpicture}	
	\end{minipage}
    \caption{Transport plan $\gamma_{3,\mu,\nu}^{\rm B}$ from Example~\ref{ex:unattained}. }
    \label{fig:E-Unif}
\end{figure}

\section{Duality}
\label{sect:duality}
In the main text, we do not appeal to duality because we solve the primal problems directly; we briefly discuss duality for measures with equal total mass in this appendix. 
For general measures $\mu,\nu\in\mathcal{M}$, assuming $\mu\leq_{\rm st}\nu$ and $\mu(\mathbb{R}) = \nu(\mathbb{R})$, problems \eqref{eq:nutz-inf}--\eqref{eq:nutz-sup} are rewritten as
\begin{align}
Q^{\inf}_k(\mu,\nu) &= \inf\left\{\int c \, \mathrm{d}\pi : \pi \text{ has marginals }\mu \text{ and }\nu \right\}, \label{eq:OT-inf}
\\   Q^{\sup}_k(\mu,\nu) &=\mu(\R)- \inf\left\{\int c^* \, \mathrm{d}\pi : \pi \text{ has marginals }\mu \text{ and }\nu \right\}, \label{eq:OT-sup} 
\end{align}
where $c$ and $c^*$ are the same as in \eqref{eq:nutz-inf}--\eqref{eq:nutz-sup}.
These cost functions take values in the extended reals and both belong to the closure of lower semicontinuous functions on $\mathbb{R}^2$.  By \citet[Theorem 2.3.1(c), cf.~Definition 2.2.2]{RR06}, there is duality. This means that problems \eqref{eq:OT-inf}--\eqref{eq:OT-sup} are equivalent to 
\begin{align}
Q^{\inf}_{k}(\mu,\nu) &= \sup_{\phi, \psi}\left\{\int \phi \,\mathrm{d}\mu + \int \psi\,\mathrm{d}\nu : \phi(x) + \psi(y) \leq c(x,y),\, x,y\in\mathbb{R}\right\};\label{eq:dual-inf}\\
Q^{\sup}_k(\mu,\nu) &= \mu(\R) - \sup_{\phi, \psi}\left\{\int \phi \,\mathrm{d}\mu + \int \psi\,\mathrm{d}\nu  : \phi(x) + \psi(y) \leq c^*(x,y),\, x,y\in\mathbb{R}\right\},\label{eq:dual-sup}
\end{align}
where the suprema are taken over all integrable functions; see also \citet[Section~3]{K14}.\footnote{\cite{K14} also discusses some non-sharp bounds for \eqref{eq:OT-inf}--\eqref{eq:OT-sup}.} While many optimal transport problems require to leverage duality to be solvable, the codomain of the cost functions being $\{0,1,\infty\}$ makes optimization in the primal problems \eqref{eq:OT-inf}--\eqref{eq:OT-sup} actually simpler than in the dual problems above. In fact, it is this limited codomain that underlies the ``no-nuisance'' principle discussed in Section~\ref{sect:EFFECTIVENESS}, on which Algorithms~\ref{algo:A} and~\ref{algo:B} are predicated.

Suppose that $\mu$ and $\nu$ are probability measures both supported on $n$ values, $n\in\mathbb{N}$, say respectively $x_1,\ldots,x_n$ and $y_1,\ldots,y_n$, and have probability mass functions $p_{\mu}$ and $p_{\nu}$. Then, \eqref{eq:dual-inf} and \eqref{eq:dual-sup} become the problems
\begin{align*}
    \max_{\phi,\psi}\left\{\sum_{j=1}^n p_{\mu}(x_j) \phi(x_j) + \sum_{j=1}^n p_{\nu}(y_j) \psi(y_j) : \phi(x) + \psi(y) \leq \id_{\{y-x>k\}} + b\id_{\{y<x\}},\, x,y\in\mathbb{R}\right\};\\
    1 - \max_{\phi,\psi}\left\{\sum_{j=1}^n p_{\mu}(x_j) \phi(x_j) + \sum_{j=1}^n p_{\nu}(y_j)\psi(y_j)  : \phi(x) + \psi(y) \leq \id_{\{y-x\leq k\}} + b\id_{\{y<x\}},\, x,y\in\mathbb{R}\right\},
\end{align*}
which are linear programs with $2n$ variables and $n^2$ constraints, with $b$ being a finite number large enough to enforce the monotone-response constraint. 
This is the method presented in \cite{JLS23} to solve problems \eqref{eq:dual-inf} and \eqref{eq:dual-sup} for discrete probability measures of bounded support. (They also solve for continuous probability measures with bounded support by a discretization approximation.)
The approach in this present paper, solving the primal problem directly, is more computationally efficient, especially if $\mu$ and $\nu$ are atomic with the same size. This ability to solve the primal problem directly is mainly grounded in that the cost functions $c$ and $c^*$ in \eqref{eq:OT-inf}--\eqref{eq:OT-sup} take only three values, 0, 1, $\infty$, and the latter corresponds to enforcing an additional constraint. \cite{JR25} discuss an interpretation of Kantorovich problems with cost functions taking values $\{0,1\}$ in terms of Strassen's theorem, and further discuss duality for such problems in more general topological spaces.

\end{document}